\definecolor{mygray}{gray}{.9}
\newcommand{\gf}[1]{{\color{red} GF: #1}}
\newcommand{\mcal}[1]{\ensuremath{\mathcal {#1}}}
\newcommand{\algM}{{\ensuremath{\mcal{M}}}\xspace}
\newcommand{\algR}{{\color{darkred}\ensuremath{\mcal{R}}}\xspace}
\newcommand{\algA}{{\color{black}\ensuremath{\mathbf{A}}}\xspace}
\newcommand{\DSC}{\mathtt{DSC}}
\definecolor{goldenpoppy}{rgb}{0.99, 0.76, 0.0}
\definecolor{darkgreen}{rgb}{0,0.5,0}
\definecolor{lightblue}{RGB}{0,176,240}
\definecolor{darkblue}{RGB}{0,112,192}
\definecolor{lightpurple}{RGB}{124, 66, 168}
\definecolor{grey}{RGB}{139, 137, 137}
\definecolor{maroon}{RGB}{178, 34, 34}
\definecolor{green}{RGB}{34, 139, 34}
\definecolor{types}{RGB}{72, 61, 139}
\definecolor{gold}{rgb}{0.8, 0.33, 0.0}
\definecolor{darkgray}{gray}{0.3}
\crefname{section}{\S}{\S}
\Crefname{section}{\S}{\S}
\crefname{appendix}{App.}{Apps.}
\Crefname{appendix}{App.}{Apps.}
\crefname{theorem}{Thm.}{Thms.}
\Crefname{theorem}{Thm.}{Thms.}
\crefname{proposition}{Prop.}{Props.}
\Crefname{proposition}{Prop.}{Props.}
\crefname{definition}{Def.}{Defs.}
\Crefname{definition}{Def.}{Defs.}
\definecolor{darkred}{rgb}{0.5, 0, 0}
\definecolor{darkgreen}{rgb}{0, 0.5, 0}
\definecolor{darkblue}{rgb}{0,0,0.5}
\newcommand\markx[2]{}
\newcommand{\negl}{{\sf negl}}
\newcommand{\I}{\ensuremath{{\bf I}}\xspace}
\newcommand{\get}{\leftarrow}
\newcommand{\ignore}[1]{}
\renewcommand{\paragraph}[1]{\vspace{5pt}\noindent\textbf{#1}}
\newcounter{task}
\newtheorem{theorem}{Theorem}%[section]
\newtheorem{corollary}[theorem]{Corollary}
\newtheorem{lemma}[theorem]{Lemma}
\newtheorem{definition}[theorem]{Definition}
{
\theoremstyle{definition}
\newtheorem{remark}[theorem]{Remark}
}
\newtheorem{claim}[theorem]{Claim}
\newcounter{cnt:challenge}
\newcommand{\commentout}[1]{}
\newcommand{\view}{\textsf{view}}
\newcommand{\E}{{\ensuremath{\mathbf{E}}}\xspace}
\newcommand{\eps}{\epsilon}
\newcommand{\mz}[1]{\textcolor{red}{Mingxun: #1}}
\renewcommand{\v}{\mathbf{v}}
\newcommand{\rr}{\mathtt{RR}}
\newcommand{\bz}{\mathbf{z}}
\newcommand{\cP}{\mathcal{P}}
\newcommand{\cA}{\mathcal{A}}
\newcommand{\cB}{\mathcal{B}}
\newcommand{\cE}{\mathcal{E}}
\newcommand{\cF}{\mathcal{F}}
\newcommand{\cS}{\mathcal{S}}
\newcommand{\cZ}{\mathcal{Z}}
\newcommand{\cG}{\mathcal{G}}
\newcommand{\mmer}{\text{MinMaxErr}}
\newcommand{\emmer}{\text{EmpMinMaxErr}}
\newcommand{\err}{\text{Err}}
\newcommand{\linf}[1]{\Vert #1 \Vert_\infty}
\newcommand{\Weta}{\mathbf{W}^{(\eta)}}
\patchcmd{\maketitle}{\@fnsymbol}{\@alph}{}{}
\title{Locally Differentially Private Sparse Vector Aggregation}
\author{
    Mingxun Zhou\thanks{Carnegie Mellon University, \texttt{mingxunz@andrew.cmu.edu}}
    \and
    Tianhao Wang\thanks{Carnegie Mellon University, \texttt{tianhao@cmu.edu}}
    \and
    T-H. Hubert Chan\thanks{The University of Hong Kong, \texttt{hubert@cs.hku.hk}}
    \and
    Giulia Fanti\thanks{Carnegie Mellon University, \texttt{gfanti@andrew.cmu.edu}}
    \and
    Elaine Shi\thanks{Carnegie Mellon University, \texttt{runting@cs.cmu.edu}}
}
\date{}
\newcommand{\Q}{\mathcal{Q}}
\newcommand{\Lap}{{\sf Lap}}
\newcommand{\subE}{{\sf subE}}
\newcommand{\tB}{\tilde{B}}
\renewcommand{\I}{\mathbb{I}}
\newcommand{\svme}{\ensuremath{{\sf SVME}}\xspace}
\newcommand{\says}[2]{\noindent\textcolor{orange}{\textbf{#1 says: }}\textcolor{purple}{#2}\xspace}
\newcommand{\tw}[1]{\says{tianhao}{#1}}
\begin{document}
% \begin{titlepage} 
\maketitle

\begin{abstract}
	Vector mean estimation is a central primitive in federated analytics.
In vector mean estimation, 
each user $i \in [n]$ holds a real-valued vector	
$v_i\in [-1, 1]^d$, and a server wants to estimate the mean of all $n$ vectors.
Not only so, we would like to protect each individual user's privacy.
In this paper, we consider the $k$-sparse version
of the vector mean estimation problem, that is, suppose
that each user's vector has at most $k$ non-zero coordinates in its
$d$-dimensional vector, and moreover, $k \ll d$.
In practice, since the universe size $d$ can be very large (e.g., the space
of all possible URLs), we would like the per-user communication to be
{\it succinct}, i.e., independent of or 
(poly-)logarithmic in the universe size. 

In this paper, we are the first to show matching upper- and lower-bounds
for the $k$-sparse vector mean estimation problem under local differential privacy.
Specifically, we construct new mechanisms that achieve 
asymptotically optimal error as well as succinct communication, 
either under user-level-LDP or event-level-LDP.
We implement our algorithms and evaluate them on synthetic
as well as real-world datasets. Our experiments 
show
that we can often achieve one or two orders of magnitude reduction 
in error in comparison with prior works under typical choices
of parameters, while incurring insignificant communication cost.

\end{abstract}

% \thispagestyle{empty}
% \end{titlepage}

\section{Introduction}
\label{sec:intro}

Federated analytics and learning allow a cloud provider to learn
useful statistics and train machine learning models
using data aggregated from a large number of users 
(e.g., browsing history, shopping records, movie ratings).
Since many of these data types are privacy-sensitive,
a line of recent work has focused on enabling privacy-preserving
federated analytics~\cite{rappor, fanti2015building, bassily2015local, wang2017locally,bonawitz2017practical,kairouz2021distributed}. 
A central primitive in privacy-preserving federated analytics 
is called {\it vector mean estimation}. 
Suppose $n$ users each have a real-valued vector
$v_i \in [-1,+1]^d$, and the collection
of all users' vectors is called
the input configuration, henceforth denoted $\v = (v_1, \ldots, v_n) \in [-1,1]^{d \cdot n}$.
The server wants to estimate the mean of the users' vectors,
without compromising each individual user's privacy.
{\it Frequency estimation}~\cite{bassily2015local} can be viewed as a special case of vector mean estimation
where each user has a binary vector $v_i \in \{0, 1\}^d$
indicating whether the user owns each of the $d$ items in some universe,
the server wants to estimate the frequency of each item. 
Besides frequency estimation, vector mean estimation 
is a key building block in numerous applications, such as
frequent item mining~\cite{sun2014personalized}, 
key-value data aggregation~\cite{pckv}, linear regression~\cite{harmony},
federated learning model update~\cite{mcmahan2017communication}, 
(stochastic) gradient descent~\cite{shokri2015privacy,abadi2016deep,dperm}, and so on.
Many of these applications are being considered
and deployed by companies such as 
Google~\cite{rappor}, Apple~\cite{appledp}, and Microsoft~\cite{microsoftdp}.

%The ground truth is  $\frac{1}{n}\sum_{i\in[n]}v_i$, 
%but due to the LDP noise, it is impossible to get the true result.  Instead, the server want an estimate as close to the ground truth as possible.  We use $L_\infty$ error to quantify the accuracy of the estiamte, which is defined as $\max_{j\in[d]} |\frac{1}{n}\sum_{i\in[n]}v_{i,j}-\hat{v}_j|$.

In this context, a standard privacy notion 
%In this paper, we focus on mechanisms 
is {\it local differential privacy (LDP)}~\cite{ldpdef}.
Informally,
LDP (Def. \ref{def:dp}) requires that if a single user changes its input,  
the distribution of server's view 
in the protocol changes very little. 
In other words, the transcript observed by the server  
cannot allow the server to accurately infer
any single individual's input.
Two commonly-studied notions of LDP include {\it user-level} LDP (Def. \ref{def:user-ldp})
and {\it event-level} LDP (Def. \ref{def:item-ldp}).
In event-level LDP, 
we want that the distribution of the server's
view be close under two neighboring input configurations 
$\v \in [-1,1]^{d \cdot n}$ and $\v' \in [-1, 1]^{d \cdot n}$ that differ 
in exactly one coordinate (which may correspond to a single
event for a single user).
In user-level LDP, 
we want that the distribution of the server's view 
be close under two neighboring 
input configurations $\v$ and $\v'$ that differ
in the contribution of a single user, possibly involving
all $d$ coordinates of that particular user.
Unless otherwise noted, throughout this paper, we consider
an information-theoretic notion of privacy, i.e., privacy should hold
without relying on any computational assumptions.

\paragraph{Sparsity in vector mean estimation.}
In numerous practical applications, 
each user's vector $v_i \in [-1, 1]^d$ is sparse. 
We say that a vector $v_i \in [-1, 1]^d$ 
is $k$-sparse iff at most $k$ coordinates are non-zero.
We are most interested in the case when $k \ll d$.
For example, imagine that the universe 
consists of the URLs of  
all websites in the world, and each user's vector 
denotes whether a user has visited each URL. 
Another example is from natural language processing:
imagine that the universe is all possible bags-of-words of size three,
where as a user's input contains 
all occurrences that appeared in their emails. 
In such examples, $k$ is much smaller than the universe size $d$.
Sparsity has also been leveraged as an algorithmic technique
in the (non-private) federated learning literature. For example, 
Konecny et al.~\cite{konevcny2016federated}
showed that sparsifying the gradient vectors  
could result in algorithms that significantly reduce communication
while maintaining accuracy. 

In some cases, the universe may even be too large to efficiently enumerate,
e.g., the space of all possible URLs. 
In such cases, instead of writing down an estimate of the entire
mean vector $\bar{v} := \frac{1}{n} \sum_{i =1}^n v_i$, 
we want the server to instead be able
query an estimate of $\bar{v}_x$ for any $x$ of its choice (e.g., 
the fraction of users that has visited a specific URL of interest).

Due to the prevalence of sparse vectors in real-world applications, we 
ask the following important question:
\begin{itemize}%[leftmargin=5mm]
\item[]
{\it Can we achieve locally differentially private $k$-sparse vector mean
estimation with efficient communication and small error?}
\end{itemize}
In a recent workshop on Federated Learning and Analytics
hosted by Google~\cite{flworkshop}, this was raised as an important open question 
of interest to Google. 

The most na\"ive approach
is to apply the standard randomized response mechanism~\cite{randomizedresponse, dwork2006calibrating,GaussianMech}
to each coordinate --- henceforth we refer to this 
approach as ``naive perturbation''. 
%Roughly speaking, in the case of event-level LDP, we only need to add one copy of 
%noise to each coordinate. Thus, with high probability,
For the case of event-level LDP, naive perturbation achieves 
$\widetilde{O}(\frac{1}{\epsilon \cdot \sqrt{n}})$
$L_\infty$-error with high probability 
where $\epsilon$ is the privacy budget and $\widetilde{O}(\cdot)$ hides
(poly-)logarithmic factors.
While this simple mechanism actually achieves 
asymptotically optimal error 
in light of well-known lower bounds~\cite{bassily2015local}, 
%coordinate-wise perturbation 
it has a high $O(d)$
communication cost.
%as $d$ grows, 
%the bandwidth becomes excessive large as $d$ grows, 
%e.g., 
%when the universe contains all possible URLs, even when 
%each user might have visited very few of them.

%Several works have focused on designing 
%differentially private aggregation mechanisms for sparse vectors, aiming to achieve
%communication efficiency.
%Several earlier works~\cite{} have considered   
%LDP aggregation mechanisms for the special
%case of $1$-sparse vectors, 
%aiming to achieve succinct communication.
One interesting question is whether we can achieve succinct communication 
that is independent or only logarithmically dependent 
on the universe size $d$ for sparse vectors. 
Several prior works~\cite{rappor,cormode2018marginal,bassily2015local,2018Hadamard,Bassilynips2017,bns19,as19,wang2017locally,wang-miopt-16,cko20} %\elaine{cite rappor, stoc paper, and other subsequent works 
have explored this question for {\it $1$-sparse vectors},
i.e., assuming that each user holds exactly one item out of a large 
universe of $d$ items. %Furthermore, these works 
The latest results~\cite{bassily2015local,Bassilynips2017,bns19,cormode2018marginal,wang2017locally} in this line of work 
showed how to achieve asymptotically optimal error 
while paying only logarithmic bandwidth
--- note that in the $1$-sparse case, event-level
LDP is the same as user-level LDP up to constant factors.
%\elaine{note: i dropped the frequency estimation special case and  
%the public randomness assumption. we should discuss it more if we want to sell
%these points more, and sell them more properly. right now, they don't come out
%anyway so i simply dropped them} 

In comparison, the more general case of $k$-sparsity is less understood,
and currently we do not have matching upper and lower bounds
for schemes with succinct (e.g., logarithmic) communication.
Although some prior schemes~\cite{harmony,pckv} achieve  
communication that is succinct  in the 
universe size $d$ under even user-level differential privacy, 
they suffer from $\Omega(\sqrt{d})$ error, thus making them 
unsuitable for our 
motivating scenarios where $d$ can be very large. 
Other works~\cite{qin2016heavy,wang2018locally}
combine sampling and  
a $1$-sparse mechanism: this approach 
achieves better asymptotic error  than \cite{harmony,pckv} while still maintaining 
succinct communication, 
but their asymptotic error is a $\sqrt{k}$ or $k$ factor away from optimal,
depending on whether we care about user- or 
event-level LDP.

\subsection{Our Contributions and Results}
%We are among the first ones to explore the  
%locally private vector mean estimation problem {\it under $k$-sparsity}.
We give the first locally private constructions for 
vector mean estimation 
that achieve {\it succinct communication} and {\it optimal 
error} (up to polylogarithmic factors). Our contributions  
include new upper- and lower-bounds, as well as an empirical evaluation of our algorithms.

\paragraph{Upper bounds: communication-efficient LDP mechanisms.}
We devise new schemes that satisfy $\epsilon$-LDP  
(either user-level or event-level)
with the following desirable properties:

\begin{itemize}
\item  {\it Communication efficiency:}
%Our mechanisms achieve logarithmic communication cost per client.
Our mechanisms have communication 
cost that is independent of the universe size $d$, and depends only on $k$, i.e.,
the maximum number of non-zero coordinates per user.

\item  {\it Optimal error.}
Our schemes satisfy (nearly) optimal error for any $\epsilon$-LDP mechanism. 
%--- to show this we need to prove a new lower bound
%which will be mentioned shortly. 
\end{itemize}

\ignore{
Specifically, we prove the following theorem.
We shall first state a more generalized version of the theorem  
that works for $L$-neighboring vectors 
for any choice for $L$. User-level LDP and event-level LDP
are two special cases of this general theorem, corresponding
to the cases $L= 2k$ and $L= 2$, respectively. 
}

Specifically, we prove the following theorems. Although not explicitly
stated below, all of our upper bounds 
below assume the existence of a pseudorandom function (PRF) with parameter $\lambda$;
however, the PRF is needed only for measure concentration and not needed
for privacy.

\begin{theorem}[User-level LDP]
There exists an $\epsilon$-user-level-LDP 
mechanism for the $k$-sparse mean vector estimation problem
that achieves 
$O(\log k + \lambda)$ per-client communication, and 
with probability at least $1-\beta- \negl(\lambda)$, it achieves 
$O\left(\frac{1}{\eps}\sqrt{\frac{k\log (n/\beta)\log (d/\beta)}{n}}\right)$ 
$L_\infty$-error. 
%\elaine{can we not use $\xi$? pls use $\lambda$ or $\lambda$ for the sec param}
Moreover, the mechanism is non-interactive, i.e., it 
involves only a single message from each client to the server.
\label{thm:intro-user}
\end{theorem}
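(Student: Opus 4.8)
The plan is to reduce each user's entire $k$-sparse vector to a single scalar via a Rademacher random projection, clip that scalar to a threshold that is $\Theta(\sqrt{k})$ rather than its worst-case bound of $k$, and then privatize the clipped scalar with the Laplace mechanism. Concretely, each client $i$ samples a PRF key $\kappa_i$, which (unconditionally) defines a sign function $s_i : [d] \to \{-1,+1\}$; the client computes the projection $z_i = \sum_{x} s_i(x)\, v_{i,x}$ (the sum ranges over the at most $k$ nonzero coordinates), clips it to $\hat z_i = \mathrm{clip}(z_i, T)$ for a threshold $T = \Theta(\sqrt{k \log(n/\beta)})$, adds Laplace noise $\tilde z_i = \hat z_i + \Lap(2T/\eps)$, and sends the single message $(\kappa_i, \tilde z_i)$ (with $\tilde z_i$ truncated and quantized). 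To answer a query for coordinate $x$, the server outputs $\hat{\bar v}_x = \frac{1}{n}\sum_{i} s_i(x)\, \tilde z_i$, recomputing each $s_i(x)$ from $\kappa_i$.

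Privacy and communication are the easy parts. Since $\kappa_i$ is sampled independently of the data, it may be released in the clear; conditioned on $\kappa_i$, the only data-dependent quantity sent is $\tilde z_i$, and the map $\v \mapsto \hat z_i$ has absolute sensitivity at most $2T$ over any pair of neighboring configurations (the clipped projection lies in $[-T,T]$ regardless of how many of the user's coordinates change), so $\Lap(2T/\eps)$ yields $\eps$-DP at the user level. Crucially this argument never uses pseudorandomness of $s_i$: it holds for every fixed sign function, which is exactly why the PRF is needed only later, for concentration. Communication is $O(\lambda)$ bits for $\kappa_i$ plus $O(\log k)$ bits to encode the truncated, quantized scalar $\tilde z_i$, whose magnitude is $O(T \cdot \poly\log)$ with high probability; the quantization error is absorbed into the stated bound. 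The mechanism is clearly non-interactive.

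For accuracy I would first establish unbiasedness up to a negligible clipping bias. Absent clipping, $\mathbb E_{s_i}[s_i(x)\, z_i] = v_{i,x}$ because $s_i(x)^2 = 1$ isolates the signal while every cross term $s_i(x) s_i(x') v_{i,x'}$ with $x' \neq x$ has mean zero; hence $\mathbb E[\hat{\bar v}_x] = \bar v_x$ and, importantly, the signal is recovered with no $1/k$ dilution (this is what beats the subsample-one-coordinate baseline). The clipping bias is controlled by a sub-Gaussian tail bound: $z_i$ is a sum of at most $k$ independent mean-zero terms bounded by $1$, hence sub-Gaussian with variance proxy $\le k$ and $\Pr[|z_i| > T] \le 2 e^{-T^2/(2k)} \le \beta/n$ for our choice of $T$, making the total bias negligible relative to the target error. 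I would then decompose the error of $\hat{\bar v}_x$ into the collision noise $\frac1n \sum_i \sum_{x'\neq x} s_i(x)s_i(x') v_{i,x'}$ and the Laplace noise $\frac1n\sum_i s_i(x)\Lap(2T/\eps)$. The former has per-coordinate variance $\le k/n$ (since $\sum_{x'} v_{i,x'}^2 \le k$ and the summands are independent across users), and the latter has variance $O(T^2/(n\eps^2)) = O(k\log(n/\beta)/(n\eps^2))$; a Bernstein/sub-exponential tail bound plus a union bound over the $d$ coordinates converts these into an $L_\infty$ bound of $O\!\big(\tfrac1\eps \sqrt{k\log(n/\beta)\log(d/\beta)/n}\big)$, matching the claim. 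Finally, because the $s_i$ are generated by a PRF rather than being truly uniform, I would insert a standard hybrid argument: the concentration statements are efficiently checkable events, so replacing the PRF outputs by truly random signs changes their probability by at most $\negl(\lambda)$, yielding the $-\negl(\lambda)$ term in the success probability.

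The main obstacle is the simultaneous control of sensitivity and bias through the clipping threshold $T$, together with the uniform concentration over all $d$ coordinates. Choosing $T$ too small reduces the Laplace noise but incurs clipping bias (and breaks unbiasedness); choosing $T$ too large reinstates the $\Omega(k)$-type error of the naive bound. The sweet spot $T = \Theta(\sqrt{k\log(n/\beta)})$ is exactly where the sub-Gaussian tail of the $k$-term projection makes clipping almost never bind while the sensitivity stays $O(\sqrt k)$; verifying that this single choice simultaneously yields negligible bias, the right variance, and an $L_\infty$ (not merely per-coordinate) guarantee — with only pseudorandom signs — is the crux of the proof.
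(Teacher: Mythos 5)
Your proposal is correct and is essentially identical to the paper's own construction: the paper's Algorithm~\ref{alg:sparse-vector-agg} instantiated for user-level LDP uses $b=1$ bins, so the hash function degenerates and the mechanism becomes exactly your signed random projection $\sum_x s_i(x) v_{i,x}$, clipped at $\eta = \sqrt{2k\log(4n/\beta)}$, perturbed with $\Lap(2\eta/\eps)$, and decoded by $\frac{1}{n}\sum_i s_i(x)\tilde z_i$. Your privacy argument (sensitivity $2T$ after clipping, holding for every fixed sign function so the PRF is needed only for concentration), your three-part error decomposition (collision/random-walk noise, clipping that almost never binds by the sub-Gaussian tail with variance proxy $k$, and Laplace noise), and your PRF hybrid step all match the paper's proof in Sections~\ref{sec:privacy-analysis}, \ref{sec:utility-analysis}, and \ref{sec:improving}.
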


\begin{theorem}[Event-level LDP]
There exists a non-interactive $\epsilon$-event-level-LDP 
mechanism for the $k$-sparse mean vector estimation problem
that achieves %\elaine{FILL} error 
$O(k\log k + \lambda)$ per-client communication, and 
with probability at least $1-\beta - \negl(\lambda)$, 
it achieves 
$O\left(\frac{1}{\eps}\sqrt{\frac{\log (d/\beta)}{n}}\right)$ $L_\infty$-error.
%Moreover, the mechanism is non-interactive.
\label{thm:intro-event}
\end{theorem}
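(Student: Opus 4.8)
The plan is to build the event-level mechanism by combining an unbiased per-coordinate $\eps$-LDP value randomizer with a Count-Sketch-style compression, so that each client sends a single short message. First I would fix, for a single scalar $v \in [-1,1]$, the standard ``one-bit'' randomizer: map $v$ to a Bernoulli bit of mean $\frac{1+v}{2}$, apply $\eps$-randomized response to that bit, and let the server debias to an estimate $\tilde v$ with $\mathbb{E}[\tilde v] = v$, $|\tilde v| = O(1/\eps)$ and $\mathrm{Var}[\tilde v] = O(1/\eps^2)$ for $\eps \le 1$. Applied coordinate-by-coordinate this is already event-level $\eps$-LDP by parallel composition (two configurations that differ in one coordinate differ only through that coordinate's sub-report), but it costs $O(d)$ communication. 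To compress, I would draw from a PRF a per-client pair consisting of a hash $h_i : [d] \to [m]$ and a sign map $s_i : [d] \to \{\pm 1\}$ with $m = \Theta(k\,\mathrm{polylog}(k))$, and have client $i$ report, over a fixed budget of exactly $k$ slots (padding the true support with value-$0$ dummies), the tuple $(h_i(x),\, s_i(x)\cdot\tilde v_{i,x})$ for each chosen coordinate $x$. Each bucket index costs $O(\log m) = O(\log k)$ bits and each randomized value costs $O(1)$ bits, so the whole message is $O(k\log k)$ bits, while the PRF key contributes the additive $\lambda$. The single message makes the scheme non-interactive.

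For privacy I would argue event-level $\eps$-LDP information-theoretically, with the PRF invoked only for accuracy. The crucial point is that two neighboring configurations differ in exactly one coordinate $x^*$ of one client, hence in exactly one slot; since that slot's content is an $\eps$-LDP function of $v_{i,x^*}$ alone, parallel composition yields $\eps$-LDP overall. The one subtlety is a presence/absence change ($0 \to c$ or $c \to 0$): here I must ensure that the report length and the distribution of dummy slots are independent of the true support size, which the fixed-$k$-slot padding guarantees, and that the randomizer's output on a dummy ($0$-valued) coordinate is distributed identically to its output on a genuine coordinate of value $0$. With these two invariants, the presence change again touches only one slot and is covered by the same per-coordinate $\eps$ bound.

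For accuracy I would define the server's estimate of the sum $S_x = \sum_i v_{i,x}$ as $\hat S_x = \sum_i s_i(x)\cdot(\text{content of bucket } h_i(x))$. The random signs are what make this work: taking expectation over $s_i$, every cross term from a coordinate $x' \neq x$ colliding with $x$ has $\mathbb{E}[s_i(x)s_i(x')] = 0$ and vanishes, while the diagonal term reproduces $v_{i,x}$, so $\hat S_x$ is unbiased. The variance splits into the randomizer variance $O(n/\eps^2)$ and a collision variance of order $\frac{1}{m}\sum_i \sum_{x'} v_{i,x'}^2 = O(nk/m)$; choosing $m = \Theta(k\,\mathrm{polylog}(k))$ drives the latter below the former and keeps the effective per-coordinate variance at $O(1/\eps^2)$. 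I would then apply a Bernstein bound (the summands are bounded by $O(1/\eps)$ and independent across clients) to get $|\hat S_x - S_x| = O\!\big(\tfrac{1}{\eps}\sqrt{n\log(d/\beta)}\big)$ with failure probability $\beta/d$ per coordinate, union bound over the $d$ coordinates, and divide by $n$ to obtain the claimed $O\!\big(\tfrac{1}{\eps}\sqrt{\log(d/\beta)/n}\big)$ $L_\infty$-error. Replacing the ideal hash and signs by PRF outputs perturbs these tail probabilities by at most $\negl(\lambda)$, which is the source of the $\negl(\lambda)$ slack.

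The step I expect to be the main obstacle is reconciling the three competing requirements at the compression layer: the bucket domain $m$ must be as small as $\mathrm{poly}(k)$ to keep communication at $O(\log k)$ per item, yet large enough that collision variance does not inflate the per-coordinate variance beyond $O(1/\eps^2)$, and simultaneously the report must be structured (fixed length, identically-distributed dummy slots) so that a presence/absence flip at a single coordinate is provably indistinguishable. The sign-cancellation trick removes collision \emph{bias} for free, but bounding the collision \emph{variance} tightly, and verifying that the fixed-$k$-slot padding truly leaks nothing about the support beyond a single-coordinate change, is where the real care is needed.
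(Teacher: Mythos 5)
Your utility analysis and communication accounting are in the right spirit, but the privacy argument has a genuine gap, and it is exactly at the point you flag as the ``one subtlety'': presence/absence changes. In your report format, each genuine coordinate $x$ contributes a slot whose bucket index is $h_i(x)$ \emph{deterministically}, and the server knows $h_i$ (it needs the seed to decode). So the occupancy pattern --- which bucket indices appear among the $k$ slots --- is an un-noised function of the client's support. Now take neighboring configurations where $v_{i,x^*}=0$ versus $v'_{i,x^*}=a\neq 0$. Under $\v'$ some slot carries index $h_i(x^*)$ with probability $1$; under $\v$ that slot is a dummy whose index distribution cannot depend on $x^*$ (the client has no way to know which coordinate the adversary will flip). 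For $\eps$-closeness you would need $\Pr[\text{dummy index}=h_i(x^*)]\ge e^{-\eps}$ simultaneously for every $x^*\in[d]$, and since the values $h_i(x^*)$ range over essentially all of $[m]$, summing over buckets forces $m\, e^{-\eps}\le 1$, i.e., $m\le e^{\eps}$ --- impossible for $m=\Theta(k\,\mathrm{polylog}\,k)$. The same argument rules out implementing dummies as reserved ``virtual'' coordinates: the server can then distinguish the index $h_i(x^*)$ from a dummy's index whenever they differ. So no choice of fixed-length padding and dummy distribution makes your scheme pure $\eps$-event-level-LDP; per-slot randomized response protects the \emph{values} but leaves the \emph{sketch structure} in the clear, and that structure is itself sensitive.

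The paper's mechanism avoids this by inverting the order of sketching and noising. Each client reports \emph{all} $b=\Theta(\eps^2 k)$ bins, always, where bin $j$ contains the signed aggregate $B_{i,j}=\sum_{l:h_i(l)=j}s_i(l)v_{i,l}$ plus ${\sf Lap}(2/\eps)$ noise; empty bins are reported too (as pure noise). Privacy then follows from a sensitivity argument rather than parallel composition: changing one coordinate of one client changes exactly one bin's aggregate by at most $2$ in absolute value, and the Laplace noise on that bin absorbs it --- this covers value changes and presence/absence changes identically, because an empty bin and a bin containing one item are just two aggregates at $L_1$ distance at most $1$. The utility side of your argument (sign cancellation kills collision bias, Bernstein plus a union bound over $d$ coordinates, PRF replacing ideal hashes at cost $\negl(\lambda)$) then goes through essentially as you wrote it, with the noise term being $n$ independent Laplace variables rather than $n$ randomized-response variables. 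If you want to salvage your per-coordinate-randomization route, you would have to make every bucket's content distribution independent of its occupancy count, which either blows up the variance by a $k$ factor (a fixed number of RR draws per bucket) or collapses to the paper's aggregate-then-noise design.
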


Bassily et al.~\cite{bassily2015local} showed that any 
event-level LDP mechanism for mean estimation (even when $k = 1$)
has to suffer from at least $\Omega(\frac{1}{\eps}\sqrt{\frac{\log d}{n}})$ error.
In light of their lower bound, our event-level 
LDP mechanism achieves optimal error.
In fact, 
it turns out that our user-level LDP mechanism 
also achieves (nearly) optimal error but to show this we will
need to prove a new lower bound 
as mentioned shortly below.

\begin{table*}[ht]
\centering 
\caption{
{\bf Comparison with prior work.}
{\normalfont The ``$k$-fold repetition of $1$-sparse'' and 
``Sampling + $1$-sparse'' schemes are 
strawman constructions explained in Section~\ref{sec:roadmap}.
}
}
\begin{tabular}{c|cc|cc}
\toprule
\multirow{2}{*}{\textbf{Name}} & \multicolumn{2}{c|}{\textbf{Event-level LDP}}   & \multicolumn{2}{c}{\textbf{User-level LDP}}   \\ 
% \cline{2-5} 
  & \textbf{Comm. Cost} & \textbf{$L_\infty$ Error} & \textbf{Comm. Cost} & \textbf{$L_\infty$ Error} \\ \midrule
$k$-fold repetition of 1-sparse & $O(k\log d)$ & $O(\frac{\sqrt{k}}{\eps}\sqrt{\frac{\log d}{n}})$ & - & - \\ 
Sampling + 1-sparse~\cite{qin2016heavy,wang2018locally} & \multicolumn{2}{c|}{same as user-level} & $O(\log d)$ & $O(\frac{k}{\eps}\sqrt{\frac{\log d}{n}})$ \\ 
Naive Perturbation\cite{GaussianMech} & $O(d)$ & $O(\frac{1}{\eps}\sqrt{\frac{\log d}{n}})$ & $O(d)$  & $O(\frac{k}{\eps}\sqrt{\frac{\log d}{n}})$ \\ 
Harmony\cite{harmony} & \multicolumn{2}{c|}{same as user-level} & $O(\log d)$  & $O(\frac{\sqrt{d}}{\eps}\sqrt{\frac{\log d}{n}})$ \\ 
PCKV\cite{pckv} & \multicolumn{2}{c|}{same as user-level} & $O(\log d)$  & $O(\frac{\sqrt{d}}{\eps}\sqrt{\frac{\log d}{n}})$ \\ 
Ours & $O(k \log k + \lambda)$ & $O(\frac{1}{\eps}\sqrt{\frac{\log d}{n}})$ &
$O(\log k + \lambda)$  & $O(\frac{\sqrt{k\log n}}{\eps}\sqrt{\frac{\log d}{n}})$ \\ 
\midrule
%\hfill
Lower Bounds & - & $\Omega(\frac{1}{\eps}\sqrt{\frac{\log d}{n}})$ 
& - & $\Omega(\frac{\sqrt{k}}{\eps}\sqrt{\frac{\log (d/k)}{n}})$ \\ \bottomrule
\end{tabular}
\vspace{2ex}
%\caption{Asymptotic Comparison with Existing Approaches.} % title of Table
\label{tab:comp1}
\end{table*}

Table~\ref{tab:comp1} compares our results with prior works, and show how we achieve 
asymptotical improvements. 
Notice that our event-level scheme consumes 
more bandwidth than the user-level scheme, partly because
the optimal error bound for event-level LDP is more stringent than for user-level LDP.
It is an open question whether we can further reduce 
the bandwidth for event-level LDP  
while still preserving optimality in error\footnote{Throughout, 
we assume that the number of queries made by the server into the estimated
mean vector is polynomially or subexponentially bounded in the  
security parameter 
(denoted $\lambda$) of the PRF, depending on whether the PRF has polynomial
or subexponential security.
In cases where the server does not query the entire universe $d$, we 
take $L_\infty$ error over the queries that are actually made.
}

%The \textit{$k$-fold repetition + 1-sparse scheme} and the \textit{sampling + 1-sparse scheme} shown in the table are two strawman baselines. They extend the 1-sparse schemes like Bassily and Smith~\cite{bassily2015local} and Wang et al.~\cite{wang2017locally}. 
%The detail constructions will be covered in Section~\ref{sec:roadmap}. %The \textit{naive perturbation} scheme simply perturbs all $d$ coordinates individually according to well-known differential private mechanisms, such as Laplacian mechanism\cite{dwork2006calibrating}, Gaussian mechanism~\cite{GaussianMech} and Randomized Response~\cite{randomizedresponse}.

%\elaine{to add: forward ref to roadmap to explain what some of the names
%in the table are}\mz{done}

\paragraph{Lower bound for $k$-sparse LDP vector mean estimation.} 
We extend the proof technique of Bassily and Smith ~\cite{bassily2015local}
and prove a new lower bound for 
any user-level LDP mechanism for vector mean aggregation.
Our new lower bound almostly tightly matches
the upper bound in Theorem~\ref{thm:intro-user} (up to a logarithmic gap in $n$), showing 
our user-level LDP upper bound achieves nearly optimal error.

\begin{theorem}[(Informal) Lower bound for user-level LDP]
Any $(\epsilon, o(\frac{\epsilon}{n\log n}))$ user-level LDP 
mechanism for vector mean aggregation  
must suffer from at least $\Omega\left(\frac{1}{\eps}\sqrt{\frac{k \log (d/k)}{n}}\right)$ error in expectation.
\label{thm:intro-lb}
\end{theorem}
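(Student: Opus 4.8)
The plan is to adapt the information-theoretic lower bound of Bassily and Smith~\cite{bassily2015local} for the $1$-sparse case to the $k$-sparse user-level setting, extracting an extra $\sqrt{k}$ factor from the fact that a single user-level report must simultaneously hide all $k$ of a user's coordinates under one shared privacy budget. Recall that the heart of the Bassily--Smith argument is an LDP information-contraction bound: if a hidden parameter $\theta$ determines the distribution of each user's input and each user's report is $\epsilon$-LDP, then the whole transcript $Y$ satisfies $I(\theta; Y) = O(n \epsilon^2)$ (scaled by the squared magnitude of the induced bias), which via a Fano/packing argument over a universe of size $d$ forces $L_\infty$ error $\Omega(\frac{1}{\eps}\sqrt{\log d / n})$ in the single-block ($1$-sparse) problem. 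Since frequency estimation is a special case of mean estimation, any lower bound we prove for the former transfers to the latter.

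\textbf{Construction.} First I would build a hard family by partitioning the coordinates $[d]$ into $k$ disjoint blocks $B_1, \dots, B_k$, each of size $m = d/k$, and planting an independent Bassily--Smith-style selection instance in every block: within block $g$ there is an unknown hidden parameter $\theta^{(g)}$ (a special coordinate, equivalently a planted sign pattern over the $m$ coordinates of the block) whose frequency is slightly boosted by a bias $\eta$. Each user draws one item per block from the corresponding biased distribution and reports the resulting $k$-sparse indicator vector; this guarantees exactly $k$ nonzero coordinates so the instance is admissible, and the mean vector $\bar v$ encodes all $k$ hidden parameters, so that any estimator with small $L_\infty$ error must recover the planted signal in essentially every block.

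\textbf{Budget sharing (the crux).} The key new step, and the one I expect to be the main obstacle, is to show that the transcript carries only a $1/k$ fraction of the usable information about each individual block. Because each user emits a single $\epsilon$-LDP report about the whole $k$-tuple $(x_i^{(1)}, \dots, x_i^{(k)})$, the contraction bound controls the \emph{joint} parameter: by the chain rule, $I(\theta; Y) = \sum_g I(\theta^{(g)}; Y \mid \theta^{(<g)}) = O(n \epsilon^2)$ for $\theta = (\theta^{(1)}, \dots, \theta^{(k)})$ all at once. Averaging over the $k$ blocks, a typical block therefore receives information only $O(n \epsilon^2 / k)$; equivalently, the effective per-block privacy budget is $\epsilon/\sqrt{k}$. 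This is precisely the quantitative source of the $\sqrt{k}$ loss and mirrors why the matching upper bound in \Cref{thm:intro-user} must split its budget via advanced composition.

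\textbf{Conclusion and the approximate-DP term.} Plugging the effective budget $\epsilon/\sqrt{k}$ and block-universe $m = d/k$ into the per-block Bassily--Smith Fano/packing argument yields a per-block $L_\infty$ error of $\Omega(\frac{\sqrt{k}}{\eps}\sqrt{\log(d/k)/n})$ in the relevant parameter regime (choosing $\eta$ at the Fano threshold). Since the $L_\infty$ error dominates the error in the worst block, and since the total budget $O(n\epsilon^2)$ can make only a bounded fraction of the $k$ blocks well-estimated, at least one block is badly recovered in expectation, giving the claimed bound. Finally I would discharge the $(\epsilon, \delta)$ relaxation: the condition $\delta = o(\frac{\epsilon}{n \log n})$ is calibrated so that, after coupling each $(\epsilon,\delta)$-LDP report to a pure $\epsilon$-LDP channel on a high-probability event, the accumulated $\delta$-correction over all $n$ users and $k$ blocks stays negligible beside the main term, so the contraction bound and hence the whole argument survive. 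The two technical hot spots are (i) making the block-wise decomposition of $I(\theta; Y)$ together with the LDP contraction inequality for biased input distributions fully rigorous, and (ii) controlling the $\delta$ term; the conceptual novelty over \cite{bassily2015local} lies entirely in (i), the budget-sharing step.
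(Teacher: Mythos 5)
Your construction is not the paper's, and the difference is precisely what breaks your crux step. The paper (following Bassily--Smith) does \emph{not} plant $k$ independent per-block instances: it draws a single hidden vector $V$ uniformly from $\mcal{S} = \{v \in \{0,1\}^d : \|v\|_1 = k\}$ and passes it through an \emph{all-or-nothing} degrading channel --- with probability $\eta$ client $i$'s input is exactly $V$ (all $k$ coordinates at once), and with probability $1-\eta$ it is a fresh uniform element of $\mcal{S}$. Because changing $V$ perturbs a client's input distribution only on the $\eta$-probability event, subsampling amplification applies to the \emph{joint} parameter, so $I(V;\bz_i) = O(\eta^2\eps^2)$ (plus $\delta$-terms) per client, while the parameter entropy is $\log \binom{d}{k} \geq k\log(d/k)$. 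Fano then forces $\eta \gtrsim \frac{1}{\eps}\sqrt{k\log(d/k)/n}$, and that single step is the entire source of the $\sqrt{k}$.

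In your construction the per-block mixing events are independent, and this makes the budget-sharing claim false rather than merely hard to prove. Conditioned on the other blocks, the channel $\theta^{(g)} \to v_i$ still has the mixture form $\eta\cdot(\text{copy } \theta^{(g)}) + (1-\eta)\cdot(\text{distribution independent of } \theta^{(g)})$, so amplification applies \emph{per block}: $I(\theta^{(g)}; \bz_i \mid \theta^{(<g)}) = O(\eta^2\eps^2)$ for every block simultaneously, giving $I(\theta; Y) \leq O\bigl(\min(n\eps^2,\, nk\eta^2\eps^2)\bigr)$. Your averaging step (``a typical block receives $O(n\eps^2/k)$'') only produces a contradiction when the global ceiling $n\eps^2$ is the binding bound, i.e.\ when $n\eps^2 \ll k\log(d/k)$, which is the degenerate regime where the target bound exceeds a constant anyway. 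In the main regime $n\eps^2 \gtrsim k\log(d/k)$, at your target bias $\eta^2 \approx k\log(d/k)/(n\eps^2)$ the information upper bound is $nk\eta^2\eps^2 \approx k^2\log(d/k)$, exceeding the Fano requirement $k\log(d/k)$ by a factor of $k$: no contradiction, and nothing in the mutual-information framework prevents a mechanism from extracting the full amplified $O(\eta^2\eps^2)$ about every block at once, since the raw $O(\eps^2)$ ceiling is never reached. To force a contradiction you must shrink $\eta$ to the $1$-sparse threshold, so your instance yields only $\Omega\bigl(\frac{1}{\eps}\sqrt{\log(d/k)/n}\bigr)$; there is no valid inference from ``average information per block is $O(n\eps^2/k)$'' to ``effective per-block budget $\eps/\sqrt{k}$.'' The repair is exactly the paper's correlation trick: make the $\eta$-event shared across all $k$ coordinates of a user, so amplification is paid once for the joint secret rather than once per block. (Your handling of the $(\eps,\delta)$ relaxation, by contrast, matches the paper's --- the $\delta$-dependent terms in the mutual-information bound are negligible when $\delta = o(\frac{\eps}{n\log n})$ --- so that part is fine.)
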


%We prove new lower bounds
%for LDP vector mean estimation under $k$-sparsity. 
%Our lower bounds (almost) match our upper unds
%for both event- and user-level LDP.

\paragraph{Empirical evaluation.} 
We implemented our algorithms
and the anonymized source code can be 
found at {\tt https://github.com/DPSparseVector/dp-sparse},
and we plan to open source it upon the publication of the paper.
We evaluated our algorithms using both synthetic and real-world
datasets. 
With the synthetic dataset, we could more easily control
the parameters $k$, $d$, and $n$, and we could plot
the asymptotical behavior of our algorithms. 
In comparison with prior communication-efficient works, 
our algorithms
achieve a $5.0\times$ %to \elaine{fill}$\times$ 
reduction in $L_\infty$ error and a $29.6\times$ reduction in mean square error
for both event- and user-level LDP, 
for a typical choice of parameters, e.g., $n = 10^5$,
$d = 10^5$,  $k = 64$ and $\epsilon = 1.0$. At the same time, our algorithms consume insignificant communication cost. The report size is smaller or up to a few times larger than a TCP/IP packet headers (20 bytes).

We also tested our algorithms on several real-world datasets.
Experiment shows a $1.8\times$ to $7.3\times$ reduction
in $L_\infty$ error and a $3.1\times$ to roughly $114.3\times$ reduction
in mean square error compared to prior schemes.

\paragraph{Additional contributions.}
%In devising
%our upper bound, as a by-product, we come up with a more generalized
Besides the commonly considered user-level and event-level LDP, 
as a by-product of our upper bound constructions, we 
come up with a communication-efficient LDP mechanism 
under a more generalized $L$-neighboring notion.
Two input configuations $\v = (v_1, \ldots, v_n) \in [-1,1]^{d \cdot n}$ and 
$\v' = (v'_1, \ldots, v'_n) \in [-1,1]^{d \cdot n}$
are said to be $L$-neighboring, iff 
they are otherwise identical except for  
one user's coordinates $v_i$ and $v'_i$, and moreover,
$\Vert v_i - v'_i\Vert_1 \leq L$.
Roughly speaking, 
a mechanism satisfies $(\epsilon, \delta)$-LDP for $L$-neighboring 
input configurations if 
the server cannot $(\epsilon, \delta)$-distinguish 
two $L$-neighboring input configurations (under the standard distance
notion of $(\epsilon, \delta)$-differential privacy).
Note that the commonly known user- and event-level  
LDP notions are special cases of the above more generalized notion, 
for $L = 2k$ and $L=2$, respectively.
Therefore, introducing the generalized $L$-neighboring notion allows
us to study user- and event-level LDP under a more unified lens;
and indeed 
we use it as an intermediate stepping stone to get our main results 
(Theorems~\ref{thm:intro-user}, \ref{thm:intro-event}, and \ref{thm:intro-lb}).
We believe that 
this generalized $L$-neighboring notion can be of independent interest 
in some practical applications. For example, 
Abadi et al.~\cite{abadi2016deep} considered a gradient clipping technique
where each user would clip its gradient vector to a smaller range 
(thus pruning excessively large or small values)
before sending it to the server. This technique allows them to 
more tightly bound the $L_1$ norm of each user's contribution.

\ignore{\elaine{TODO: de-emphasize the L-notion. rewrite the text below}
\paragraph{Upper bound: LDP under a generalized $L$-neighboring notion.}
Besides the commonly considered user-level and event-level LDP, 
we also suggest a generalized $L$-neighboring notion. 
Two input configuations
$\v = (v_1, \ldots, v_n) \in [-1,1]^{d \cdot n}$ and 
$\v' = (v'_1, \ldots, v'_n) \in [-1,1]^{d \cdot n}$
are said to be $L$-neighboring, iff 
they are otherwise identical except for  
one user's coordinates $v_i$ and $v'_i$, and moreover,
$|v_i - v'_i|_1 \leq L$.
We say that a mechanism $\algM$ achieves $(\epsilon, \delta)$-LDP under $L$-neighboring
input configurations, iff for any $S$
$$\Pr[\view_{\algM}(\v) \in S] \le e^\eps \Pr[\view_{\algM}(\v') \in S] + \delta.$$ 
where $\view_{\algM}(\v)$
is a random variable denoting the server's view under the input configuration $\v$.
For the special case when $\delta = 0$, we also say that the mechanism
satisfies $\epsilon$-LDP for $L$-neighboring input configurations.
Note that user-level LDP is a special case of the above
definition for $L = 2k$, and event-level LDP
is a special case for $L = 2$. 
This generalized $L$-neighboring notion can be useful in some practical applications.
For example, 
%in the (non-private) federated learning literature, 
Abadi et al.~\cite{abadi2016deep} considered a gradient clipping technique
where each user would clip its gradient vector to a specific range 
(thus pruning excessively large or small values)
before sending it to the server. This technique allows them to 
more tightly bound the $L_1$ norm of each user's contribution.
%resulting in bounded $L_1$ distance
%In this case, the maximum $L_1$ distance  
%of each user's contribution would be upper bounded 

We devise a generalized LDP mechanism for $L$-neighboring input configurations, 
satisfying  
succinct communication and %nearly optimal error bounds:
}

\ignore{

% Typically, the view of an adversary depends on some sensitive input data.

Federated analytics attracts more attention nowadays. Building upon the differential privacy theory, which quantifies the output distribution similarity of the sensitive input processing function, the federated analytics framework provides rigorous privacy guarantee for individual users while conducting large scale data statistics analysis. Numerous analytics problems have been discussed under this setting, such as such as item frequency estimation \cite{}, frequent item mining \cite{}, key-value data aggregation \cite{} and so on. The private federated analytics framework has already been deployed in industry environment, such Google Chrome xxxx, Apple xxx and xxx. Moreover, private federated machine learning, such as supervised neural network training\cite{}, clustering\cite{}, reinforcement learning,\cite{}, can be built upon the federated analytics framework. Vector aggregation, as a fundamental task in federated analytics, can be generalized to a large amount of applications. For example, the item frequency estimation can be seen as a vector mean estimation problem, where each client represents the items using one-hot encoding. For the key-value statistics, each client can map different keys to the corresponding coordinates and compile the key-value sets into a numerical vector naturally. FedAvg, as one of the most-used training algorithm for federated learning, involves a important step that the server computes the average of the users' gradient vectors to perform a global model update.

\noindent \textbf{Sparse Vector Mean Estimation}. Each user $i$ holds a real value vector $v_i \in [-1,+1]^d$ that has at most $k$ non-zero coordinates. The server wants to estimate the mean of those vectors.  The ground truth is  $\frac{1}{n}\sum_{i\in[n]}v_i$, but due to the LDP noise, it is impossible to get the true result.  Instead, the server want an estimate as close to the ground truth as possible.  We use $L_\infty$ error to quantify the accuracy of the estiamte, which is defined as $\max_{j\in[d]} |\frac{1}{n}\sum_{i\in[n]}v_{i,j}-\hat{v}_j|$. 
% The sparse vector problem can be viewed as an extension of the frequency estimation problem. 

%
%Our mechanism gives $L_\infty$ error of $\sqrt{\frac{\log(nk/\delta)\log(d/\beta)}{\eps^2n}}$ with probability $1-\beta$ while being $(\eps,\delta)$-differentially private. Here, the neighboring relation is defined as the following: two inputs $\v\sim \v'$ iff $\v$ and $\v'$ only differ in one client $i$'s input -- $V_i$ and $V'_i$, and $\Vert V_i-V'_i \Vert_1 \le 1$. 
%

\noindent \textbf{Privacy Definitions}.
The widely accepted definition of LDP is known as \textit{user-level LDP}, in which the output distribution of the encoding algorithm (also referred to randomizer) is similar for any pair of possible inputs a user can possess. However, user-level LDP has poor utility guarantees for high-dimensional data analytics~\cite{}. A more relaxed definition is called \textit{event-level} LDP setting, in which the output distribution of the randomizer only needs to be similar for any pair of inputs that differ in one item (one element of the vector). None of the event-level LDP nor user-level LDP are designed for vector-type input. For numerical vector input, a more natrual way to define neighboring relation is through the distance between two vectors, such as $L_1$ or $L_2$ distance. In this paper, we define $L$-LDP --  two vectors are defined as neighboring if their $L_1$ distance is at most $L$. The $L$-LDP is more general and it can capture the essence of both of the event-level LDP and user-level LDP. 
% for all ``neighboring input pairs''. The neighboring input pairs are defined similarly as the central DP model -- two high-dimensional inputs are viewed as neighboring if they only differ in one item/coordinate/dimension. 
% The event-level LDP model can be seen as a tradeoff between privacy and utility -- it relaxes the privacy guarantee and generally has better utility guarantees. 
In this paper, we consider propose a unified framework that can be easily configured to be $L$-LDP.

\subsection{Existing Approaches}

% We specifically focus on the non-interactive local differential private sparse vector mean estimation problem. There are many existing works on closely related settings and among those algorithms, some can be easily extended or modified to fit the definition of our problem. 

To solve the estimation problem in the event-level LDP setting,  one can directly extend the single-item LDP algorithms, such as hash encoding~\cite{wang2017locally}, Hadmard Response~\cite{cormode2018marginal}, and Johnson-Lindenstrauss Transformation method~\cite{bassily2015local}. These algorithms achieve the same and optimal error in the single-item setting with logarithmic communication cost. Simply repeating these methods for each item, i.e., the client runs these algorithms for each non-zero coordinate and submit all reports to the server, solves the problem. 
For the user-level LDP setting, a simple sampling protocol is possible: the user randomly samples one non-zero coordinate, runs the one-item algorithm and submits the single report. 

For numerical vector mean estimation, one can use the widely-used Gaussian mechanism~\cite{GaussianMech}\tw{why don't we consider Laplace, since all other methods are pure LDP}. However, its communication cost is proportional to the dimension of the vector $d$. There are also methods based on user-level LDP, such as Harmony~\cite{harmony}, PM and HM~\cite{WangNing2019}. Asymptotically, they achieve the same and optimal error in non-sparse setting and have logarithmic communication cost. However, none of them exploit the sparsity property of the problem we consider. 

The most related line of work is LDP key-value mean value estimation. In that setting, each client has at most $k$ key-value pairs and the server wishes to compute the mean value for every single key. This setting is almost the same as our sparse vector mean estimation. 
The difference is, in the key-value setting, the server only estimates the mean value for those clients that holds the corresponding key (divide the sum by number of users that possess the key), while we compute the mean value for all the clients (divide the sum by the total number of users $n$).  
In other words, the two problems are equivalent if we consider sum, but are different in terms of mean, because the dividents are different.
Nonetheless, given a proper scaling process, as we describe later, those algorithms can be applied to sparse vector mean estimation. 
We focus on non-interactive setting, so we choose to compare with PCKV~\cite{pckv} and not to compare with PrivKV~\cite{privkv} and PrivKVM~\cite{privkvm}, which are multi-round interactive algorithms.

\subsection{Our Results}

To handle sparse vector estimation, we first propose a general LDP definition that bridges the gap between event-level and user-level settings. Our definition uses $L_1$ distance as the metric to define neighboring vector. 
We then provide a unified and configurable algorithm to tackle the sparse vector mean estimation problem under both event-level and user-level settings. Our algorithm combines multiple techniques, including hash-and-binning and random sign flipping to generate a short vector that can be seen as an encoding of the original vector. Then, our algorithm adds Laplacian noise to the short vector. 
By comparing with existing lower bound of estimation error, we show that our algorithm has optimal estimation error in event-level LDP setting. 
Finally, we provide a new lower bound of error for sparse vector mean estimation under user-level LDP setting. Our estimation algorithm has nearly-optimal utility guarantee that only has an extra $O(\sqrt{\log n})$ factors in the $L_\infty$ error. 
The asymptotic comparison for the existing work is shown in table~\ref{tab:comp1}.

}

\section{Technical Roadmap}
In this section, we give an informal technical overview of our results.

\label{sec:roadmap}

\subsection{Warmup: an Event-Level LDP Mechanism for Frequency Estimation}
\label{sec:roadmap-warmup}

For simplicity, we 
first focus on 
the special case of designing a {\it frequency estimation}
mechanism that satisfies {\it event-level} LDP.
Recall that the frequency estimation problem is a special
case of our general formulation of vector mean estimation. 
In frequency estimation, each client $i \in [n]$ has a binary vector 
$v_i \in \{0, 1\}^d$, 
denoting whether the client owns each item from a universe of $d$ items.
The server wants to estimate the frequency of each item.
Once we understand how to design an event-level LDP mechanism
for frequency estimation, we can later 
extend our techniques to 1) support {\it user-level} LDP; and 2)
support the more general case of vector mean estimation
where the client's vector is from a real domain.

\paragraph{Strawman: $k$-fold repetition of the $1$-sparse scheme.}
Recall that prior works~\cite{bassily2015local, wang2017locally, cormode2018marginal} have proposed $1$-sparse frequency estimation mechanisms
that achieve optimal error, that is, $\widetilde{O}(\frac{1}{\epsilon \sqrt{n}})$ 
error, incurring only logarithmic communication.
In our problem, each client
owns $k$ items rather than $1$.
Therefore, a strawman idea 
is through a $k$-fold repetition of the $1$-sparse scheme.
Specifically, each client can pretend to be 
$k$ virtual clients, and each 
virtual client owns only one item. 
Imagine that we run a $1$-sparse scheme over these $k n$ virtual clients.  
Since each client acts as $k$ virtual clients, its communication cost
is $\widetilde{O}(k)$ which is independent of the universe size $d$.
The resulting $L_{\infty}$ error 
would be $\widetilde{O}(\frac{1}{\epsilon \sqrt{k n}})$ 
%--- however,
%keep in mind that here, the mean is taken 
over all $k n$ virtual clients.
In reality, we want to take the mean over the $n$  
real clients. After renormalizing, the 
actual error is 
$\widetilde{O}(\frac{\sqrt{k}}{\epsilon \sqrt{n}})$.

This strawman scheme gives non-trivial bounds,  
but 
%the error is not optimal. Recall that we are aiming to
does not achieve optimal 
$\widetilde{O}(\frac{1}{\epsilon \sqrt{n}})$ error.

\paragraph{Our idea.}
We devise a new scheme that 
%makes {\it non-blackbox} \giulia{It's not clear what non-blackbox means here, we haven't explained the scheme yet} usage of 
combines the elegant ideas behind the $1$-sparse
mechanism by Wang et al.~\cite{wang2017locally} 
with a new random binning idea.
Our approach is as follows:
\begin{itemize}[leftmargin=3mm]
\item
\underline{\it Each client} $i \in [n]$ 
does the following:
\begin{enumerate}[leftmargin=3mm]
\item Sample two random hash functions 
$h_i : [d] \rightarrow [k]$
and $s_i : [d] \rightarrow \{-1, 1\}$.
\item 
Let $\{x_1, \ldots, x_k\}\in [d]^k$
denote the $k$ items belonging to client $i \in [n]$. 
For each $j \in [k]$, place $x_j$ into the hash bin  
indexed $h_i(x_j)$. Note that in total, there are $k$ hash bins per client.
\item 
For each hash bin $j \in [k]$, compute
$B_{i, j} := \sum_{x \in {\sf bin}_j } s_i(x) + {\sf Lap}(\frac{1}{\epsilon})$
where ${\sf Lap}(\frac{1}{\epsilon})$
denotes Laplacian noise of average magnitude 
$\frac{1}{\epsilon}$.
\item 
Send to the server the tuple $(h_i, s_i, \{B_{i, j}\}_{j \in [k]})$ where $h_i$ and $s_i$ denote
the description of the two hash functions.
\end{enumerate}
\item
\underline {\it Server} does the following to estimate the fraction 
of clients that own an arbitrary item $x^* \in [d]$:
\begin{enumerate}[leftmargin=3mm]
\item 
For each client $i \in [n]$, compute $j^* = h_i(x^*)$. 
\item 
Output $\frac{1}{n}\sum_{i \in [n]} B_{i, j^*} \cdot s_i(x^*)$.
\end{enumerate}
\end{itemize}

As mentioned later, to get our desired bounds, 
we need the hash functions 
$h_i$ and $s_i$ to be pseudorandom --- however, we stress
that the pseudorandomness assumption is needed only 
for load-balancing among the hash bins 
and not for proving privacy.
In other words, our scheme satisfies information-theoretic LDP.
Specifically, 
to sample a pseudorandom function (PRF), the client
samples a random seed whose 
length is related to the strength of pseudorandomness and independent of $d$.
To send the description of the hash function to the server,
the client sends the pseudorandom seed 
to the server.

Finally, in practice, 
the client can clip each $B_{i, j}$ to an integer value between $[-\tilde{O}(k), \tilde{O}(k)]$
before sending it to the server --- this does not affect the 
privacy analysis or our asymptotic error bound.
In this case, the per-client communication of our scheme 
is at most $O(k \log k)$ 
plus the description of 
the hash function (e.g., the seed of a PRF).

\paragraph{Informal utility analysis.}
To gain intuition, we present an informal analysis of our scheme.
The formal proofs (for the more generalized vector mean estimation scheme) 
are deferred to Section~\ref{sec:sparse-vec}. 
%and the proofs are presented
%for the more generalized version where each client's vector may be from
%a real domain. 
%\elaine{fwd ref}
Note that understanding the utility analysis also helps to 
understand why the scheme works.

Henceforth, we 
use ${\sf bin}^i_j$ to denote %the $j$-th bin of client $i$.
the set of items client $i$ places into its $j$-th bin (and when
it is clear from the context which client $i$ we are referring to,
we may omit $i$).
Let $C_{i, j} = \sum_{x \in {\sf bin}^i_j}s_i(x)$
be the true aggregated ``count''
of the $j$-th bin belonging to the $i$-th client.
Suppose that the server wants to know the 
frequency of item $x^* \in [d]$.
To do this, the server computes the summation 
$\sum_{i \in [n]} B_{i, h_i(x^*)} \cdot s_i(x^*) = 
\sum_{i \in [n]} C_{i, h_i(x^*)}
\cdot s_i(x^*) + \sum_{i \in [n]}{\sf Lap}(\frac{1}{\epsilon})$ --- note that
here we have not normalized the sum with the $\frac{1}{n}$ factor yet, we can defer
this step to the end.  
The second part of the summation $\sum_{i \in [n]}{\sf Lap}(\frac{1}{\epsilon})$,
is the summation of $n$ independent ${\sf Lap}(\frac{1}{\epsilon})$ noises,
and thus 
its magnitude is roughly $\widetilde{O}(\frac{\sqrt{n}}{\epsilon})$.
The first part  of the summation
$\sum_{i \in [n]} C_{i, h_i(x^*)}
\cdot s_i(x^*)$
can be further decomposed into two sources of contributions:
\begin{enumerate}
\item  
Each client $i$ who owns $x^*$
contributes one $+1$ term to the summation because $(s_i(x))^2=1$.
\item 
For each client $i$ and each item $x \neq x^*$ 
owned by the client 
such that $h_i(x) = h_i(x^*)$, 
it contributes $s_i(x)\cdot s_i(x^*)$ to the summation, which is a random choice of 
$-1$ or $+1$ assuming that $s_i(\cdot)$ is a random oracle.
\end{enumerate}
Thus, 1) corresponds to 
to the true count of the item $x^*$, 
whereas 2) is can be viewed as the result 
of a random walk of expected length $O(n)$, 
i.e., a random noise of magnitude roughly $\widetilde{O}(\sqrt{n})$.
In particular, the length of this random walk is upper bounded by the total
load of the hash bins $\sum_{i \in [n]} | {\sf bin}^i_{h_i(x^*)}|$, which is $O(n)$
assuming that each $h_i$ is a random oracle.

Summarizing the above, the estimated count  
is the true count plus roughly $\widetilde{O}(\frac{\sqrt{n}}{\epsilon})$ noise.
%\elaine{note:this expression 
%makes some assumption on the range of epsilon}
Finally, when the server normalizes
the above sum by $\frac{1}{n}$ to compute the average, the resulting error becomes 
$\widetilde{O}(\frac{1}{\epsilon \sqrt{n}})$.
Note that in Theorem~\ref{thm:intro-event}, the 
precise expression for the error bound has an extra $\log\frac{d}{\beta}$ term
which we ignore here, 
where $\beta$ is the failure probability for the error bound.   
Specifically, the $\log d$ term arises from taking a union bound 
over the universe of $d$ elements and the $\log \frac1\beta$ 
term comes from a precise measure concentration bound on the error ---
we defer these precise calculations to the subsequent technical sections.

At this point, it is helpful to observe that 
in this construction, the error comes from two sources --- this observation
will later help us to generalize the scheme to user-level LDP:
\begin{itemize}[leftmargin=5mm]
\item 
{\it Noise component:} the first source of error is the  
sum of $n$ independent ${\sf Lap}(\frac{1}{\epsilon})$ noises, 
one for each ${\sf bin}^i_{h_i(x^*)}$
where $i \in [n]$;
\item 
{\it Colliding items component:}
the second source of error is 
the random contribution of either $+1$ or $-1$ from each element 
$x \neq x^*$ that each client $i$ 
places into its bin %$(i, h_i(x^*))$.
${\sf bin}^i_{h_i(x^*)}$.
\end{itemize}

\begin{remark}
In the above, we assumed that the hash functions $h_i$'s and $s_i$'s
are random oracles. 
In practice, we instantiate the hash functions using pseudorandom functions.
%where the pseudorandomness assumption is only needed for measure concentration
%but not needed for privacy.
\end{remark}

\paragraph{Informal privacy analysis.}
We now give an informal privacy analysis, while deferring
the formal proofs to Section~\ref{sec:sparse-vec}.
We want to show that the scheme satisfies $\epsilon$-event-level-LDP.
%First, we argue that the scheme satisfies $\epsilon$-event-level-LDP.
Fix the hash functions $h_1, \ldots, h_n, s_1, \ldots, s_n$, 
and consider two input configurations $\v, \v' \in \{0, 1\}^{d \cdot n}$
that differ in only one position. 
Let $C_{i, j} = \sum_{x \in {\sf bin}_{i, j}} s_i(x)$
be the true aggregated ``count''
of the $j$-th bin belonging to the $i$-th client, when the input configuration
is $\v$; and let $C'_{i, j}$ be the corresponding 
quantity when the input configuration is $\v'$.
It must be that all $C_{i, j}$
and $C'_{i, j}$ are 
the same everywhere except for one bin $j^*$ corresponding to one client $i^*$.
Moreover, for the only location where they differ, 
it must be that $|C_{i^*, j^*} - C'_{i^*, j^*}| \leq 1$.
Having observed this, it is not too hard to show 
that adding  Laplacian noise of average
 magnitude $\frac{1}{\epsilon}$ to each bin suffices
for achieving $\epsilon$-event-level-LDP.

\subsection{Extension: a User-Level LDP Mechanism for Frequency Estimation}
\label{sec:roadmap-user}
%We now generalize the above warmup scheme to support user-level LDP. 
One trivial way to obtain user-level LDP is to 
directly use the aforementioned warmup 
scheme, and simply apply standard privacy composition theorems~\cite{dworkrothdpbook,kairouz2015composition} %\elaine{cite}
to reset the parameters.
Specifically, %for a total privacy budget of $\epsilon$, we will need to  
to achieve $\epsilon$-user-level-LDP, we would need to plug in  
a privacy  parameter of $\frac{\epsilon}{k}$ when invoking
the warmup scheme.
This results in an error bound of 
$\widetilde{O}(\frac{k}{\epsilon \sqrt{n}})$ which is an $\widetilde{O}(\sqrt{k})$
factor away from optimal.

\paragraph{Strawman: sampling + $1$-sparse mechanism.}
Another strawman idea for each client to randomly sample $1$ item out of its $k$ items,  
apply the $1$-sparse mechanism to the sampled items,   
and finally, renormalize the estimate accordingly~\cite{qin2016heavy,wang2018locally}. 
Unfortunately, it is not hard to show that the  
resulting error would again be 
$\widetilde{O}(\frac{k}{\epsilon \sqrt{n}})$, an $\widetilde{O}(\sqrt{k})$
factor away from optimal.
Note also that if a client has strictly fewer than $k$ items, it needs
to first pad its input to $k$ with filler items, and then apply the the sampling
and $1$-sparse mechanism.

\paragraph{Our approach.}
Our approach is to generalize our warmup mechanism.
%In the more general case, 
Suppose we want to achieve $(\epsilon, \delta)$-LDP 
under $L$-neighboring input configurations. 
Recall that two input configurations 
$\v = (v_1, \ldots, v_n) \in \{0, 1\}^{d\cdot n}$ and
$\v' \in (v'_1, \ldots, v'_n)  \in \{0, 1\}^{d\cdot n}$
are $L$-neighboring iff they differ in only one user's
contribution $v_i$ and $v'_i$, and moreover, 
$|v_i - v'_i|_1 \leq L$.
Note that user-level LDP
is simply a special case where $L = k$.
In other words, we want the server's view to be $(\epsilon, \delta)$-close
for two input configurations 
$\v = (v_1, \ldots, v_n) \in \{0, 1\}^{d\cdot n}$ and
 $\v' \in (v'_1, \ldots, v'_n)  \in \{0, 1\}^{d\cdot n}$
under the distance notion 
of the standard $(\epsilon, \delta)$-differential privacy definition~\cite{dwork2006calibrating}.
In our reasoning below, we will carry around the parameter $L$, 
and at the end, we can plug in $L = k$ to get the user-level LDP result.
However, as noted earlier in Section~\ref{sec:intro},
the more general scheme parametrized by $L$
can be of independent interest.

Our generalized scheme is almost the same as the warmup scheme, except
with the following modifications:
\begin{itemize}
\item 
Each client now has $b$ hash bins
rather than $k$ bins as in the warmup scheme.
For now, we leave the choice of $b$ unspecified,
and work out the optimal choice later.
\item 
Each client $i \in [n]$ now computes the noisy sum $B_{i, j}$ as 
$B_{i, j} := \sum_{x \in {\sf bin}_j } s_i(x) + {\sf Lap}(\frac{1}{\epsilon'})$
where $$\epsilon' = \frac{O(\epsilon)}{\min(L, \sqrt{b L \log\frac{b}{\delta}})}.$$
\end{itemize}

\paragraph{Informal privacy analysis.}
Consider two $L$-neighboring input configurations 
$\v$ and $\v'$, and fix all hash functions $h_1, \ldots, h_n$
and $s_1, \ldots, s_n$. 
Let $C_{i, j} := \sum_{x \in {\sf bin}^i_j } s_i(x)$
be the true ``count'' of ${\sf bin}^i_j$ 
under $\v$ and let 
$C'_{i, j}$ be the corresponding quantity under $\v'$.
Now, consider the 
vectors ${\bf C} = \{C_{i, j}\}_{i \in [n], j \in [b]}$  
and 
${\bf C}' = \{C'_{i, j}\}_{i \in [n], j \in [b]}$.
We want to show that $|{\bf C} - {\bf C}'|_1 \leq \min(L, O(\sqrt{{b L\cdot \log\frac{b}{\delta}}})$ with probability $1-\delta$. 
If so, adding the aforementioned noise is sufficient  
for achieving $(\epsilon, \delta)$-LDP under $L$-neighboring.
Now, $|{\bf C} - {\bf C}'|_1 \leq L$ is easy to see. 
Therefore, it suffices to show that
$|{\bf C} - {\bf C}'|_1 \leq O(\sqrt{{b L\cdot \log\frac{b}{\delta}}})$ with probability
$1-\delta$.
Due to standard measure concentration bounds, when we change $\v$ to $\v'$,  
for any fixed ${\sf bin}^i_j$, 
it holds that $|C_{i, j} - C'_{i, j}| \leq 
O(\sqrt{\frac{L\cdot \log\frac{b}{\delta}}{b}})$ with probability $1-\frac{\delta}{b}$.
Taking a union bound over all $b$ bins, we have that
$|{\bf C} - {\bf C}'|_1 \leq 
O(\sqrt{{b L\cdot \log\frac{b}{\delta}}})$ with probability $1-\delta$.

%at most $\min(b, L)$ bins' true ``counts'' 
%$C_{i, j} := \sum_{x \in {\sf bin}^i_j } s_i(x)$
%are affected, and these $\min(b, L)$ bins must belong to the same client.
%Furthermore, for each bin whose count is affected, 
%the change in its true ``count''
%is upper bounded by $O\left(\sqrt{\frac{L\cdot \log\frac1\delta}{b}}\right)$
%with probability $1-O(\delta)$.
%\elaine{TODO: rewrite this explanation}

\paragraph{Informal utility analysis and optimal choice of $b$.}
As in the earlier event-level LDP scheme, 
the error in the final summation 
$\sum_{i \in [n]} B_{i, h_i(x^*)} \cdot s_i(x^*)$ --- without normalizing
it with the $1/n$ factor yet
--- comes from two sources:
\begin{itemize}
\item {\it Noise component.}
The noise component 
consists of the summation 
of $n$ independent ${\sf Lap}(\frac{1}{\epsilon'})$
noises where 
$\epsilon' = \frac{O(\epsilon)}{\min(L, \sqrt{b L \log\frac{b}{\delta}})}$.
Thus, the total noise is roughly 
$O(\frac{1}{\epsilon}\cdot \sqrt{n} \cdot \min(L, \sqrt{b L \log\frac{b}{\delta}}))$.

\item {\it Colliding items component.}
The contribution from all colliding elements 
can be viewed as a random walk of length that is equal to the number
of colliding elements  
in all $n$ bins $\{{\sf bin}^i_{h_i(x^*)}\}_{i \in [n]}$.
The number of colliding elements 
is concentrated around its expectation 
$\frac{nk}{b}$ with high probability, and thus
the colliding items component results  
in roughly $\widetilde{O}(\sqrt{\frac{nk}{b}})$ error.
\end{itemize}

The total error is minimized 
when the noise component is roughly equal to the contribution from colliding
elements, and we derive the optimal choice of $b$ as  
\begin{equation}
b = \begin{cases}
\tilde{\Theta}\left( \frac{k}{L^2} \right), \textit{ if } L \le k^{\frac{1}{3}} \\
\tilde{\Theta}\left( \sqrt{\frac{k}{L}} \right), \textit{ otherwise. } \\
\end{cases}
\label{eqn:numbin}
\end{equation} 
When the above optimal $b$ is chosen correspondingly, both error components 
are roughly equal (omitting the logarithmic factors).  Specifically, for the 
the case when 
$L \le k^{\frac{1}{3}}$, both error components are roughly 
$\tilde{O}\left(\frac{1}{\eps}L\sqrt{n}\right)$; for the other case,
both error components are 
$\tilde{O}\left(\frac{1}{\eps}(kL)^{\frac{1}{4}}\sqrt{n}\right)$.
Keep in mind that for our final error bound, we need to apply an extra $1/n$
normalizing factor to the above terms.

Summarizing the above, 
we obtain a mechanism that satisfies $(\epsilon,\delta)$-LDP under $L$-neighboring,
with per-client communication cost 
$\widetilde{O}(b)$ where $b$ is shown in Equation~(\ref{eqn:numbin}),
and its choice depends on whether $L \leq k^{\frac13}$.
Further, the scheme achieves 
%$\tilde{\Theta}(\frac{k}{L^2})$ (in the small $L$ region) or $\tilde{\Theta}\left(\sqrt{\frac{k}{L}}\right)$ (in the large $L$ region), and error
$\tilde{O}\left(\frac{1}{\eps}\min(L,(kL)^{\frac{1}{4}})\sqrt{\frac{1}{n}}\right)$ error 
after applying the extra $1/n$ normalization factor.

For the special case of user-level-LDP, which can be captured by $k$-neighboring LDP,
using the above calculation, we conclude that the optimal choice of $b$ should be $b = 1$. 
In this case, the error is roughly  
$\tilde{O}(\frac{\sqrt{k}}{\eps \sqrt{n}})$.
%calculation shows that we should take $b = 1$.
%In this case, both the noise component and  
%the colliding 
%elements component of the error are roughly $\tilde{O}(\frac{\sqrt{k}}{\eps \sqrt{n}})$.
So far, the scheme described above achieves $(\epsilon, \delta)$-LDP with
a non-zero $\delta$. However, for the special case of user-level LDP,
we can use an additional clipping technique 
to obtain $\epsilon$-LDP. We defer the detailed exposition of this
technique to 
subsequent sections.

Finally, observe that 
for $L=1$, i.e., for event-level-LDP, 
the optimal choice of $b = O(k)$.  
This shows that our event-level-LDP scheme 
in Section~\ref{sec:roadmap-warmup}
is also a special case of 
the above more generalized scheme.

\ignore{
\mz{I added the following description. }

\mz{There are mulitple ways to describe the
techniques.
One way is to describe the algorithm for 
$L$-LDP first, then describe some 
optimizations in user-level LDP
that can achieve pure DP. 
The other way is to directly 
describe the pure-DP algorithm for 
user-level LDP and 
defer the description for general
$L$ to the extension. In 
Section~\ref{sec:sparse-vec},
we describe the algorithm using the 
latter one.}

Moreover, under the conventional user-level LDP setting where $L=k$, 
the optimal bin choice is simply $b=1$ 
and therefore the absolute error is
$\tilde{O}\left(\frac{1}{\eps}\sqrt{\frac{k}{n}}\right)$ 
with high probability. 

We further improve the algorithm to pure-LDP 
by clipping and get similar asymptotic utility
guarantee. 
The idea is to let the failure probability $\beta$ 
in the utility theorem absorb the 
failure probability $\delta$
in the $(\eps,\delta)$-DP definition.
The improved algorithm clips the value in the 
single bin to range $[-\eta,\eta]$, where
$\eta=O(\sqrt{k\log (n/\beta)})$.
Then, adding Laplacian noise 
${\sf Lap}(\frac{2\eta}{\eps})$ 
is enough to ensure 
$\eps$-user-level-LDP.
The overflow probability is bounded by
$O(\beta)$
and the rigorous analysis gives
$\tilde{O}(\frac{1}{\eps}\sqrt{\frac{k}{n}})$
absolute error 
with high probability.
}

\subsection{Generalizing to Real-Valued Vectors}
In the more general case, each client $i$ holds 
a real-valued vector $v_i \in [-1, 1]^d$, with at most $k \ll d$ 
non-zero coordinates.
For example, each non-zero coordinate may represent
the rating a user has given to a movie 
that it has watched. Chances are, each user has watched 
relatively few ($k$) movies 
out of the entire universe of $d$ movies. 

It is not difficult to generalize the aforementioned 
schemes (Sections~\ref{sec:roadmap-warmup} and \ref{sec:roadmap-user})
to real-valued vectors. 
%--- note that the same generalization
%works for the event-level-LDP scheme in Section~\ref{sec:roadmap-warmup}
%too, since the event-level-LDP scheme can also be viewed as a special case
%of the generalized scheme in Section~\ref{sec:roadmap-user}.
The only modification is the following: each client $i$ now computes 
$B_{i, j}$ as follows for $j \in [b]$ where $b$ denotes the number of bins per client:
\[
B_{i, j} := \sum_{x \in [d]} s_i(x) \cdot v_{i,x} + {\sf Lap}\left(\frac{1}{\epsilon'}\right)
\]
where $v_{i,x}$ denotes the $x$-th coordinate of the client's vector $v_i$,
and the choice of $\epsilon'$ is the same as before.
Note that since our event-level-LDP scheme (Section~\ref{sec:roadmap-warmup})
is a special case of the scheme in Section~\ref{sec:roadmap-user}, the above
works for the event-level-LDP scheme too. 

The proof of the above generalized 
scheme is similar in spirit to the binary case but requires more careful
calculation. %We defer the formal details to the subsequent technical sections. 
In the subsequent technical sections, we directly prove the real-valued
case, since this is the more general form.

\subsection{Our Lower Bound}
The framework in Bassily and Smith~\cite{bassily2015local} provides a lower bound for the event-level LDP under 1-sparse setting. 
Our event-level LDP upper bound tightly matches the lower bounds and therefore 
closes the case for event-level LDP.
We observe that it is not hard to extend 
Bassily and Smith~\cite{bassily2015local}'s proof to 
user-level-LDP, and the resulting lower bound  
matches the error achieved by our earlier upper bound.
We defer the detailed presentation of the lower bound to Section~\ref{sec:lower}.

%to obtain a lower bound on the $L_\infty$-error
%of mean vector estimation under user-level LDP. 
%Each user~$i$ has an input
%vector $v_i$ in some candidate set~$\mcal{S}$,
%where any two vectors in $\mcal{S}$ are considered as neighboring.
%Every user~$i$ independently applies its local differentially private randomizer
%to produce a report $z_i := \mcal{Q}_i(v_i)$.  An aggregator
%function is used to produce an estimate $\widehat{v} := \mcal{A}(z_1, z_2, \ldots, z_n)$.

%\input{tex/roadmap-mingxun}
\subsection{Additional Related Work}
\label{sec:related}
\paragraph{Frequency estimation under LDP}.
Privacy-preserving frequency estimation is a fundamental
primitive in federated analytics.
Earlier works in this space  
focused on the case when the universe size $d$ is small, and these works
often suffer from per-client communication 
cost proportional to $d$.
For example, 
RAPPOR and its variants~\cite{rappor,wang-miopt-16,optimalschemes2018Ye} 
encode each client's item 
with one-hot encoding and performs coordinate-wise randomized response (RR)~\cite{randomizedresponse}, which suffers from at least $d$ communication cost.
%However, RAPPOR suffer from large communication cost. 
Various subsequent works~\cite{bassily2015local,fanti2015building,2018Hadamard,Bassilynips2017,bns19,as19,wang2017locally,wang-miopt-16,cko20} 
focused on how to compress the communication 
especially  when the universe size $d$ is large, 
but each client has only one non-zero coordinate
(i.e., the $1$-sparse case).
Some of these algorithms~\cite{bassily2015local,2018Hadamard,Bassilynips2017,bns19,cko20,wang2017locally} achieved optimal estimation error 
and using only logarithmic bandwidth. 

When each user can have up to $k$ items, one
approach is to ask users to sample one item to report (e.g., \cite{qin2016heavy,wang2018locally}) using the $1$-sparse protocol (reviewed above) as a black-box.  This approach introduces an error that is a $\sqrt{k}$ factor away from optimal for user-level LDP, and a $k$ factor away from
optimal for event-level LDP. 
%\giulia{We could remove the above paragraph, since it's already stated earlier in the section.}
 %Wang et al. proposed PrivSet\cite{wang2018privset} that generalizes exponential mechanism to set values, but their algorithm involves large combinatorial number computing, which is generally impractical for large $d$ case. 

%In the event-level LDP setting, the straightforward approach is to ask each user to report $k$ times (thus each user is treated as $k$ users).  This introduces a $\sqrt{k}$ error blowup factor.  Another straightforward approach is to modify the RAPPOR-based method, so that each bit-vector has $k$ 1's.  By doing this, the error bound is the same as original method in the single-item setting, but the communication cost is high.

\paragraph{Vector mean estimation under LDP.}
For vector mean estimation under LDP, 
a few earlier works~\cite{duchi2018minimax,DuchR19}.
showed how to achieve optimal error 
for the dense case when $d \approx k$, absent communication constraints. 
%absent any communication constraints.
Bhowmick et al.~\cite{bhowmick2019protection} showed how to achieve asymptotically optimal
accuracy when $\epsilon > 1$, but they require $\Omega(d)$ communication.
Following works, such Harmony~\cite{harmony}, Wang et al.~\cite{WangNing2019}, Li et al.~\cite{li2020numeric} and Zhao et al. ~\cite{zhao2020local} improve the utility 
compared to \cite{DuchR19}. However, 
all of the above works focused on the dense case and did not consider sparse vectors.
Chen et al.~\cite{cko20} achieved  
optimal error and succinct communication
for the $1$-sparse case. %\elaine{what else?}
The PrivKVM work~\cite{privkv} 
proposed an {\it interactive} protocol for vector mean estimation but 
it suffers from at least $\sqrt{d}$ error; the approach was later  
improved~\cite{privkvm} but the protocol is still interactive.

\paragraph{Computational differential privacy.}
Our work focuses on an information theoretic notion of privacy.
An orthogonal line of work considered computational differential privacy (CDP)~\cite{cdp}
in distributed analytics~\cite{ndss11,fc12,bonawitz2017practical,deeplearncdp}.
Some of these works 
showed how to compute distributed summation with error
comparable to central DP, relying on cryptographic assumptions.
Recently, Bagdasaryan \emph{et al}~\cite{bagdasaryan2021towards} 
considered frequency estimation  
under CDP assuming $1$-sparsity, with the extra assumption that the frequency  
vector must be sparse too. %\elaine{i haven't looked at the paper}
For the more general setting of $k$-sparsity that we consider, 
it is not known how CDP can further improve the acccuracy 
in comparison with LDP, 
while still preserving succinct communication.
We leave this as an open 
question.
 
%To the best of our knowledge, sparsity has been 
%little explored for 
%computationally differential private aggregation protocols.
\paragraph{Sparse vector data releasing under central-DP.}
Previous works \cite{cormode2012differentially, korolova2009releasing, aumuller2021differentially} discussed a related setting that a single entity wishes to differentially privately release a $k$-sparse vector $v\in[0,u]^d$ ($u$ can be large). The neighboring notion is also defined by $L_1$ distance -- a neighboring input pair $v\sim v'$ iff. $\Vert v-v'\Vert_1 \le 1$. For example, the newest work on this line -- the ALP mechanism~\cite{aumuller2021differentially} showed how to privately encode the $k$-sparse vector with $O(k \log(d+u))$ bits with $L_\infty$ decoding error of $O(\frac{\log d}{\eps})$. However, the encoding-decoding processes of these works are biased. Although this is acceptable in one-time data releasing, it is not suitable for mean estimation because the biased error will add up $n$ times. Therfore, there is no concentration property on the final estimation error. We implemented the ALP mechanism under event-level LDP and the mean estimation error is much worse than the simple $k$-fold repetition scheme. It is unclear how to debias these schemes to fit the need of mean estimation.

\ignore{

\paragraph{Other Definitions.} 
In~\cite{arxiv:erlingsson2020encode}, the authors proposed the notion of removal LDP.  
The ``removal'' intuition originates from the traditional, centralized DP, where we want to hide/protect whether a particular user contributes to the data publication or not.
In LDP, each user adds noise onto their own values and report it to the server, so the server naturally knows whether a user contributes or not.
Erlingsson et al. argued that sending a pre-defined null/dummy values can be seen as non-contribution.

Wang et al. propose to relax the definition by taking into account the distance between the true value and the perturbed value~\cite{wang2017local_ordinal} (similar to the geo-indistinguishability notion in the centralized setting~\cite{andres2012geo}).
More formally, given the true value, with high probability, it will be perturbed to a nearby value (with some pre-defined distance function); and with low probability, it will be changed to a value that is far apart.  
A similar definition is also proposed in other work~\cite{gursoy2019secure,gu2019supporting}. 

Another setting where some answers are sensitive while some not is also investigated~\cite{murakami2019utility} (there is also a DP counterpart called One-sided DP~\cite{doudalis2017one}).  
Later Gu et al. proposed a more general version~\cite{gu2019providing} that allows different values to have different privacy level.
Finally, Yue et al. proposed a version that combines metric~\cite{murakami2019utility} and private-public separation~\cite{wang2017local_ordinal} and use it to perturb texts~\cite{ACL21/YueDu21}.

\tw{tw add distributed dp}\cite{bonawitz2017practical,bagdasaryan2021towards,kairouz2021distributed}

\cite{shah2021optimal,feldman2021lossless}
}

\section{Preliminaries and Definitions}
%\mz{Under construction:  The prelim section is still messy}
%\tw{I suggest commenting out some (theorem, lemma) of this section, and merge the rest (rr, alg 1) with the next section}
\subsection{Background on Differential Privacy}
Differnetial privacy was first proposed by Dwork et al.~\cite{dwork2006calibrating}.
and has since become a {\it de facto} privacy notion.

\begin{definition}[$(\eps,\delta)$-close]
	We say the distributions of two random variables, $X$ and $X'$ are $(\eps,\delta)$-close iff they have the same domain $D$ and for every subset $S\subseteq D$, 
	
	$$\Pr[X\in S] \le e^\eps \Pr[X'\in S]+\delta.$$
\end{definition}

\begin{definition}[$(\eps,\delta)$-Differential Privacy]
	A function $f$ is $(
	\eps,\delta)$-DP w.r.t. some neighboring relation $\sim$ on its input domain iff for every pair $v,v' \in \text{Domain}(f)$, s.t. $v\sim v'$, the distributions of $f(v)$ and $f(v')$ are $(\eps,\delta)$-close.
\end{definition}
If a function $f$ is $(
        \eps,0)$-DP, 
 we also 
say that $f$ is $\epsilon$-DP for short (w.r.t. the neighboring relation $\sim$).

%\gf{1. We need to define the neighboring relation explicitly here--it's not defined anywhere. 2. This notation is very clunky--this lemma 2.3 is saying something trivial, but it's very hard to parse. Can we define better notation? 3. These theorems and lemmas in this section mostly don't belong in our preliminaries section, they make the paper very difficult to read for no good reason. Several can be moved to the appendix imo.}

\ignore{
\paragraph{Laplacian Mechanism}. Let the sensitivity $\Delta f$ of a function $f:D \to \mathbb{R}$ be $\Delta f=\max_{x\sim x'}|f(x)-f(x')|$. Then, setting $\lambda=\Delta f / \eps$, the function $\tilde{f}(x)=f(x)+Lap(\lambda)$ is $(\eps,0)$-DP. Here, $Lap(\lambda)$ is the \textbf{Laplacian noise} and its probability density function is 
	\begin{align*}
		p(x\mid \lambda)= \exp\left(-\frac{|x|}{\lambda}\right) 
	\end{align*}

\paragraph{Randomized Response} Given input $x \in \{0,1\}$, let
	\begin{align*}
		\rr(x, \eps)=
		\begin{cases}
			x, \text{  w.p. } \frac{e^\eps}{1+e^\eps} \\
			1-x, \text{  w.p. } \frac{1}{1+e^\eps} \\
		\end{cases}
	\end{align*}
	%Here, $c_\eps=\frac{e^\eps+1}{e^\eps-1}=O(\frac{1}{\eps})$.
The randomized response function $\rr$ is $\eps$-DP. Given $n$ binary inputs $x_1,\dots,x_n$ and $n$ randomized report $z_i=\rr(x_i,\eps)$. For $x\in \{0,1\}$, a simple aggregation function $\hat{f}_x = \frac{e^\eps + 1}{e^\eps - 1}(\frac{2}{n}\sum_{i\in[n]}\mathbb{I}[z_i=x]-1)$ outputs an unbiased estimation of the frequency of $x$,  $f_x=\frac{1}{n}\sum_{i\in[n]}\mathbb{I}[x_i=x]$. Also, the absolute error of the estimation $|\hat{f}_x-f_x|\le O(\frac{1}{\eps\sqrt{n}})$ with high probability.

\iffalse

\begin{definition}[Basic Discretization Randomizer]
	Given any constant $c$, for real value $x\in [-c,+c]$, 
	\begin{align*}
		\mathtt{DSC}(x, \eps)=
		\begin{cases}
			c\cdot c_\eps, \text{  w.p. } \frac{(x_i /c)(e^\eps-1)+e^\eps+1}{2(1+e^\eps)} \\
			-c \cdot c_\eps, \text{  otherwise } \\
		\end{cases}
	\end{align*}
	Here, $c_\eps=\frac{e^\eps+1}{e^\eps-1}=O(\frac{1}{\eps})$.
\end{definition}

The basic discretization algorithm satisfies: 1) $\E[\DSC(x,\eps)]=x$; 2) $\DSC(x,\eps)$ is an $\eps$-DP algorithm.

\fi

\paragraph{Single Item Frequency Estimation under DP.} There are $n$ clients. Each client is holding an item $x_i \in [d]$. The server wishes to estimate the frequency vector -- for $x\in[d]$, $f_x=\frac{1}{n} \sum_{i\in[n]} \mathbb{I}[x_i=x]$. To protect privacy, each client generates a noisy report using a $(\eps,\delta)$-DP randomizer: $z_i \get \Q(x_i)$. Here, the neighboring pair can be two arbitrary items $x,x' \in [d]$. The server collects the reports from the clients and outputs an estimation using an aggregator $\cA$: $\hat{f}\get \cA(z_1,\dots,z_n)$. We introduce an simple and efficient single item frequency estimation algorithm \cite{wang2017locally} as Alg~\ref{alg:blh}. The idea of the algorithm is that each client $i$ independently samples a random $01$ encoding for all item in domain $[d]$, represented by a hash function $s_i:[d] \to \{0,1\}$. Then, it picks the encoding bit for its item $x_i$ as the report. It then uses basic randomized response to perturb the value. The client sends the seed of the hash function and the perturbed bit to the server. For each possible item in $[d]$, the server counts how many clients' report bits match the hash outcomes. It's easy to compute the true positive rate $p_{t}=\frac{e^\eps}{1+e^\eps}$ and the false positive rate $p_{f}=1/2$. It then reverse the bias and outputs an estimation. The per-client communication cost is $1+O(t)$, where $t$ is the cost for representing the hash function. The algorithm generally uses pseudorandomness and generates a random hash function with a seed with length of $O(\log d)$. The computation cost for each client is $O(\log d)$. The computation time for the server is $O(nd)$. Based on theorem~\ref{theorem:blh_error} and theorem~\ref{theorem:one_item_lower_bound}, the mechanism~\ref{alg:blh} generally matches the lower bound of $L_\infty$ error for one-item frequency estimation. 

\begin{algorithm}
	\caption{One-item Frequency Estimation: Binary Local Hashing Method\cite{wang2017locally}}\label{alg:blh}
	\KwIn{User's input $\{x_i \in [d]: i\in [n]\}$, the privacy parameter $\eps$.}
	%\frac{1}{n}
	\KwOut{A frequency estimation vector $\hat{f} \in [0,1]^d$}
	%\textbf{Generation of public parameters: } \\
	%$\gamma \get \sqrt{\frac{\log (2d/\beta)}{\eps^2n}}$ 
	%$m\get \frac{\log(d+1)\log(2/\beta)}{\gamma^2}$ 
	%$\Phi \get \mathtt{GenProj}(m, d)$ 
	%$d^+ \get \inf_{t\in \mathbb{N}^+}\{2^t | 2^t \ge d\}$ 
	%$\had \get \mathtt{HadamardMatrix}(d^+)$ 
	\textbf{Local randomizer $\Q$ for user $i$: }\\
	Randomly sample a hash function $s_i: [d] \to \{0,1\}$ 
	%Pick a random index $j_i\in [d^+]$ 
	%enerate a random value $r_i=d^+\cdot \mathtt{RR}(z_i[j])$ 
	Set $r_i=s_i(x_i)$ 
	Randomly flip the bit $r_i$ with prob. $\frac{1}{1+e^\eps}$ 
	Report $(s_i, r_i)$ 
	%Set $\hat{z}_i$ as the report 
	\textbf{Aggregation algorithm $\cA$: } \\
	%$\bar{z} \get \sum_$
	%$\bar{z}\get\frac{1}{n} \bar{z}$ 
	%$\hat{f}\get \textbf{0}$ 
	\For{$x\in[d]$}{
		%$\hat{v}_x \get 0$ 
		$t\get 0$ 
		\For{$i\in[n]$}{
			%$z'_i=T^i[h^i(x)]$ 
			$t \get t + \I[s_i(x)=r_i]$ 
		}
		$\hat{f}_x = (\frac{2t}{n}-1) \frac{e^\eps + 1}{e^\eps - 1}$ 
		%$Z'_i\get \frac{1}{n}\sum_{i\in[n]} s^{i}\had^{-1}T^i[h^i(x)]$ 
	}
\end{algorithm}

\begin{theorem}[Estimation Error of the Binary Local Hashing Method]\cite{wang2017locally}
	\label{theorem:blh_error}
	With probability $1-\beta$, Alg~\ref{alg:blh} gives an estimation with $L_\infty$ error within $\sqrt{\frac{\log(d/\beta)}{\eps^2n}}$, while the local randomizer is $\eps$-DP. 
\end{theorem}
}

\subsection{Sparse Vector Mean Estimation}
Consider $n$ clients, indexed by the set $[n]=\{1,2,\dots,n\}$. Each client has a real-value vector $v_i \in [-1,1]^d$. Also, each vector $v_i$ is $k$-sparse, i.e., 
it has at most $k$ non-zero coordinates. Different clients
may have different non-zero coordinates.
We use the notation $\v:=(v_1,\dots,v_n)$ to denote all clients' inputs, and we also refer
to $\v$ as an input configuration. 
A server wants to estimate the mean vector, $\bar{v}=\frac{1}{n}\sum_{i\in[n]}v_i$ 
through a non-interactive mechanism. 

In a non-interactive mechanism, each client 
sends a \textbf{single} message to the server, and 
the server then computes an estimate of the mean vector
$\bar{v}=\frac{1}{n}\sum_{i\in[n]}v_i$.
Both the clients and the server can make use of randomness
in their computation.

\ignore{
\tw{having a formal def is good, but being non-standard may also confuse readers}
\begin{definition}[Non-interactive client-server statistics analytics framework]
	A non-interactive framework $\algM$ between $n$ (symmetric) clients computes some statistics over the sensitive inputs $\v=(v_1,\dots,v_n)$, which client $i$ holds input $v_i$, through the following steps: 
	\begin{enumerate}
		\item Server initializes some public parameter $\theta$ (which may be deterministic). 
		
		%\item A trusted entity (which might be a client) samples the public random tape $r_{pub}$
		%(which can be observed by everyone). 
		
		\item Each client $i$ samples the private random tape $r^i_{sec}$ and invokes local randomizer $\Q$ to generate report $z_i=\Q(v_i; \theta, r^i_{sec})$, then sends the report to the server.
		
		\item The server aggregates the reports and output some statistics by invoking some aggregation algorithm $\cA(z_1,\dots,z_n;\theta)$;

	\end{enumerate}
\end{definition}

\noindent \textbf{Threat Model.} We focus on the semi-honest adversary setting. In framework $\algM$, all parties perform honestly and the adversary observes the public parameter $\theta$ and all reports generated by the clients $\view_{\algM}(v_i)=  (z_1,\dots, z_n)$.

%For each 

%\begin{definition}
%	A (randomized) view $R$ is $(\eps,\delta)$-differentially private w.r.t. some neighboring relation $\sim$ on the input space iff for any input pair $v\sim v'$, it holds that, for all $S\subseteq \mathsf{Range}(R)$ 
%	\begin{align}
%		\Pr[R(v)\in S] \le e^\eps\Pr[R(v')\in S]+\delta
%	\end{align}
%\end{definition}

%\gf{Why are we lumping the aggregation algorithm in with the mechanism? I feel they should be separate. THis is more consistent with how people usually talk about LDP--the mechanism is the randomization algorithm, and the analyst/observer has some (separate) inference algorithm to make sense of the noisy data. This will have implications for the results in later sections. For instance--algorithm 6 is really describing an aggregation algorithm, I think it's strange to call it a mechanism. IN fact, I think you could just take out item 4 from the definition below without causing any problems?}
}

\ignore{
\paragraph{Neighboring input configuration.} 
Two input configurations $\v = (v_1, \ldots, v_n) \in [-1, 1]^{d \cdot n}$
and $\v' = (v'_1, \ldots, v'_n) \in [-1, 1]^{d \cdot n}$
are said to be $L$-neighboring, iff 
the two vectors are otherwise identical except for  
at most one user's contribution $v_i$ and $v'_i$;
and further, 
for the user $i$ where the two vectors differ, it must be
that $|v_i - v'_i|_1 \leq L$.
\elaine{mingxun, is this what you mean?}
}

%Notice that for the special case where the vectors
%are binary, i.e., when each $v_i \in \{0, 1\}^d$,   
%$2$-neighboring 

%A symmetric neighboring relation~$\sim$
%on the input space of a single client induces a neighboring relation on input configuration
%in the natural way.  
\ignore{
Two input configurations $\v \sim \v'$ are neighboring
if they differ for at most one client~$i$, in which case $v_i \sim v_i'$. In our setting, the input of single client is a real-value vector. We define two vectors $v, v'$ are neighbors iff their distance is bounded by a global parameter $L$, i.e., $\Vert v-v' \Vert \le L$. Specifically, we use $L_1$ distance to define neighbor relation throughout the whole paper.
}

Henceforth, let $\sim$ denote some symmetric neighboring
relation defined over two input configurations 
$\v \in [-1, 1]^{d \cdot n}$ and 
$\v' \in [-1, 1]^{d \cdot n}$.

\begin{definition}[Local differential privacy (LDP)]
    \label{def:dp}
	A non-interactive mechanism $\algM$ satisfies $(\eps,\delta)$-LDP
%for $L$-neighboring inputs, 
w.r.t. the neighboring relation $\sim$, 
iff for any two input configurations 
$\v \in [-1, 1]^{d \cdot n}$ and 
$\v' \in [-1, 1]^{d \cdot n}$ 
such that $\v \sim \v'$, 
%w.r.t some symmetric \textbf{neighboring relation} $\sim$ on the input space iff for any (measurable) subset $S$ of views of the adversary and for all neighboring pairs of the input configurations $\v \sim \v'$, it holds that
it holds that 
	$$\Pr[\view_{\algM}(\v) \in S] \le e^\eps \Pr[\view_{\algM}(\v') \in S]+\delta$$ 
where $\view_{\algM}(\v)$ is a random variable
representing the server's view upon input configuration $\v$; in particular,
the view consists of all messages received by the server.
\end{definition}
If a mechanism satisfies $(\epsilon, 0)$-LDP, 
we also say 
that it satisfies $\epsilon$-LDP
(w.r.t. to some neighboring relation $\sim$).

\begin{definition}[Event-level LDP]
    \label{def:item-ldp}
We say that a mechanism satisfies $(\epsilon, \delta)$-event-level-LDP, 
iff it 
satisfies $(\epsilon, \delta)$-LDP w.r.t. the following neighboring relationship: 
two input configurations $\v = (v_1, \ldots, v_n) \in [-1, 1]^{d \cdot n}$
and $\v' = (v'_1, \ldots, v'_n) \in [-1, 1]^{d \cdot n}$
are considered neighboring, 
iff they differ in at most one position (i.e., one coordinate contributed
by one user).
\ignore{
iff the two vectors are otherwise identical except for  
at most one user's contribution $v_i$ and $v'_i$;
and further,  for the user $i$ where the two vectors differ, it must be
that $\Vert v_i - v'_i \Vert_0 \leq 1$.
}
\ignore{
    Define the neighboring relation as following: for two $k$ sparse vectors $v, v' \in [-1,1]^d$, $v\sim v'$ iff they only differ in one coordinate, i.e., 
    \begin{align*}
        \Vert v - v' \Vert_0 \le 1
    \end{align*}
    A non-interactive framework $\algM$ is $(\eps,\delta)$ \textbf{event-level local differentially private} if it is $(\eps,\delta)$-differential private w.r.t. to the neighboring relation defined above.  
}
\end{definition}

\begin{definition}[User-level LDP]
    \label{def:user-ldp}
We say that a mechanism satisfies $(\epsilon, \delta)$-user-level-LDP, 
iff it 
satisfies $(\epsilon, \delta)$-LDP w.r.t. the following neighboring relationship: 
two 
input configurations $\v$ and $\v'$ are considered neighboring
if they differ in at most one user's contribution.
\ignore{
    Define the neighboring relation as following: for any two $k$ sparse vectors $v, v' \in [-1,1]^d$, $v\sim v'$.
    
    A non-interactive framework $\algM$ is $(\eps,\delta)$ \textbf{user-level local differentially private} if it is $(\eps,\delta)$-differential private w.r.t. to the neighboring relation defined above.     
}
\end{definition}

\begin{definition}[LDP for $L$-neighboring]
    \label{def:user-ldp}
We say that a mechanism satisfies 
$(\epsilon, \delta)$-LDP
for $L$-neighboring, 
iff 
it satisfies $(\epsilon, \delta)$-LDP w.r.t. the following $L$-neighboring notion: 
two 
input configurations $\v$ and $\v'$ are considered $L$-neighboring, iff 
the two vectors are otherwise identical except for  
at most one user's contribution $v_i$ and $v'_i$;
and further,  for the user $i$ where the two vectors differ, it must be
that $\Vert v_i - v'_i \Vert_1 \leq L$.
\ignore{
    Define the neighboring relation as following: for two $k$ sparse vectors $v, v' \in [-1,1]^d$, $v\sim v'$ iff 
    \begin{align*}
        \Vert v - v' \Vert_1 \le L 
    \end{align*}
    
    A non-interactive framework $\algM$ is $(\eps,\delta)$ $L$-locally differentially private if it is $(\eps,\delta)$-differential private w.r.t. to the neighboring relation defined above.   }  
\end{definition}

For the case of $k$-sparse binary vectors where each client's $v_i \in \{0, 1\}^d$, 
the following simple facts hold.
A mechanism satisfies $(\epsilon, \delta)$-LDP for $1$-neighboring, 
if and only if it is $(\epsilon, \delta)$-event-level-LDP.
A mechanism satisfies $(\epsilon, \delta)$-LDP
for $k$-neighboring, if and only if it satisfies
$(\epsilon, \delta)$-user-level-LDP.
More generally, 
for the case of $k$-sparse real-valued vectors where each client's $v_i \in [-1, 1]^d$, 
the following facts hold.
If a mechanism satisfies $(\epsilon, \delta)$-LDP
for $2$-neighboring, 
it must also satisfy 
$(\epsilon, \delta)$-event-level-LDP.
If a mechanism satisfies 
$(\epsilon, \delta)$-LDP
for $2k$-neighboring,  
it must also satisfy  
$(\epsilon, \delta)$-user-level-LDP.

\ignore{
The $L$-LDP definition captures the essence of event-level LDP and user-level LDP by setting $L=2$ and $L=2k$ correspondingly. 

Our goal is to design a non-interactive LDP framework $\algM$ c that consists of a local randomizer $\Q$ and an aggregation algorithm $\cA$, such that $\cA$ outputs an estimation $\hat{v}$ for the mean vector $\bar{v}$. The framework should work for event-level LDP, user-level LDP and the $L$-LDP. We wish to find an $\algM$ such that the error $\Vert \hat{v} - \bar{v}\Vert$ is minimized and the communication cost between the clients and the server is low. We will mainly focus on the $L_\infty$ error throughout the paper. 
}

Throughout the paper, unless otherwise noted,
we use $L_\infty$-error to 
characterize the 
utility of our vector mean estimation mechanism.  
Specifically, 
$L_\infty$-error takes the maximum absolute error  
over all %coordinates $x \in [d]$.
$d$ coordinates.

For the special case where $k=1$, 
Bassily and Smith~\cite{bassily2015local} proved the following lower bound
on the error of any $(\epsilon, \delta)$-event-level-LDP mechanism ---
note also that for the case $k = 1$, 
event-level and user-level LDP are the same up to a constant factor.

%\elaine{is it event level LDP?}
\begin{theorem}[Lower bound on the error of single-item frequency 
estimation~\cite{bassily2015local}]
	\label{theorem:one_item_lower_bound}
Suppose that $k = 1$.
For any $\eps=O(1)$ and $0\le \delta \le o(\frac{\eps}{n\log n})$,
any non-interactive mechanism 
that satisfies $(\epsilon, \delta)$-event-level-LDP
must incur 
expected $L_\infty$ error of magnitude at least  
	$$\Omega\left(\min \left(\sqrt{\frac{\log(d)}{\eps^2n}}, 1\right)\right).$$ 

\ignore{
For any one round algorithm $\algM=(\Q,\cA)$, such that the local randomizer $\Q$ is $\eps$-DP, there exists a distribution $P\in\mathrm{simplex}(d)$ from which the inputs $v_i,i\in[n]$ are sampled i.i.d such that the expected $L_\infty$ error of $\cA$ w.r.t $P$ is 
	$$\Omega\left(\min \left(\sqrt{\frac{\log(d)}{\eps^2n}}, 1\right)\right).$$ 
}
\end{theorem}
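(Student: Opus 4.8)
The plan is to prove this known bound (due to Bassily and Smith) by reducing frequency estimation to a $d$-ary \emph{identification} problem and then combining a local-privacy information-contraction inequality with Fano's inequality. First I would construct a hard family of input configurations. Let $V$ be drawn uniformly from $[d]$, and conditioned on $V=v$ let each of the $n$ clients independently hold the $1$-sparse vector $e_{X_i}$, where $X_i\in[d]$ is sampled from a distribution $P_v$ that is uniform except for a small upward bump of size $\tau$ on item $v$ (so $P_v(v)=\tfrac1d+\tau$ and the remaining mass is spread uniformly). Then the true mean $\bar v$ has its $v$-th coordinate exceeding every other coordinate by $\Theta(\tau)$, and by concentration this gap survives at the empirical level with high probability. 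Consequently any estimate $\hat v$ with $L_\infty$ error below $\tau/3$ lets the server recover $V$ by outputting $\operatorname{argmax}_j \hat v_j$; by Markov's inequality, an expected $L_\infty$ error of $o(\tau)$ would make this decoder succeed with probability bounded away from $1/d$, i.e.\ solve the identification problem.

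Second, I would upper bound how much the server's view can reveal about $V$. The crucial ingredient is the strong data-processing inequality for $\eps$-LDP channels (in the spirit of Duchi--Jordan--Wainwright): for any $\eps$-LDP report, the KL divergence between the two \emph{output} marginals $M_v,M_{v'}$ induced by input distributions $P_v,P_{v'}$ is contracted as $D_{\mathrm{KL}}(M_v\|M_{v'})\le O(\eps^2)\,\|P_v-P_{v'}\|_{\mathrm{TV}}^2$. This is far stronger than the trivial data-processing bound and is exactly what injects the $\eps^2$ factor. Since two distinct bump locations satisfy $\|P_v-P_{v'}\|_{\mathrm{TV}}=O(\tau)$, each client contributes only $O(\eps^2\tau^2)$. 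Because the mechanism is non-interactive the reports $Z_1,\dots,Z_n$ are conditionally independent given $V$, so they tensorize and the uniform-prior mutual information is bounded by the average pairwise KL, yielding $I(V;\view_{\algM})\le O(n\eps^2\tau^2)$.

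Finally I would apply Fano's inequality: recovering $V$, uniform over $d$ alternatives, with probability bounded away from $1/d$ forces $I(V;\view_{\algM})=\Omega(\log d)$. Combining this with the upper bound gives $n\eps^2\tau^2=\Omega(\log d)$, i.e.\ $\tau=\Omega(\sqrt{\log d/(\eps^2 n)})$; and since all frequencies lie in $[0,1]$ the achievable gap, hence the forced error, is also at most a constant, which produces the $\min(\cdot,1)$ in the statement. For the $(\eps,\delta)$ version, the hypothesis $\delta\le o(\tfrac{\eps}{n\log n})$ is used to carry the contraction through: the tiny additive slack is converted, via a group-privacy argument over the $n$ users, into a negligible perturbation of the KL estimate, so the pure-LDP bound survives up to constants.

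The main obstacle is the second step. The naive data-processing inequality only gives $I(V;Z_i)\le I(V;X_i)$, which in this construction is as large as $O(d\tau^2)$ and yields a bound that is off by a factor of $\sqrt d$; obtaining the correct rate requires the \emph{output-marginal} KL contraction measured in total variation of the input distributions, i.e.\ a careful linearization of the $\eps$-LDP constraint around $\eps\approx 0$. Threading the small-$\delta$ relaxation through this contraction, rather than using the clean pure-LDP inequality, is the delicate technical point, and it is precisely why the theorem restricts $\delta$ to $o(\eps/(n\log n))$.
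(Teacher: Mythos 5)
Your proposal is correct in its core strategy and, for the key information-theoretic step, takes a genuinely different route from the one the paper uses (the paper proves this statement's $k$-sparse generalization, Theorem~\ref{th:main_lb}, in Appendix~\ref{sec:appendix-lb}, following Bassily--Smith). Note first that your hard instance is in fact the same as the paper's: ``uniform plus a $\tau$-bump at the hidden item $V$'' is exactly the $\eta$-degrading channel $\mathcal{P}^{(\eta)}_V$ with $\eta\approx\tau$ (with probability $\eta$ the client holds $V$, otherwise a uniform item). Where you genuinely diverge is the mutual-information bound. The paper composes the degrading channel with the local randomizer and invokes privacy amplification (Lemma~\ref{lemma:mutual2}: the composite map is $(O(\eta\eps),O(\eta\delta))$-DP) together with a generic bound on $I(V;\mathcal{Q}(V))$ for $(\eps,\delta)$-DP algorithms with uniform input (Lemma~\ref{lem:mutual}), giving $O(\eta^2\eps^2+\tfrac{\delta}{\eps}\log|\mathcal{S}|+\tfrac{\delta}{\eps}\log(\eps/\delta))$ per client. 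You instead invoke the Duchi--Jordan--Wainwright strong data-processing inequality, bounding the output-marginal KL by $O(\eps^2)\,\|P_v-P_{v'}\|_{\mathrm{TV}}^2=O(\eps^2\tau^2)$ and then using convexity to control the uniform-prior mutual information. Both routes deliver the same $\eps^2\tau^2$ per-client rate and feed into the same Fano argument; your SDPI route is more self-contained for pure $\eps$-LDP and makes the source of the $\eps^2$ factor transparent, while the paper's route has the advantage that its mutual-information lemma handles $(\eps,\delta)$ natively, and its debias-and-round decoder (rather than your argmax) is what generalizes beyond $k=1$.

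Two places in your sketch need repair. First, a wording issue: Fano forces $I(V;\view_{\algM})=\Omega(\log d)$ only when the decoder succeeds with probability bounded below by a \emph{constant}, not merely ``bounded away from $1/d$''; your Markov step does deliver a constant success probability (expected error $\le\tau/10$ implies success with probability at least, say, $7/10$ minus the concentration failure probability), so the argument survives but the stated condition does not. Second, and more substantively, the $\delta>0$ extension: ``group privacy over the $n$ users'' is not the relevant tool here---group privacy concerns a single dataset differing in many entries, whereas the issue is that the SDPI you rely on holds only for pure $\eps$-LDP randomizers, and each individual report carries its own $\delta$ slack. To complete your route you would need an approximate-to-pure reduction for local randomizers (replace each $(\eps,\delta)$-LDP randomizer by an $O(\eps)$-LDP one within small total variation on every input, then account for the $n$ resulting perturbations of the mutual information), after which the hypothesis $\delta=o(\tfrac{\eps}{n\log n})$ makes the accumulated slack $o(\log d)$. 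This is exactly the role played in the paper by the $\tfrac{\delta}{\eps}\log|\mathcal{S}|$ terms of Lemma~\ref{lem:mutual}, which, summed over $n$ clients, stay below $\tfrac{1}{2}\log|\mathcal{S}|$ under that hypothesis. As written, this step is a gap in your proposal, though a reparable one.
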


\ignore{
\begin{theorem}[Lower Bound on the Error of the Single Item Frequency Estimation]\cite{bassily2015local}
	\label{theorem:one_item_lower_bound}
	For any $\eps=O(1)$ and $0\le \delta \le o(\frac{1}{n\log n})$. For any one round algorithm $\algM=(\Q,\cA)$, such that the local randomizer $\Q$ is $\eps$-DP, there exists a distribution $P\in\mathrm{simplex}(d)$ from which the inputs $v_i,i\in[n]$ are sampled i.i.d such that the expected $L_\infty$ error of $\cA$ w.r.t $P$ is 
	$$\Omega\left(\min \left(\sqrt{\frac{\log(d)}{\eps^2n}}, 1\right)\right).$$ 
\end{theorem}
}

\section{Sparse Vector Mean Estimation}

\subsection{Algorithm}
We give a unified 
algorithm that can be parametrized
to achieve either event-level or user-level
LDP, or LDP under $L$-neighboring.
%We want to design an algorithm 
%which can be configured to satisfy
%different levels of LDP requirement.  
Our proposed algorithm is presented in Algorithm~\ref{alg:sparse-vector-agg}. 
%We also discuss the an extra clipping step that improves the algorithm from $(\eps,\delta)$-LDP to $(\eps,0)$-LDP under user-level LDP setting in Section~\ref{sec:improving}.

\begin{table*}[t]
% \footnotesize
    \centering
    \label{tab:error}
    \begin{adjustbox}{width=\linewidth}
    \begin{tabular}{c|ccccc}
\toprule
\textbf{Cases}     & \textbf{\#Bins $b$}  & \textbf{Clipping Range $\eta$} & \textbf{Laplacian Noise Magnitude $\Delta$}       & \textbf{$L_\infty$ Error}                                                                    & \textbf{Comm. Cost}           \\ \midrule
Event-level LDP($L=2$)    & $\frac{\eps^2 k}{4}$                  & $\infty$                              & $2$                         & $O(\frac{1}{\eps}\sqrt{\frac{\log(d/\beta)}{n}})$                                 & $O(k\log k)$                  \\
$L\le \sqrt[3]{k}$ & $\frac{\eps^2 k}{L^2}$      & $\infty$                              & $L$                         & $O(\frac{1}{\eps}L\sqrt{\frac{\log(d/\beta)}{n}})$                                & $O(\frac{k\log k}{L^2})$      \\
\\[-0.5em]
$L\ge \sqrt[3]{k}$ & $\sqrt{\frac{\eps^2 k}{L\log \frac{1}{\delta}}}$ & $\infty$                              & $3\sqrt{bL\log(2b/\delta)}$ & $O(\frac{1}{\eps}(kL\log(k/L\delta))^{\frac{1}{4}}\sqrt{\frac{\log(d/\beta)}{n}})$ & $O(\sqrt{\frac{k}{L}}\log k)$ \\
User-level LDP ($L=2k$)    & 1                    & $\sqrt{2k\log (4n/\beta)}$    & $2\eta$                     & $O(\frac{1}{\eps}\sqrt{k\log (n/\beta)}\sqrt{\frac{\log(d/\beta)}{n}})$                   & $O(\log k)$                   \\ \bottomrule
\end{tabular}
\end{adjustbox}
     \vspace{1ex}
     
    \caption{Summary of the parameters under different LDP settings, the corresponding utility guarantee and the communication costs.}
\end{table*}

\label{sec:sparse-vec}
%\mz{Now the technical part should be consistent. But this section contains too many proofs and I am afraid that it's very difficult to understand. We can probably move something to the appendix.}
%\tw{definitely move them to appendix, at least the unimportant ones}

\begin{algorithm}
	\caption{Non-interactive Algorithm for $k$-Sparse Vector Mean Estimation}\label{alg:sparse-vector-agg}
	%\KwIn{Clients' inputs $\v=\{v_i \in [-1,1]^d: i\in [n]\}$, the privacy parameter $\eps,\delta$.}
	%\frac{1}{n}
    %\KwOut{An estimation vector $\hat{v} \in \mathbb{R}^d$} 
    Parametrize bin number $b$, 
    clipping range $\eta$,
    noise parameter $\Delta$ according to Table~\ref{tab:error}\;
	%\textbf{Generation of public parameters: } \\
	%$\gamma \get \sqrt{\frac{\log (2d/\beta)}{\eps^2n}}$ 
	%$m\get \frac{\log(d+1)\log(2/\beta)}{\gamma^2}$ 
	%$\Phi \get \mathtt{GenProj}(m, d)$ 
	%$d^+ \get \inf_{t\in \mathbb{N}^+}\{2^t | 2^t \ge d\}$ 
	%$\had \get \mathtt{HadamardMatrix}(d)$ 
	\textbf{Client-side algorithm given input vector $v_i$: }\\
		Randomly pick a hash function $h_i:[d]\to [b]$ \\
	Randomly pick a hash function $s_i:[d]\to \{-1,+1\}$ \\ 
    %Define clipping range $\eta=\min(k, \sqrt{2k\log(4nb/\beta)})$\\
    %Define noise parameter $\Delta=\min(L, 2\eta b)/\eps$ \\
    %\textbf{Local randomizing process: } \\
	\For{$j \in [b]$}{
		$B_{i,j}\get \sum_{l\in[d],h_i(l)=i} s_i(l)v_l$ \\
        $\bar{B}_{i,j}=\textit{clip}_{[-\eta,+\eta]}(B_{i,j})$ \\
{\it /*clipping needed only for pure user-level LDP */} \\
		%$z_i\get \had v_i$ 
		%Randomly sample an index $j\in[d]$ 
		%Sample a Laplacian noise $r_j\sim {\sf Lap}(\frac{\Delta}{\eps})$ \\
        $\tilde{B}_{i,j} \get \bar{B}_{i,j} + {\sf Lap}(\frac{\Delta}{\eps})$ 
		%$\tilde{z}_i=d(z_{i,j}+r_i)e_j$ 
        %$T[i] \get \tilde{B}_i$ 
	}
	Send $(h_i,s_i,\tB_{i, 1},\dots,\tB_{i, b})$ to the server\\
	\textbf{Server-side algorithm:} \\
	For all coordinate $x\in[d]$:
	$\hat{v}_x \get \frac{1}{n}\sum 
    _{i\in[n]}s_{i}(x) \tB_{i, h_i(x)}$ \\
	%Output $\hat{v}_x$ as the mean estimation 
\end{algorithm}

In the above algorithm, the clipping
algorithm is needed only if we want to 
achieve $(\epsilon, 0)$-DP under user-level LDP --- see
Section~\ref{sec:improving} for more details.
For all other cases, we achieve $(\epsilon, \delta)$-LDP.

Further, in the above
algorithm, we assumed that the server computes the
entire mean vector. 
However, when the universe size $d$ is very large (e.g., the space
of all possible URLs), the server may not want to write
down the entire mean vector. Instead, it may wish
to query $\hat{v}_x$ for a specific item $x \in [d]$, 
e.g., the frequency of a specific URL.  
In this case, the server need not iterate through
every $x \in [d]$, it only needs to invoke 
Line 13
for the items $x\in [d]$
that it cares about.

\ignore{
\begin{remark}
The server-side algorithm compute the mean estimation for each coordinate individually. When $d$ is very large, the server usually wants to compute the mean values for a small subset of $X\subset [d]$. Then, the server only need to compute $\hat{v}_x$ for $x\in X$ correspondingly.
\end{remark}
}

\paragraph{Discreting real-numbers 
for transmission.}
In the above algorithm, we assumed that the client
is transmitting real-valued numbers to the server.
In practice, we can truncate and discretize real-valued
numbers before transmitting, 
and ensure that 
the per-client communication cost is only $O(b\log k)$.
The additional error introduced in the discretization process
is asymptotically absorbed by the existing error terms, 
and therefore this step does not introduce any additional asymptotical error.
See Appendix~\ref{sec:communication}
for details.

%\begin{remark}
%The clipping step only takes place in our optimization to achieve $(\eps,0)$-user-level-LDP. We provide more details in Section~\ref{sec:improving}.
%\end{remark}

\ignore{
\begin{remark}
In order to efficiently transmit hash functions and real values, a pre-communication processing is needed. 
It is omitted in Algorithm~\ref{alg:sparse-vector-agg} and the detailed description is deferred to Appendix~\ref{sec:communication}.
Given proper truncating and discretization, the process satisfies the privacy requirement and maintains the same utility guarantee asymptotically. The per-client communication cost is $O(b\log k)$ . 
\end{remark}
}

% \begin{table}[!t]
% % \footnotesize
%     \centering
%     \caption{Notations.}
%     \label{table:notations}
%     \setlength{\tabcolsep}{1.5mm}
%     \begin{tabular}{c c}
%     \toprule
%     \bf{Notation} & \bf{Meaning}\\  
%     \midrule
%     %$c$ & user index from $1$ to $n$\\
%     $v$ & user input (the $k$-sparse vector) of size $d$\\  
%     %$V$ & same as $v$?\\  
%     $\v$ & the set of all users' inputs(input configuration) \\
%     $L$ & ``sensitivity'': two input vectors differ by $L$ in $L_1$ norm\\
%     $b$ & bin number \\
%     $B, \bar{B}, \tilde{B}$ & bins, clipped bins, and Laplacian-noisy bins\\
%     $h, s$ & the hash function from $[d]$ to $[b]$ and $\{-1,+1\}$\\
%     %$T$ & essentially $\tilde{B}$, but introduced for ease of presentation\\
    
%     \bottomrule
%     \end{tabular}
% \end{table}

\begin{theorem}[Main theorem]
    \label{thm:main-utility}
	 Assuming $nk/b\ge \log(5d/\beta)$. Assuming the hash functions $h$ and $s$ are random oracles. Algorithm~\ref{alg:sparse-vector-agg} satisfies $L$-neighboring $(\eps,\delta)$-LDP. 
	 With probability at least $1-\beta$, the algorithm~\ref{alg:sparse-vector-agg} outputs an estimation $\hat{v}$ with $L_\infty$ error of $O\left(\left(\sqrt{\frac{k}{b} }+\frac{\Delta}{\eps}\right)\sqrt{\frac{\log(d/\beta)}{n}}\right)$. The per-client communication cost is $O(b\log k)$.
	 %Here, $\eta=\min(k,\sqrt{2k\log(4nb/\beta)})$.
\end{theorem}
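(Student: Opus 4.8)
The plan is to establish the three claims—privacy, utility, and communication—separately, treating the clipping step as active only in the pure user-level regime and otherwise ignoring it.

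For privacy I would fix an arbitrary realization of all the hash functions $\rho = (h_i, s_i)_{i\in[n]}$; these are independent of the input and form part of the server's view, so I may condition on them. Given two $L$-neighboring configurations $\v\sim\v'$ differing only in user $i^*$, only the bin-count vector $(B_{i^*,j})_{j\in[b]}$ changes, so it suffices to bound its $L_1$ sensitivity. There are two bounds: the trivial one $\sum_j |B_{i^*,j}-B'_{i^*,j}| \le \sum_l |v_{i^*,l}-v'_{i^*,l}| \le L$, valid for every $\rho$; and a concentration bound, valid for a $(1-\delta)$-fraction of $\rho$, showing $\|B_{i^*}-B'_{i^*}\|_1 = O(\sqrt{bL\log(b/\delta)})$. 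The latter holds because, for each fixed bin, $B_{i^*,j}-B'_{i^*,j} = \sum_{l:\,h(l)=j} s(l)(v_{i^*,l}-v'_{i^*,l})$ is a mean-zero Rademacher sum whose variance is the $L_1$ mass of the difference landing in bin $j$; a Hoeffding/Bernstein bound plus a union bound over the $b$ bins gives the claim. Letting $G$ be the event that this sensitivity is at most $\Delta$, I would integrate over $\rho$: on $G$ the noise $\Lap(\Delta/\eps)$ gives pure $\eps$-DP conditioned on $\rho$, while the complement contributes at most $\Pr[\bar G]\le\delta$, so overall we get $(\eps,\delta)$-LDP. In the event-level and small-$L$ regimes $\Delta=L$ matches the worst-case bound, so $\delta=0$; in the user-level regime clipping to $[-\eta,\eta]$ caps the per-bin sensitivity at $2\eta=\Delta$, again giving pure $\eps$-DP.

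For utility the key identity is the error decomposition. Since $s_i(x)^2=1$, substituting the estimator gives $\hat v_x-\bar v_x = E_{\mathrm{coll}}+E_{\mathrm{noise}}$, where $E_{\mathrm{noise}} = \frac1n\sum_i s_i(x)\,\Lap(\Delta/\eps)$ is a normalized sum of $n$ i.i.d.\ Laplace variables and $E_{\mathrm{coll}} = \frac1n\sum_i \sum_{l\neq x:\,h_i(l)=h_i(x)} s_i(x)s_i(l)\,v_{i,l}$ collects the items colliding with $x$. I would bound each for fixed $x$ and then union-bound over all $d$ coordinates. For $E_{\mathrm{noise}}$, sub-exponential (Laplace) tail concentration yields $|E_{\mathrm{noise}}| = O(\frac{\Delta}{\eps}\sqrt{\log(d/\beta)/n})$ except with probability $\beta/(2d)$. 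For $E_{\mathrm{coll}}$ I would condition first on $h$: the collision count $N_x = \sum_i |\{l\neq x:\, h_i(l)=h_i(x),\, v_{i,l}\neq 0\}|$ has expectation at most $nk/b$, and the hypothesis $nk/b \ge \log(5d/\beta)$ lets a Chernoff bound give $N_x = O(nk/b)$ with the required probability. Conditioned on $h$, the products $s_i(x)s_i(l)$ are independent Rademacher variables, so $\sum_i\sum_l s_i(x)s_i(l)v_{i,l}$ is a mean-zero sub-Gaussian sum of total variance at most $N_x$, concentrating to $O(\sqrt{N_x\log(d/\beta)}) = O(\sqrt{(nk/b)\log(d/\beta)})$; dividing by $n$ gives $|E_{\mathrm{coll}}| = O(\sqrt{k/b}\,\sqrt{\log(d/\beta)/n})$. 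Summing the two bounds and union-bounding over the $d$ coordinates yields the stated $L_\infty$ error.

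The clipping case needs one extra step: I would show that with probability $1-\beta$ no client's bin value is clipped, so the decomposition applies unchanged. With $b=1$, $B_{i,1} = \sum_l s_i(l)v_{i,l}$ is a Rademacher sum of variance at most $k$, hence $|B_{i,1}| \le \sqrt{2k\log(4n/\beta)}=\eta$ except with probability $\beta/(2n)$; a union bound over clients shows clipping is inactive on the good event. Communication is then immediate: each client sends two PRF seeds plus $b$ discretized bin values, each bounded (after clipping/truncation, per Appendix~\ref{sec:communication}) by $\tilde O(k)$ and thus encodable in $O(\log k)$ bits, for $O(b\log k)$ total. I expect the main obstacle to be the analysis of $E_{\mathrm{coll}}$: the collision count and the Rademacher sum are coupled through the shared hash functions, so the argument must condition in two stages—first on $h$ to control $N_x$, then on the signs—and it is precisely here that the assumption $nk/b \ge \log(5d/\beta)$ is used, keeping the collision count concentrated tightly enough for the union bound over all $d$ coordinates to succeed.
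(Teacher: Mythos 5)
Your proposal is correct and follows essentially the same route as the paper's proof: the same privacy argument (worst-case $L_1$ sensitivity bound of $L$, plus a per-bin Bernstein/union-bound concentration showing sensitivity $O(\sqrt{bL\log(b/\delta)})$ on a $(1-\delta)$-fraction of hash realizations, with the failure event absorbed into $\delta$), the same utility decomposition into collision, clipping, and Laplace-noise terms with identical bounds, the same sub-Gaussian argument that clipping at $\eta=\sqrt{2k\log(4n/\beta)}$ is inactive with high probability, and the same discretization argument for the $O(b\log k)$ communication. The one divergence is the collision term, where you condition in two stages (a Chernoff bound on the collision count $N_x$ given $h$, then a Rademacher bound on the signs), whereas the paper applies Bernstein's inequality once to the variables $Y_{i,l}=\I[h_i(l)=h_i(x)]\,v_{i,l}\,s_i(l)s_i(x)$, which are mutually independent across all pairs $(i,l)$ with $l\neq x$ because the collision indicators and the sign products are each i.i.d.\ once one conditions on (and then integrates out) $h_i(x)$ and $s_i(x)$ — so, contrary to your closing remark, the coupling through the shared hash functions does not force two-stage conditioning, though your version is equally valid and matches the paper's informal random-walk intuition.
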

Note that in practice, we can instantiate $h$ and $s$ with pseudorandom functions
(PRFs)
rather than random oracles. As mentioned earlier, 
the computational assumption here 
is not needed for the privacy but only for measure concentration. 

We present the privacy-related proof in Section~\ref{sec:privacy-analysis} and the utility-related proof in Section~\ref{sec:utility-analysis}. For the communication cost analysis, see the discussion in Appendix~\ref{sec:communication}.

\subsection{Privacy Analysis}
\label{sec:privacy-analysis}

Notice that the general $L$-neighboring LDP notion captures the requirement of event-level LDP($L=2$) and user-level LDP($L=2k$).
Therefore, we only need to prove our algorithm is $L$-neighboring LDP and instantiate with corresponding $L$ value for event- and user-level LDP. For now, we take the bin number $b$ as an unspecified variable and we will provide the optimal selection of $b$ later in the utility section.

Given two neighboring input configuration $\v,\v'$, from which one client's inputs are different, denoted as vectors $v,v'$. 
Then, $\Vert v-v' \Vert_1 \le L$.
We wish to bound the $L_1$ difference for the ``raw 
bin values'' $B_1,\dots,B_b$ and $B'_1,\dots,B'_b$
generated by two independent invocations of the client's algorithm. 

\begin{claim}
\label{claim:lap-noise}
Given any two neighboring vectors $v,v'$. 
Taking the randomness of $h$ and $s$,
if $\Pr[\sum_{j\in[b]}|B_j - B'_j| \ge \Delta]\le \delta$,
then adding Laplacian noise of $\Lap(\frac{\Delta}{\eps})$ ensures the two invocations of the client-side algorithm's output
distributions are $(\eps,\delta)$-close.
\end{claim}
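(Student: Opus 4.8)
The plan is to reduce the claim to the standard Laplacian mechanism, combined with a conditioning argument that isolates the small-probability event on which the bin-value sensitivity is too large. First I would record the key deterministic observation: once the hash functions $h$ and $s$ are fixed, the raw bin vector $\mathbf{B} = (B_1, \ldots, B_b)$ becomes a deterministic function of the client's input, and the client's message is exactly $(h, s, \mathbf{B} + \boldsymbol{\zeta})$ where $\boldsymbol{\zeta}$ is a vector of $b$ independent $\Lap(\Delta/\eps)$ samples. Hence, conditioned on $(h,s)$, the released noisy bins are precisely the $b$-dimensional Laplacian mechanism applied to $\mathbf{B}$, whose relevant $L_1$ sensitivity across the two inputs is $\sum_{j \in [b]} |B_j - B'_j|$.

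Next I would invoke the standard Laplacian-mechanism guarantee: whenever $\sum_{j}|B_j - B'_j| \le \Delta$, the ratio of output densities at any point $\mathbf{z}$ is at most $\exp\left(\frac{\eps}{\Delta}\sum_j |B_j - B'_j|\right) \le e^\eps$, so the two conditional output distributions are $\epsilon$-close (with zero additive slack). The crucial structural point is that $h$ and $s$ are sampled from the same distribution regardless of the client's input; I would therefore couple the two executions to share identical $(h,s)$ and treat all probabilities as expectations over this common randomness. Define the good set $\mathcal{G}$ of hash pairs for which $\sum_j |B_j - B'_j| \le \Delta$; by hypothesis $\Pr[(h,s) \in \mathcal{G}] \ge 1 - \delta$.

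To assemble the bound, for any measurable output event $T$ I would write $\Pr[\mathrm{out}(v) \in T] = \mathbb{E}_{(h,s)}[p_v(h,s)]$, where $p_v(h,s)$ denotes the conditional probability of $T$ given the revealed hash pair. Splitting this expectation over $\mathcal{G}$ and its complement, on $\mathcal{G}$ the pointwise inequality $p_v(h,s) \le e^\eps\, p_{v'}(h,s)$ holds by the Laplacian guarantee, whereas on the complement I would simply bound $p_v(h,s) \le 1$ and absorb the total contribution into $\Pr[(h,s) \notin \mathcal{G}] \le \delta$. This gives $\Pr[\mathrm{out}(v) \in T] \le e^\eps \Pr[\mathrm{out}(v') \in T] + \delta$, which is exactly the desired $(\eps,\delta)$-closeness.

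The main obstacle, and the only genuinely delicate point, is the role of the revealed hash functions: since $(h,s)$ are part of the output, the effective sensitivity is itself random and occasionally exceeds $\Delta$. The conditioning-and-coupling step is what cleanly separates the two regimes, letting pure $\epsilon$-DP handle the typical hash pairs while pushing the rare high-sensitivity pairs entirely into the additive $\delta$ term. I would also note in passing that when clipping is active it is a coordinatewise contraction, so $\sum_j |\bar{B}_j - \bar{B}'_j| \le \sum_j |B_j - B'_j|$, and the hypothesis phrased in terms of the raw values $B_j$ remains sufficient.
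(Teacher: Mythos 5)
Your proof is correct and follows essentially the same route as the paper's: condition on the shared randomness $(h,s)$, apply the standard Laplacian density-ratio bound on the good event where $\sum_{j\in[b]}|B_j-B'_j|\le\Delta$, and absorb the bad event into the additive $\delta$ term (with the same observation that clipping only contracts the bin differences). If anything, your explicit splitting of $\Pr[\mathrm{out}(v)\in T]$ over the good set $\mathcal{G}$ and its complement spells out the final step more carefully than the paper's terse closing sentence.
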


The proof is simple that one can compute the privacy budget loss in each bin and the total budget will be bounded by $\eps$. 
We defer the proof to the appendix~\ref{sec:privacy-proof}.

For any $h, s$, rewrite $\sum_{j\in[b]}|B_j-B'_j|=  \sum_{j\in[b]}\left|\sum_{l\in[d], h(l)=j} s(l)(v_j-v'_j)\right|$. 
With absolute inequality, the $s(l)$ term can be removed and the above expression is at most $\sum_{j\in[b]}\sum_{l\in[d], h(l)=j} |v_j-v'_j| $, which is exactly $L$. This proves the privacy property when $L\le k^{\frac{1}{3}}$ where we set $\Delta=L$.

Moreover, we want to further prove that the difference after the binning is bounded by $\tilde{O}(\sqrt{bL})$. 
The intuition is that the binning process ``squeezes'' the difference, so that we can add smaller noise.
For example, let's say a pair of neighboring vectors $v,v'$ differ in coordinates $x$ and $y$. Say $v_x=v_y=1$ and $v'_x=v'_y=-1$.
The original $L_1$ difference in the two coordinates are 4.
Suppose the client samples a hash function $h$ such that $h(x)=h(y)$.
Then, we know that with probability $1/2$, the random $\pm 1$ function $s$ turns out to have outputs that $s(x)=-s(y)$.
In this case, the influence in coordinates $x$ and $y$ , i.e., $|v_x-v'_x|$ and $|v_y-v_y'|$, cancel each other out in the bin because $|(s(x)v_x + s(y)v_y) - (s(x)v'_x + s(y)v'_y)|=0$. We formally claim the following lemma: 
\begin{lemma}
\label{claim:squeeze}
Assuming $L/b \ge \log(2b/\delta)$. 
Consider any two neighboring vectors $v,v'\in [-1,+1]^d$ such that $\Vert v-v'\Vert_1 \le L$.  
We have that $\Pr\left[\sum_{j\in[b]}|B_j-B'_j|> 3\sqrt{bL\log (2b/\delta)}\right]\le \delta$. 
%Here, $B_1,\dots,B_b$ and $B'_1,\dots,B'_b$ are the ``raw bin values'' generated by two independent invocations of the client-side algorithm~\ref{alg:sparse-vector-agg} with input $v$ and $v'$ correspondingly.
\end{lemma}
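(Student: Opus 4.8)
The plan is to track the per-coordinate differences $w_l := v_l - v'_l$, for which the hypotheses give $\Vert w\Vert_1 \le L$ and $\Vert w\Vert_\infty \le 2$ (since $v_l, v'_l \in [-1,1]$). Writing $X_j := B_j - B'_j = \sum_{l : h(l) = j} s(l)\, w_l$, the quantity to control is $\sum_{j\in[b]} |X_j|$. I would separate the two sources of randomness — the bin assignment $h$ and the signs $s$ — by conditioning on $h$ first, exploiting that $X_j$ is then a weighted Rademacher sum, and recombining the per-bin bounds using only a \emph{deterministic} global bound on the per-bin second moments.

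Concretely, fix $h$ and set $V_j := \sum_{l: h(l)=j} w_l^2$. The key deterministic observation, valid for every $h$, is that $\sum_{j\in[b]} V_j = \sum_{l\in[d]} w_l^2 \le \Vert w\Vert_\infty \cdot \Vert w\Vert_1 \le 2L$, since the binning merely partitions the coordinates. Conditioned on $h$, each $X_j$ is a sum of independent terms $s(l)\,w_l$ with $s(l)$ uniform in $\{-1,+1\}$, hence sub-Gaussian with variance proxy $V_j$, so Hoeffding's inequality gives $\Pr[|X_j| > t \mid h] \le 2\exp(-t^2/(2V_j))$. Choosing $t_j = \sqrt{2 V_j \log(2b/\delta)}$ makes each tail at most $\delta/b$, so a union bound over the $b$ bins yields, for every fixed $h$, that with probability at least $1-\delta$ over $s$ one has $|X_j| \le \sqrt{2V_j \log(2b/\delta)}$ simultaneously for all $j$.

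On this event I would finish by Cauchy--Schwarz: $\sum_j |X_j| \le \sqrt{2\log(2b/\delta)}\,\sum_j \sqrt{V_j} \le \sqrt{2\log(2b/\delta)}\,\sqrt{b\sum_j V_j} \le \sqrt{2\log(2b/\delta)}\cdot\sqrt{2bL} = 2\sqrt{bL\log(2b/\delta)} \le 3\sqrt{bL\log(2b/\delta)}$. Because the global bound $\sum_j V_j \le 2L$ holds for \emph{all} $h$, the conditional statement integrates over the choice of $h$ to the claimed unconditional probability $1-\delta$, which completes the proof.

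The step I expect to be the main obstacle — and the likely reason the hypothesis $L/b \ge \log(2b/\delta)$ appears in the statement — is the handling of the randomness of $h$. The naive route is to first argue that each bin's mass $V_j$ concentrates around its mean $\tfrac1b\sum_l w_l^2 \le 2L/b$ and then substitute $V_j \approx 2L/b$ into the per-bin tail bound; bounding $V_j$ this way via a Bernstein inequality (the summands $w_l^2$ are bounded only by $4$, not by something small) needs precisely $L/b \gtrsim \log(2b/\delta)$ so that the fluctuation term does not swamp the mean, and summing the resulting $b$ contributions of size $O(\sqrt{(L/b)\log(2b/\delta)})$ reproduces the same $O(\sqrt{bL\log(2b/\delta)})$ bound. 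The cleaner alternative sketched above avoids per-bin concentration altogether by keeping $V_j$ symbolic and invoking only the exact identity $\sum_j V_j = \Vert w\Vert_2^2$, so it needs no lower bound on $L/b$; I would present that version, treating the stated hypothesis as harmless slack.
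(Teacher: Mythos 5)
Your proof is correct, and it takes a genuinely different route from the paper's. The paper works bin by bin with \emph{both} sources of randomness at once: for a fixed bin $j$ it applies Bernstein's inequality to the independent variables $Z_l=\I[h(l)=j]\,s(l)(v_l-v'_l)$, whose second moments sum to at most $2L/b$, obtaining the \emph{uniform} per-bin bound $|B_j-B'_j|\le 3\sqrt{(L/b)\log(2b/\delta)}$ with probability $1-\delta/b$, and then multiplies by $b$ via a union bound. This is exactly where the hypothesis $L/b\ge\log(2b/\delta)$ enters: it makes the Bernstein linear term $\tfrac{2}{3}t$ comparable to the variance term, so the exponent is at least $\log(2b/\delta)$. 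You instead condition on $h$, treat each $X_j$ as a Rademacher sum with \emph{realized} variance $V_j=\sum_{l:h(l)=j}w_l^2$, apply plain Hoeffding to get $|X_j|\le\sqrt{2V_j\log(2b/\delta)}$ simultaneously for all $j$, and aggregate with Cauchy--Schwarz using the deterministic identity $\sum_j V_j=\Vert w\Vert_2^2\le\Vert w\Vert_\infty\Vert w\Vert_1\le 2L$, valid for every realization of $h$. What each approach buys: the paper's argument yields uniform per-bin control (each individual bin's difference is small), which could be reused elsewhere, but it pays for this with the assumption on $L/b$ and a heavier concentration tool; your argument is more elementary (Hoeffding only), gives a slightly better constant ($2$ instead of $3$), and --- as you correctly note --- needs no lower bound on $L/b$ at all, since you never require the bin masses $V_j$ to concentrate around $L/b$. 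Your diagnosis of why the hypothesis appears in the statement is accurate: it is an artifact of the paper's Bernstein-based route, not intrinsic to the claim.
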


\begin{proof}
    %Fix the hash function $h$ and random sign function $s$.
    Fix a bin $j$. Define random variables $ Z_l=\I[h(l)=j] s(l)(v_l-v'_l)$ for $l\in[d]$ and $Z=\sum_{l\in [d]} Z_l$. $Z$'s distribution is exactly the difference in bin $j$ after bining. We know that the variables $\{Z_l\}_{l\in[d]}$ are independent and bounded by $[-2,2]$. Also, $\E[Z_l^2] = \frac{1}{b} (v_l-v'_l)^2 \le \frac{2}{b} (v_l - v'_l)$. Let $\mu=\frac{L}{b}$. Using Bernstein's inequality, setting $t=3\sqrt{\mu\log (2b/\delta)}$,  we have

    \begin{align*}
        & \Pr[|Z| \ge t] 
        \le  2\exp\left(-\frac{\frac{1}{2}t^2}{\sum_{l\in [d]}\E[Z_l^2] + \frac{2}{3}t}\right) \\
        \le & 2\exp\left(-\frac{\frac{9}{2}\mu\log (2b/\delta)}{2\mu + 2\sqrt{\mu\log (2b/\delta)}}\right) 
        =  2(\delta/2b)^{\frac{\frac{9}{2}\mu}{2\mu + 2\sqrt{\mu\log (2b/\delta)}}}
    \end{align*}
    
    Using the condition that $\mu=\frac{L}{b}\ge \log(b/\delta)$, we have $\frac{\frac{9}{2}\mu}{2\mu + 2\sqrt{\mu\log (2b/\delta)}} \ge \frac{9}{8} \ge 1$. Therefore, $\Pr\left[|Z|\ge 3\sqrt{\mu \log (2b/\delta)}\right]\le \delta/b$. Taking the union bound over all $b$ bins, we have the total difference in all bins are at most $3\sqrt{bL\log(2b/\delta)}$ with probability at least $1-\delta$.
    %Therefore, adding the Laplacian noise with parameter $\frac{3}{\eps}\sqrt{bL\log(b/\delta)}$ can ensure the total privacy budget are at most $\eps$ with prob. at least $1-\delta$.
\end{proof}

By Claim~\ref{claim:lap-noise} and Lemma~\ref{claim:squeeze}, setting the noise parameter to $\Delta= 3\sqrt{bL\log(2b/\delta)}$ is enough to achieve $(\eps,\delta)$-privacy under $L$-Neighboring LDP, assuming $L/b\ge \log(2b/\delta)$. 
In practice, when we search for the optimal $b$, we will carefully set $b$ such that the condition $L/b\ge \log(2b/\delta)$ is held.

\subsection{Utility Analysis}
\label{sec:utility-analysis}

We first provide the simplified version of the utility part for the main theorem for general parameter settings -- bin number $b$, clipping range $\eta$ and the noise parameter $\Delta$.
The full proof is deferred to Appendix~\ref{sec:full-proof}.
Then, we will discuss how to choose the optimal $b$ to achieve the best utility under different scenarios. 

\begin{proof}
\textit{(Sketch).}

\noindent Fix an index $x\in[d]$.
The server computes the estimation
$\hat{v}_x$ as $\hat{v}_x = \frac{1}{n}\sum_{i\in[n]} s_i(x)\tB_{i, h_i(x)}$. 
We bound the error by the three steps: 1) binning; 2) clipping; 3) adding Laplacian noise.

\paragraph{Binning error.} The absolute error of binning is $\left|\frac{1}{n}\sum_{i\in[n]} v_{i,x} - \frac{1}{n}\sum_{i\in[n]} B_{i, h_i(x)}s_i(x)\right|$. 
Define random variables $Y_{i,l}=\I[h_i(l)=h_i(x)]v_{i,l}s_i(l)s_i(x)$ as the error introduced by coordinate $l\ne x$ in client $i$'s vector. 
The error is equal to $|\frac{1}{n}\sum_{i\in[n]}\sum_{l\in[d],l\ne x}Y_{i,l}|$.
Since we model the hash function $h_i$ as random oracle, the hash collision probability is $\frac{1}{b}$. Also, the hash function $s_i$ is a uniform $\pm 1$ function, so we have $Y_{i,l}=v_{i,l}$ with prob. $\frac{1}{2b}$ and $Y_{i,l}=-v_{i,l}$ with prob. $\frac{1}{2b}$. 
We consider all $nk$ non-zero coordinates and we can use the analysis for a zero-mean random walk with total length of $\frac{nk}{b}$. 
Using Berstein's inequality, we can prove that for all coordinate $x$, $ |\frac{1}{n}\sum_{i\in[n]} v_{i,x} - \frac{1}{n}\sum_{i\in[n]} B_{i, h_i(x)}s_i(x)| =O(\sqrt{\frac{k}{b}}\sqrt{\frac{\log(d/\beta)}{n}})$ with probability at least $1-O(\beta)$.

\paragraph{Clipping error.} We actually prove that the clipping range is large enough, so that the clipping error is zero with high probability. We directly compute the raw bin value $B_{i,j}$'s moment generating function and conclude that it is a sub-Gaussian r.v. with a variance at most $k$.
That means the absolute bin values will roughly be $\tilde{O}(\sqrt{k})$. Set the clipping range to $\eta\ge\sqrt{2k\log(4nb/\beta)}$. Using concentration bound for sub-Gaussian variables and taking union bound over all $nb$ bins across $n$ clients, we conclude the probability of the clipping error being zero is at least $1-O(\beta)$.

\paragraph{Laplacian noise error.} 
Finally, we look at the absolute error term introduced by adding Laplacian noise. It turns out that the error's distribution is the same as the distribution for the mean of $n$ i.i.d. Laplacian variables with parameter $\frac{\Delta}{\eps}$. Using the concentration bound for Laplacian noise and taking the union bound over all $x\in[d]$, the maximal error is bounded by $O\left(\frac{\Delta}{\eps}\sqrt{\frac{\log(d/\beta)}{n}}\right)$ with probability $1-O(\beta)$.

Combine the above arguments. By taking the union bound and setting the constants appropriately, we can conclude that the $L_\infty$ error is $O\left(\left(\sqrt{\frac{k}{b} }+\frac{\Delta}{\eps}\right)\sqrt{\frac{\log(d/\beta)}{n}}\right)$ with probability $1-\beta$.
\end{proof}

\ignore{
\paragraph{Case 1: Event-level LDP}. 
Say the two neighboring configurations are $\v\sim \v'$ and $\v,\v' \in [-1,1]^{nd}$.
Under the event-level LDP, we know $\v$ and $\v'$ are differ in at most one coordinate, say it is the $x$-th coordinate for client $i$'s input vector. 
We only need to take care of the output distribution of client $i$'s report, because all other clients' report distribution are identical. 
For any hash functions $h_i$ and $s_i$ sampled by client $i$, there is only one ``raw bin value'' will be different, i.e., $B_{i, h_i(x)}$. 
The absolute difference is $|v_{i,x} - v'_{i,x}|$, which is at most 2. Therefore, by claim~\ref{claim:lap-noise}, setting $\Delta=2$ is sufficient to achieve $(\eps,0)$-LDP. Then, to minimize the error introduced by binning, $\tilde{O}(\sqrt{\frac{k}{bn}})$, we set bin number $b = \frac{k}{4}$. By the utility theorem, we conclude that under $\eps$-event-level LDP, our algorithm has an estimation error $O\left(\frac{1}{\eps}\sqrt{\frac{\log(d/\beta)} {n}}\right)$ with probability at least $1-\beta$.

\paragraph{Case 2: User-level LDP}. 
Under the event-level LDP, we know the neighboring configurations $\v$ and $\v'$ are differ in at most one client's input vector, say it is the client $i$'s input vector $v_i$ and $v'_i$. 
Again, we only need to take care of the distribution of client $i$'s report. 
We first set the clipping range $\eta =\sqrt{2k\log (4n/\beta)}$.
By the clipping process, we know for all $j\in[b]$, $|B_{i,j}-B'_{i,j}|\le 2\eta$. 
So the total $L_1$ difference is at most $2b\eta$, regardless of how the input vectors change.
By claim~\ref{claim:lap-noise}, setting $\Delta=2b\eta$ is sufficient to achieve $(\eps,0)$-LDP. 
In this case, it is easy to see that the error is dominated by the influence from Laplacian noise. 
To minimize the error, the optimal bin number is $b=1$.
We conclude that under $\eps$-user-level LDP, our algorithm achieves estimation error $O(\frac{1}{\eps}\sqrt{k\log(n/\beta)}\sqrt{\frac{\log (d/\beta)}{n}})$ with probability at least $1-\beta$.
}

Fix $k, L$. 
From the privacy analysis section, we know that $\Delta$ can be set to $\min(L, 3\sqrt{bL\log(2b/\delta)})$.
We now try to find the optimal $b$ to minimize the error. 
Define functions $f_1(b)=\sqrt{\frac{k}{b}} + \frac{L}{\eps}$ and $f_2(b)= \sqrt{\frac{k}{b}} + \sqrt{bL\log(2b/\delta)}/\eps$. 
The error can be rewritten as $\min(f_1(b) + f_2(b))\cdot \tilde{O}(\frac{1}{\sqrt{n}})$.
Optimizing $f_1(b)$, the optimal $b$ is $b^*_1=\frac{\eps^2 k}{L^2}$ and $f_1(b^*_1)=\frac{L}{\eps}$.
Optimizing $f_2(b)$, the optimal $b$ is roughly $b^*_2 = \sqrt{\frac{\eps^2k}{L\log \frac{1}{\delta}}}$ and $f_2(b^*_2)=O(\frac{1}{\eps} (kL\log(\frac{k}{L\delta}))^{\frac{1}{4}})$.
Then, by comparing two local minimums, we conclude that when $L\le k^{\frac{1}{3}}$, the optimal $b$ is $\frac{\eps^2 k}{L^2}$ and the error is $O\left(\frac{L}{\eps}\sqrt{\frac{\log (d/\beta)}{n}}\right)$. 
When $L\ge k^{\frac{1}{3}}$, the optimal $b$ is roughly $\sqrt{\frac{\eps^2k}{L\log \frac{1}{\delta}}}$ and the error is $O\left(\frac{1}{\eps}(kL\log(\frac{k}{L\delta}))^{\frac{1}{4}})\sqrt{\frac{\log (d/\beta)}{n}}\right)$. In practice, we also take concrete constants into consideration and select the best $b$ accordingly.

\subsection{Achieving $(\eps,0)$-user-level LDP}

\label{sec:improving}

We analyze how the extra clipping step in Algorithm~\ref{alg:sparse-vector-agg} achieves pure-LDP in user-level setting.
The idea is to push the ``failure probability'' $\delta$ in privacy definition to the utility theorem's failure probability $\beta$.
We first observe that in user-level LDP, we have the neighboring distance $L=2k$ and the optimal bin number selection is $b=1$. 
We now consider the magnitude of the ``raw bucket value'' $B_{i,1}$ for client each $i$. 
We simply have $B_{i,1}=\sum_{l \in [d]} v_{i,l}s_i(l)$.
Take the randomness of the random $\pm 1$ function $s_i$.
We can see the distribution of $B_{i,1}$ is similar to a zero-mean random walking with at most $k$ steps, where each step's length is at most 1. 
Using Berstein-type concentration bound, we can prove that with probability $1-O(\beta)$, for all client $i\in[n]$, $|B_{i,1}| \le \sqrt{2k\log(4n/\beta)}$. 
See the detail proof in Appendix~\ref{sec:full-proof}.
Let $\eta=\sqrt{2k\log(4n/\beta)}$. 
We see that the difference in $B_{i,1}$ and $B'_{i,1}$ in two independent invocation of the client-side algorithm given input $v, v'$ are at most $2\eta$ with probability 1 after clipping. 
That means we only need to set the noise parameter $\Delta=2\eta$ and the algorithm is $(\eps,0)$-user-level-LDP. 
Plug the parameters into the main theorem, we know the utility guarantee of this optimization is $O(\frac{1}{\eps}\sqrt{k\log (n/\beta)}\sqrt{\frac{\log(d/\beta)}{n}})$. 
We see the utility guarantee of this optimization is similar to the original unclipped version -- they are both $\tilde{O}(\frac{1}{\eps}\frac{\sqrt{k}}{\sqrt{n}})$. In practice, this clipped version has much smaller constant factor in terms of error.

\ignore{
\subsection{Computation Complexity}
\mz{This section is very old. We can just remove it. }
%Summary: $O(k\log n + k\log d)$ communication cost per client. $O(k\log^2 k + k\log d )$ computation cost per client. $O(nd\log^2 d)$ computation cost for the server. $O((\frac{k}{b}+\frac{1}{\eps})\sqrt{\frac{\log (d/\beta)}{n}})$ $L_\infty$ error. $\eps$-DP w.r.t. the neighboring relation defined by $L_1$ distance (at most 1). 

%\textbf{Remarks:} Notice that for any two vectors $x,y$, $L_1(x,y) \le \sqrt{2} L_2(x,y)$. Therefore, if the neighboring definition is based on $L_2$ dist	nace, we can directly apply the method mentioned below with proper parameters. 

\textbf{Communication Cost.}  We use one pesudo-random seed to generate the random $k$-wise independent hash function and the random $k$-wise independent sign function. Originally, they can be represented by a random $k$-degree polynomial function over $\mathbb{F}_d$, which takes $O(k\log d)$ bits. With pesudo-randomness, we use $O(1)$ communicationc cost. Next, assume the bin number is $b$. For each client, the table $T$ contains $b$ integer values. Notice that we clip the value by range size $U$, where $U=O(k+\log(n/\beta))$. We then discretize the value using grid size $1$. That means the value can be represented using $O(\log k+\log \log (n/\beta))$ bits. Therefore, the communication cost will take $O(b(\log k + \log\log(n/\beta)))$ bits.

\textbf{Client Computation Cost.} For one client, it firstly need to generate a random hash function, which takes $O(k \log d)$ time. Then, it separates the non-zero coordintes of the input vector using the sampled hash function. Evaluating $k$ different point of a $k$-degree polynomial can be done in $O(k\log^2 k)$ time with divide-and-conquer method and Fast Fourier Transformation. Let's say the number of the non-zero coordiantes mapped into bin $i$ is $t_i$. It is also the same for the $k$-wise independent sign function. After the hashing procedure, other operations take $O(k)$ running time in total. In summary, for one client, the computation cost is $O(k\log^2 k)$. 

\textbf{Server Computation Cost.} After receiving one client $c$'s report $(h^c,T^c)$, the server can firstly evaluate the hash value $h^c(x)$ and the sign function $s^c(x)$ for all $x\in[d]$ and store all of them for the later computation. Using the same divide-and-conqure method, this step takes $O(d\log^2 d)$ time for each client's hash function and $O(nd\log^2 d)$ time in total. Next, for each $x\in[d]$, the server will look at the corresponding report in those clients' tables. We can rewrite the computation process for $Z_i$ such that $Z=\frac{1}{n}\sum_{i\in[n]}T^i[h^i(x)]$. So the server will take $O(n)$ time to compute one coordinate's value and $O(nd)$ time for the whole estimation. In summary, the server takes $O(nd\log^2 d)$ time. 

\textbf{Practical Engineering Consideration.} We notice that the computation cost is majorly introduced by the $k$-wise hash function for both of the client and the server. We also notice that the privacy property does not rely on the hash function's property. Therefore, in real engineering, the $k$-wise independent hash can be replaced by a constant-size hash. That does not break the privacy guarantee. As long as the outputs of the hash function are distributed uniformly enough, the final accuracy would not be hurt too much. This tradeoff can reduce the client computation cost from $O(k\log^2 k)$ to $O(k)$ and the server computation cost from $O(nd\log^2 d)$ to $O(nd)$. 
}

\ignore{
\subsection{bin Number Choice}

We now discuss how to set a proper bin number $b$. 

%We generally assume that $k$ is large enough, so that $k \ge \sqrt{k\log(4n/\beta)}$. Then, the clipping range $\eta$ is $\sqrt{k\log(4n/\beta)}$. The noise parameter $\Delta$ is $\frac{1}{\eps}\min(L, \sqrt{k\log(4nb/\beta)})$. 

\paragraph{Case 1: event-level LDP}. The event-level LDP setting can be captured by $L$-LDP with $L=2$. Then the noise parameter is $\Delta =L/\eps=\frac{2}{\eps}$. Now, let's considering the $L_{\infty}$ error expression: $E=O((\sqrt{\frac{k}{b}}+\frac{2}{\eps})\sqrt{\frac{\log(d/\beta)}{n}})$. We can see, there is a tradeoff between the communication cost and the erorr: as the bin number $b$ goes up, the error term $\sqrt{\frac{k}{b}}$ will decrease. That provides a good flexibility to choose the most suitable configuration. Using $b=k$ is enough to achieve the error of $O(\frac{1}{\eps}\sqrt{\frac{\log(d/\beta)}{n}})$. This error is actually tight that it matches the  the known lower-bound in LDP frequency estimation \cite{bassily2015local}.

\paragraph{Case 2: User-level LDP}. In many cases, the clients wish to have a much stronger privacy guarantee. For the strongest setting, i.e. the user-level LDP setting, it can be captured by $L$-LDP setting with $L=2k$. Therefore, the noise parameter is $\Delta=\sqrt{2k\log(4nb/\beta)}$ and the error is $E=O\left(\left(\sqrt{\frac{k}{b}}+\frac{\sqrt{kb\log(nb/\beta)}}{\eps}\right)\sqrt{\frac{\log(d/\beta)}{n}}\right)$. In this case, we simply notice that error introduced by the Laplacian noise dominate the whole error term and a larger bin number brings a larger error. Then, setting $b=1$ to simultaneously gain the smallest error, $E=O\left(\frac{1}{\eps}\sqrt{k\log(\frac{n}{\beta})}\sqrt{\frac{\log(d/\beta)}{n}}\right)$ and the lowest communication cost $O(\log k)$. 

\paragraph{Case 3: General $L$-LDP}. For the most general cases, since the expected error can be numerically estimated in $O(1)$ time given the theoretical analysis for a fixed set of parameters. So we can enumerate $b$ in the range $[1,k]$ and find the smallest $b$ such the expected error is lower than the target error. When $L$ is very small, where the error term is dominated by $O((\sqrt{\frac{k}{b}}+\frac{1}{\eps}L)\sqrt{\frac{\log d}{n}})$, the best $b$ choices are typically around $\frac{\eps^2k}{L^2}$. When the $L=O(k)$, it is similar to the user-level LDP case and the best $b$ choice is $b=1$. The most interesting case is when the optimization for the approximate DP comes in -- the noise parameter is set as $\frac{3}{\eps}\sqrt{bL\log(2b/\delta)}$. Now the error is $E=O\left(\left(\sqrt{\frac{k}{b}}+\frac{\sqrt{bL\log(b/\beta)}}{\eps}\right)\sqrt{\frac{\log(d/\beta)}{n}}\right)$. In this case, the optimal $b$ is generally around $\sqrt{\frac{\eps^2k}{L\log(1/\beta)}}$, where the error turns out to be $O\left(\frac{1}{\eps}\sqrt{\frac{\sqrt{kL\log(1/\delta)}\log d}{n}}\right)$.
}

\section{Evaluation}
\label{sec:eval}

%\elaine{TODO: change to SHA256 and 80 bit seed}

\subsection{Setup}
% The evaluation experiments involves the following steps: 1) globally set up the parameters for each approaches; 2) set up each client's input; 3) run local randomizers on the sparse vectors and collects the noisy reports; 4) run the corresponding aggregation algorithms on the noisy reports and record the output of the aggregator. 

\paragraph{Implementation.}
To evaluate our approach, we implement it with C++, compile it with gcc4.8 and the C++11 standard. 
We use 40-bit random seeds to generate the hash functions. %Murmurhash\cite{appleby2012murmurhash3}, an efficient 32-bit pseudorandom hash function to implement the hash function $h$ and $s$.
For simplicity, we directly use 32-bit floating numbers to store and transmit real values. 
%\tw{versions, environment, open source code}

% \textbf{Competitors.}
% We mainly compare to the strawman scheme (including sampling and repeating), and the vanilla Gaussian mechanism Mechanism~\cite{GaussianMech}, Harmony~\cite{harmony} and PCKV~\cite{pckv}. 

\paragraph{Datasets.} We evaluate the algorithms for both synthetic and real-world datasets. For the synthetic dataset, we assume there are $10^5$ users, each with a vector of dimension $d$ and sparsity $k$. We first randomly sample the non-zero coordinates according to Zipf's distribution with a suitable degrading parameter ($s=1.4$). We choose the Zipf's distribution because it naturally appears in real-world data analytics. For each sampled non-zero coordinate, the actual value is sampled from a Gaussian distribution with mean $\mu=1$ and standard deviation $\sigma=0.3$. Then the values are clipped to $[-1,1]$. %Finally, to test the estimation performance of the approaches for both positive and negative regions, we flip the sign of the value using a fixed global hash function. %when its coordinate index $x$ is a multiple of 3\tw{don't understand why we do this}. 
%That means, given the index $x \mod 3 = 0$, the true mean value should be negative, otherwise it is positive. 
\begin{table}[t]
\centering
% \begin{adjustbox}{width=\linewidth}
\begin{tabular}{l|c|c|c|c}
\toprule
\textbf{Datasets} & \textbf{\#Clients $n$} & \textbf{\#Items $d$} & \textbf{\#Records} & \textbf{Sparsity $k$} \\ \midrule
Clothing\cite{clothing}          & 47958                  & 1378                 & 79285              & 6                     \\ 
\rowcolor{mygray} Renting\cite{renting}           & 105571                 & 5850                 & 183052             & 11                    \\ 
Movies\cite{movie}           & 138493                 & 26744                 & 7019990            & 100                    \\ \bottomrule
\end{tabular}
% \end{adjustbox}
\hfill

\label{tab:dataset}
\vspace{1ex}
\caption{Real-world dataset.}
%\vspace{-5ex}
\end{table}

For the real-world dataset experiment, we downloaded three open-sourced datasets from Kaggle, including an online cloth shopping dataset~\cite{clothing}, a clothing renting dataset~\cite{renting} and a movie rating dataset~\cite{movie}, where each record describes one activity (purchase, rent, or rating, respectively). Table~\ref{tab:dataset} gives more information about the datasets.  We select those records with client feedback ratings and normalize them to $[-1,1]$. Given the sparsity parameter is $k$, for clients with more than $k$ records, we randomly sample $k$ records. 

\paragraph{Metrics.} We consider both utility and communication cost fixing the privacy level (i.e., fixing $\eps$ and $\delta$). 
% We then evaluate the performance of the approaches according to two aspects -- utility and communication cost. 
% For those with approximate privacy (i.e., Gaussian mechanism), we set $\delta=10^{-5}$. 
% Given the collected local reports $(z_1,\dots,z_n)$, 
To measure utility, we use the $L_\infty$ error and the mean square error (MSE).  Given $\hat{v} = \frac{1}{n} \sum_{i\in[n]} v_i$ as the true mean vector and $\bar{v}$ as the estimation vector, they are defined as:
\begin{align*}
    L_\infty~\text{Error} &= \max_{x\in[d]} \left|\hat{v}_x - \bar{v}_x\right| \indent
    \text{MSE} &= \frac{1}{d}\sum_{x\in[d]} \left(\hat{v}_x - \bar{v}_x\right)^2 
\end{align*}
%\tw{since we are running short of space, how about we show only $L_\infty$ error in the main body, and defer MSE to the appendix?}
% In the super-sparse setting, we only measure the $L_\infty$ error and MSE among the top 100 coordinates with largest absolute mean values.
For the communication cost, we measure the per-client communication cost: We sum up the \textbf{byte-length} of all the reports from the clients and compute the average report size. 

\paragraph{Evaluation Roadmap.}
We split the experiments into three groups: user-level LDP setting, event-level setting LDP, and the $L$-Neighboring setting. Within each group, we measure different methods varying three parameters: sparsity $k$, privacy budget $\eps$ (in most cases, we use $\delta=0$; but when $\delta>0$, e.g., for the naive perturbation scheme with Gaussian noise, we always use $\delta=10^{-5}$), and dimension size $d$.
We mainly compare our proposed method with the $k$-fold repetition-plus-1-sparse mechanism (referred as \textit{$k$-fold repeition}), the sampling + 1-sparse mechanism (referred as \textit{sampling}), the naive pertubation mechanism (with Gaussian Noise~\cite{GaussianMech}), Harmony~\cite{harmony} and PCKV~\cite{pckv}. 
We run the experiment 10 times and report the average error and the average communication cost. 

% \textbf{Competitors.}

% \textbf{Parameter Settings.} 
% % To extensively measure the performance for the approaches, we run the algorithms under different parameter settings. We generally set client number $n=10^5$. 
% We split the experiments into two main groups. One is in the event-level LDP setting and the other is under the user-level setting. We design three experiments. 
% \begin{enumerate}
%     \item {\it Performance under different sparsity}: we fix the privacy budget $\eps=1$, vector dimension $d=4096$ and vary the sparsity $k$ from $1$ to $1024$. 
%     \item {\it Performance under different vector dimension}: we fix the privacy budget $\eps=1$, sparsity $k=64$ and vary the dimension $d$ from $64$ to $2^{26}$, which is roughly $6.5 \times 10^7$. 
%     \item {\it Performance under different privacy budgets}: we fix the vector dimension $d=4096$, sparsity $k=64$ and vary the privacy budget $\eps$ from $0.5$ to $3.5$. 
% \end{enumerate}

% Moreover, we propose a more flexible LDP definition using the $L_1$ distance between neighboring input vectors. To show its influence on the methods, we fix the vector dimension $d=4096$, sparsity $k=64$, privacy budget $\eps=1$ and vary the neighboring distance $L$ from $2$ (event-level LDP) to 128 (user-level LDP). 

\subsection{Performance under User-level LDP}

\begin{figure*}[t!]
\centering
    \subfigure[$L_\infty$ error results when fixing $n=10^5$, $d=4096$, $\eps=1.0$ and varying $k$ from 1 to 1024.\label{fig:vary-k-mae-user}]{
    \includegraphics[width=0.31\textwidth]{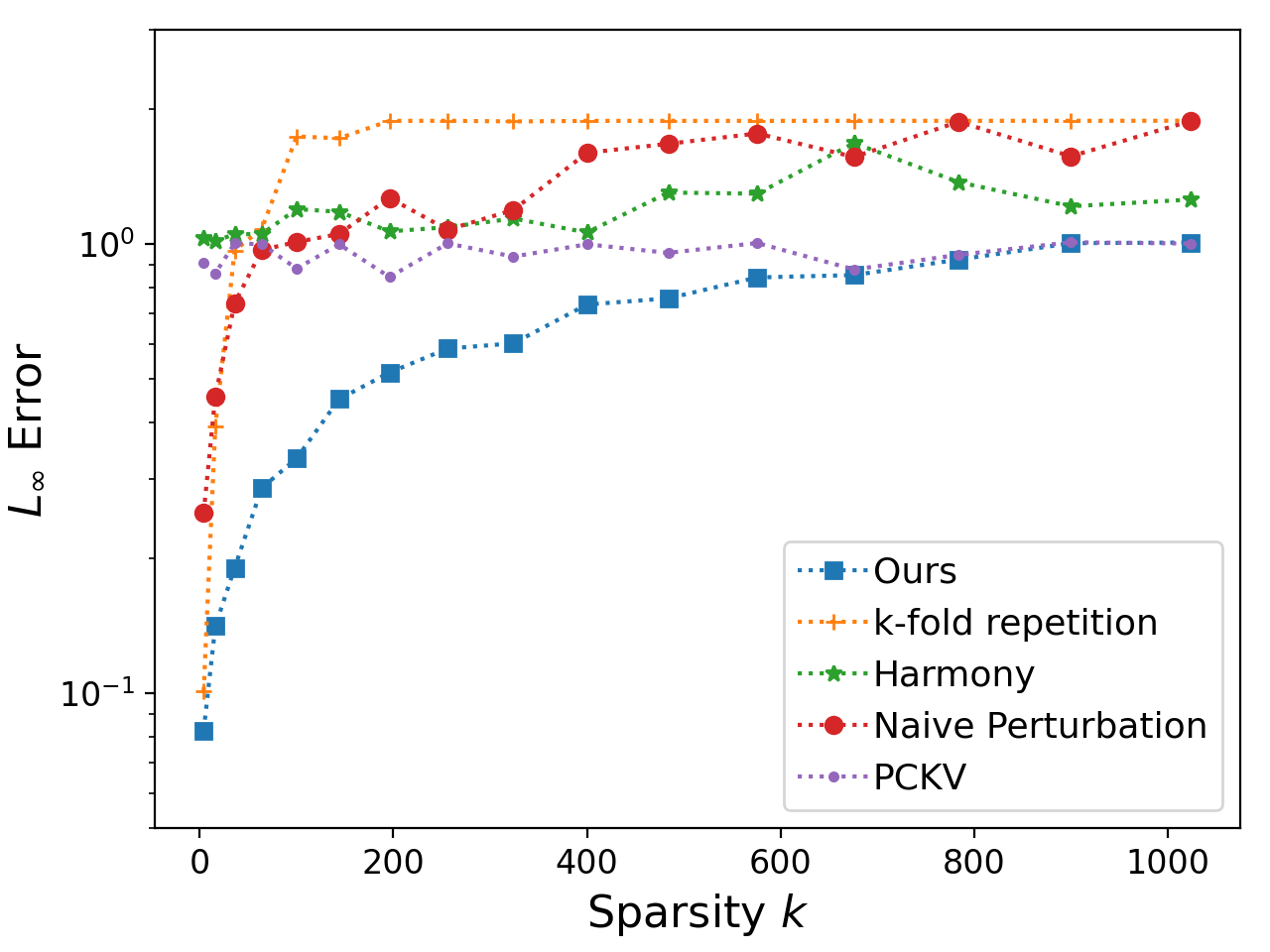}
    }
       \subfigure[$L_\infty$ error results for top 100 coordinates when fixing $n=10^5$, $d=10^5$, $k=64$ and varying $\eps$ from 0.5 to 3.5.\label{fig:vary-e-mae-user}]{
    \includegraphics[width=0.31\textwidth]{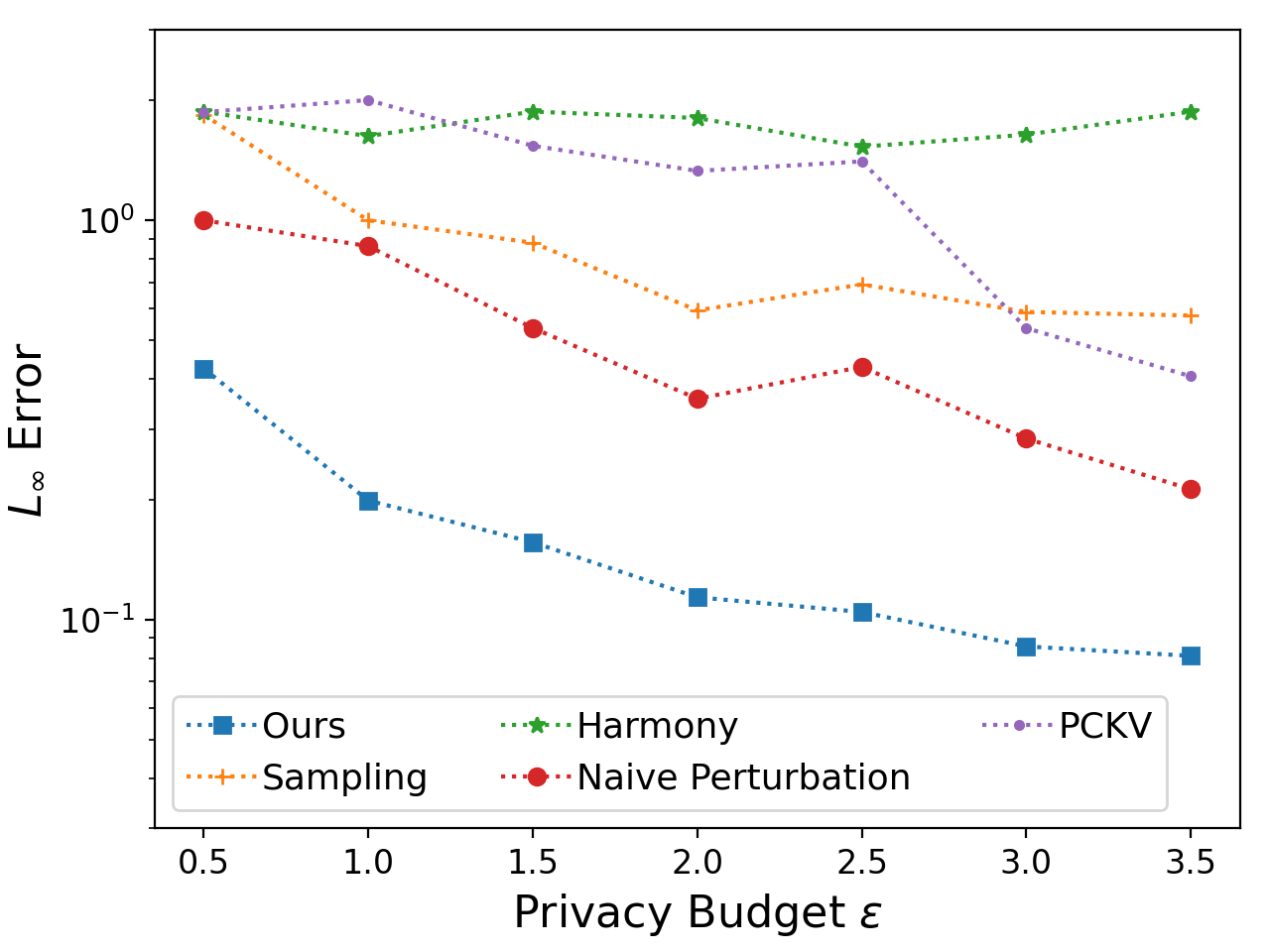}
    }
       \subfigure[$L_\infty$ error results for top 100 coordinates when fixing $n=10^5$, $k=64$, $\eps=1.0$ and varying $d$ from 64 to $10^5$.\label{fig:vary-d-mae-user}]{
    \includegraphics[width=0.31\textwidth]{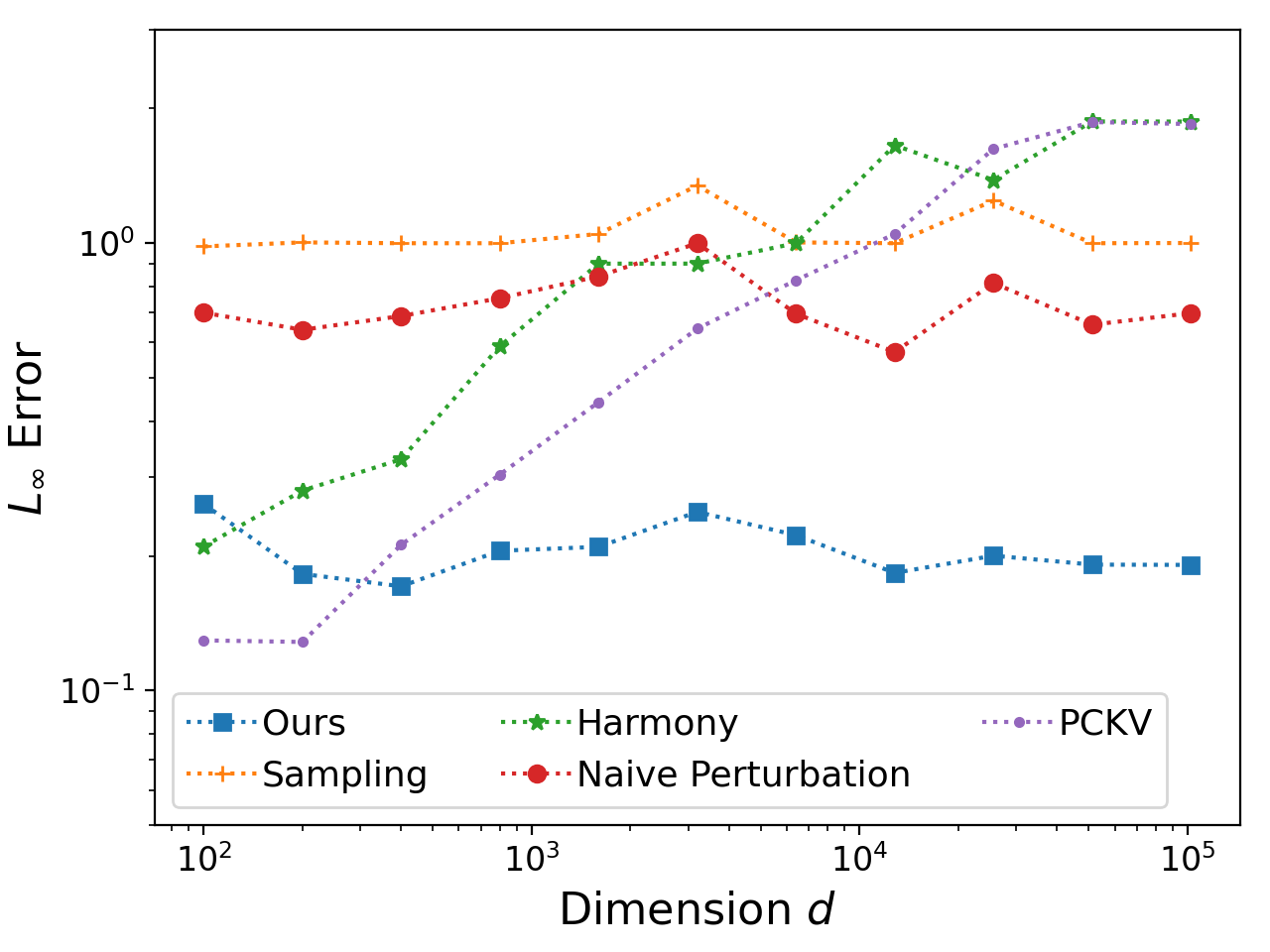}
    }\\
    \centering
       \subfigure[MSE results when fixing $n=10^5$, $d=4096$, $\eps=1.0$ and varying $k$ from 1 to 1024.\label{fig:vary-k-mse-user}]{
    \includegraphics[width=0.31\textwidth]{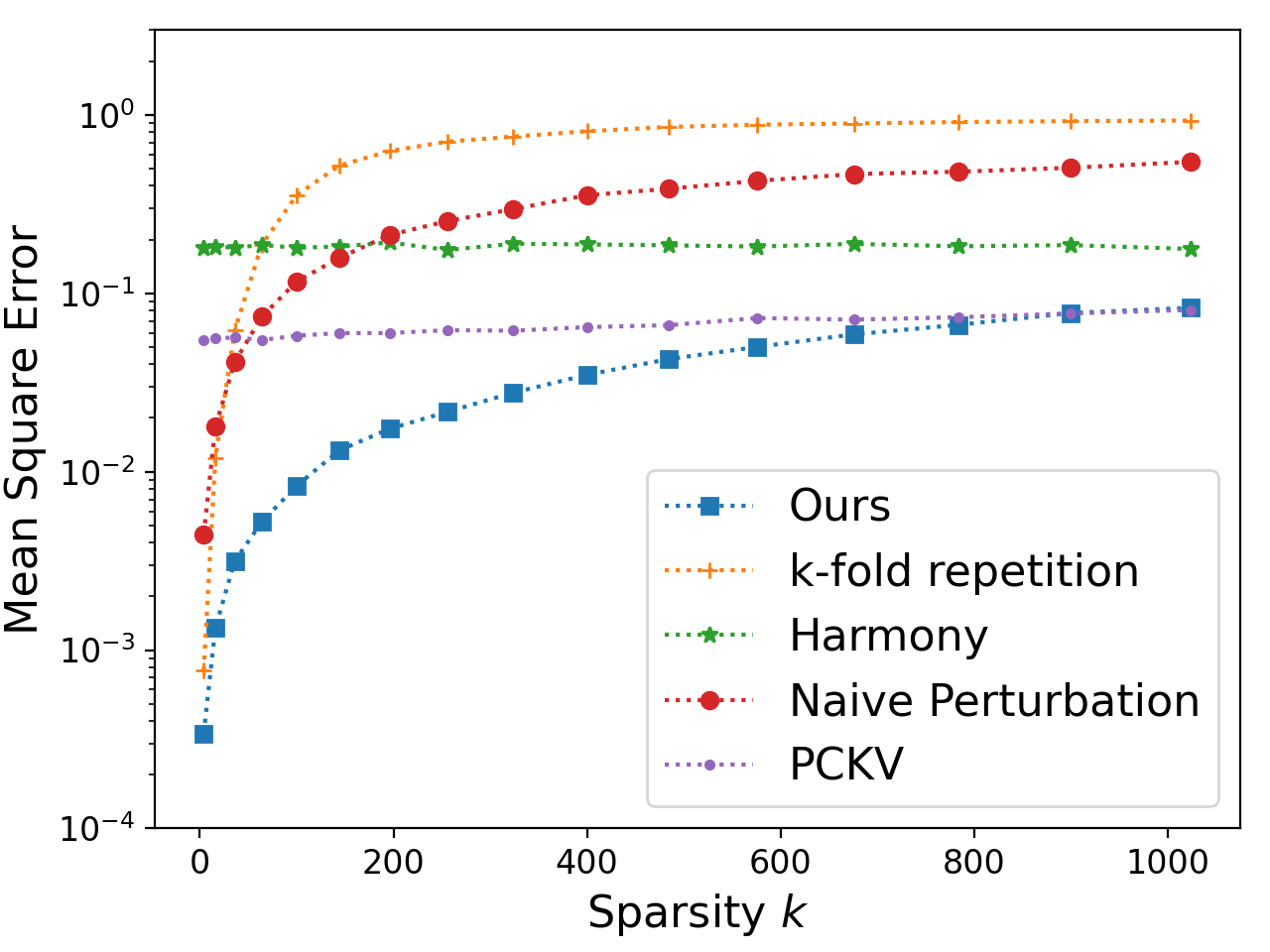}
    }
       \subfigure[MSE results results for top 100 coordinates  when fixing $n=10^5$, $d=10^5$, $k=64$ and varying $\eps$ from 0.5 to 3.5.\label{fig:vary-e-mse-user}]{
    \includegraphics[width=0.31\textwidth]{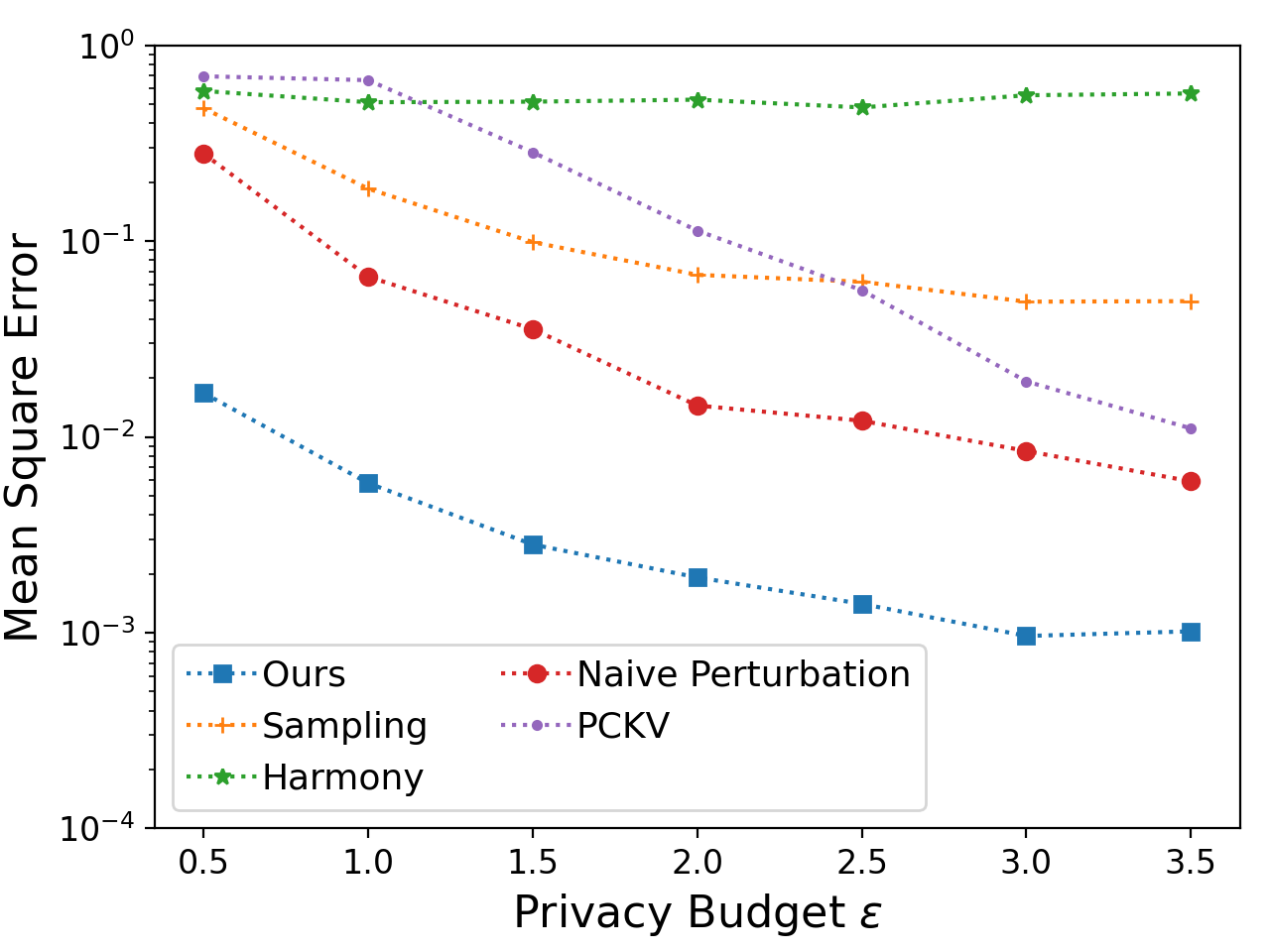}
    }
       \subfigure[MSE results for top 100 coordinates when fixing $n=10^5$, $k=64$, $\eps=1.0$ and varying $d$ from 64 to $10^5$.\label{fig:vary-d-mse-user}]{
    \includegraphics[width=0.31\textwidth]{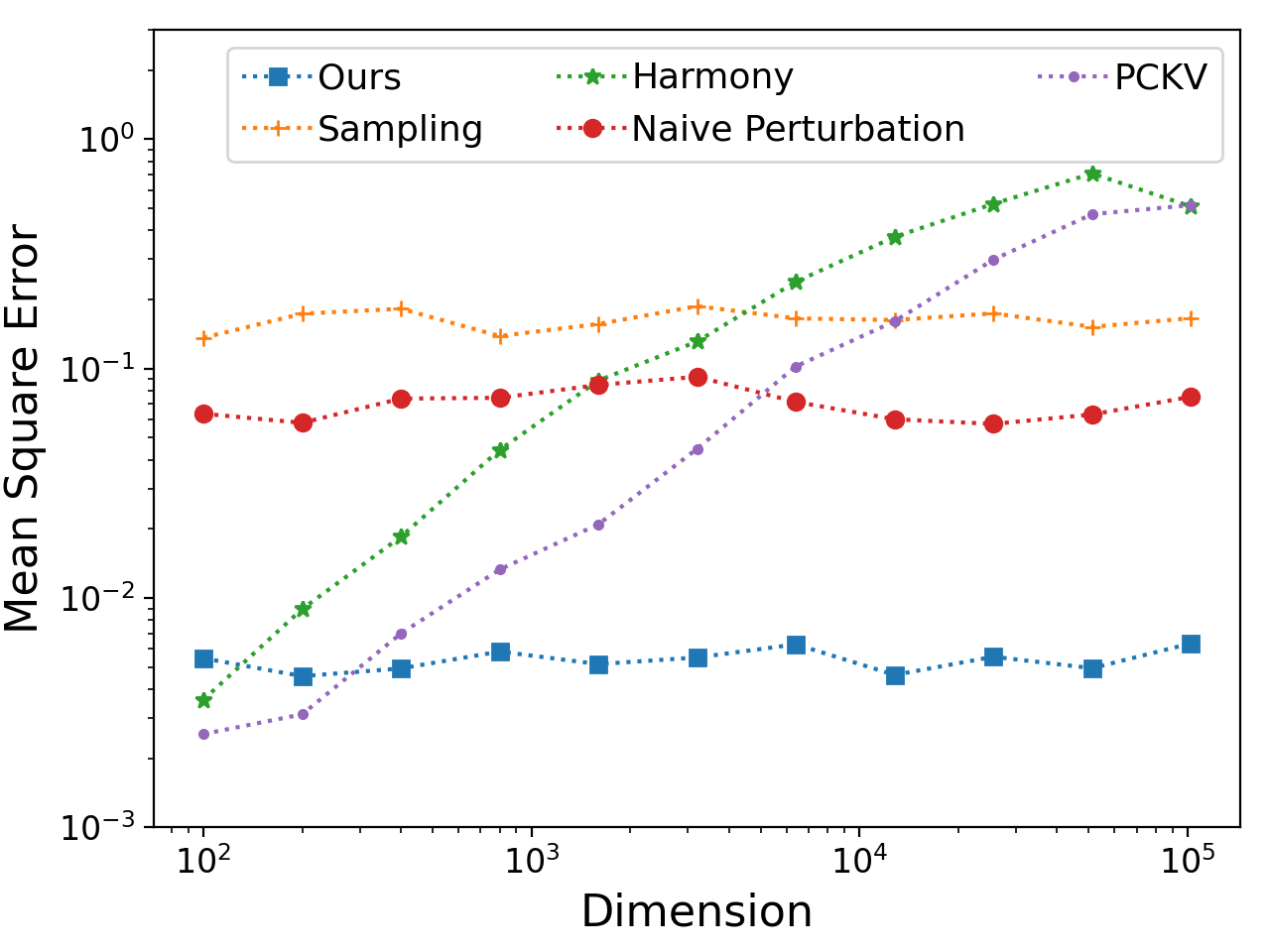}
    }
    \caption{Comparing the utilities of our method and existing approaches under user-level LDP. The subfigures in the top and bottom row show the results of $L_\infty$ error and MSE, respectively. The subfigures in the left, middle, and right column varies sparsity $k$ (from 1 to 1024), $\epsilon$ (from 0.5 to 3.5), and dimension $d$ (from 64 to $10^5$), respectively (while fixing the other two parameters).}
  \label{fig:main}
\end{figure*}

User-level LDP is the more standard setting in LDP analytics. Existing methods are mostly designed for user-level LDP.  We first compare our method against existing ones in this setting.
% , such as PCKV and Harmony. The Gaussian mechanism can configure neighboring distance to fit the user-level privacy definition, just like our algorithm.

\paragraph{Varying sparsity $k$.} We plot the $L_\infty$ error results in Figure~\ref{fig:vary-k-mae-user} and the MSE results Figure~\ref{fig:vary-k-mse-user}. In our theoretical analysis, we prove that the $L_\infty$ error of our algorithm scales with $\sqrt{k}$. The sampling + 1-sparse method's error scales with $k$, and other algorithm cannot utilize the sparsity. The figures show that our method has the smallest estimation error for the whole region when $k$ ranges from 1 to 1024. The error of the sampling solution and the naive perturbation mechanism scales with $k$ and they perform worse than PCKV and Harmony when the sparsity $k$ is larger than $\sqrt{d}$. 

\paragraph{Varying privacy budget $\eps$.} The results are shown in Figure~\ref{fig:vary-e-mae-user} and Figure~\ref{fig:vary-e-mse-user}. With larger privacy budget, all schemes except Harmony achieve better estimation errors. However, when the dimension $d$ is sufficiently large, Harmony and PCKV suffer from a $\tilde{O}(\frac{\sqrt{d}}{\sqrt{n}})$ error. In the relatively high privacy budget region, PCKV shows better performance. Our method always has the smallest error in the reasonable large privacy budget range.

\paragraph{Varying dimension $d$.} 
% We wish to test the methods for super-sparse case. 
In many use cases, the domain size (vector length) can be extremely huge, such as all possible products on Amazon, all possible URL and all geographical location on the earth. In this experiment, we only measure the top 100 coordinate with the largest absolute mean value. This is actually inspired by a real use case where the domain size is sufficiently and the server only wishes to compute the value for a limited keys (e.g. website access analysis). The results are shown in Figure~\ref{fig:vary-e-mae-user} and Figure~\ref{fig:vary-e-mse-user}. Our method provides an important feature -- its utility and communication cost decouple from the domain size. Our method can maintain a stable estimation error even with very large dimension $d$, while using minimum communication cost. The naive perturbation scheme needs to communicate $O(d)$ bits between the clients and the server. In the very dense case, where $k\approx d$, PCKV and Harmony has slightly better estimation error because our method has the extra $\sqrt{\log n}$ term in the error. However, in the more sparse case, all other methods fail to provide any meaningful guess. The noticeable drop in the large $d$ region of the error curves for PCKV and Harmony is because they basically output a meaningless zero vector. 

\subsection{Performance under Event-level LDP}

\begin{figure*}[t!]
\centering
    \subfigure[$L_\infty$ error results when fixing $n=10^5$, $d=4096$, $\eps=1.0$ and varying $k$ from 1 to 1024.\label{fig:vary-k-mae-item}]{
    \includegraphics[width=0.31\textwidth]{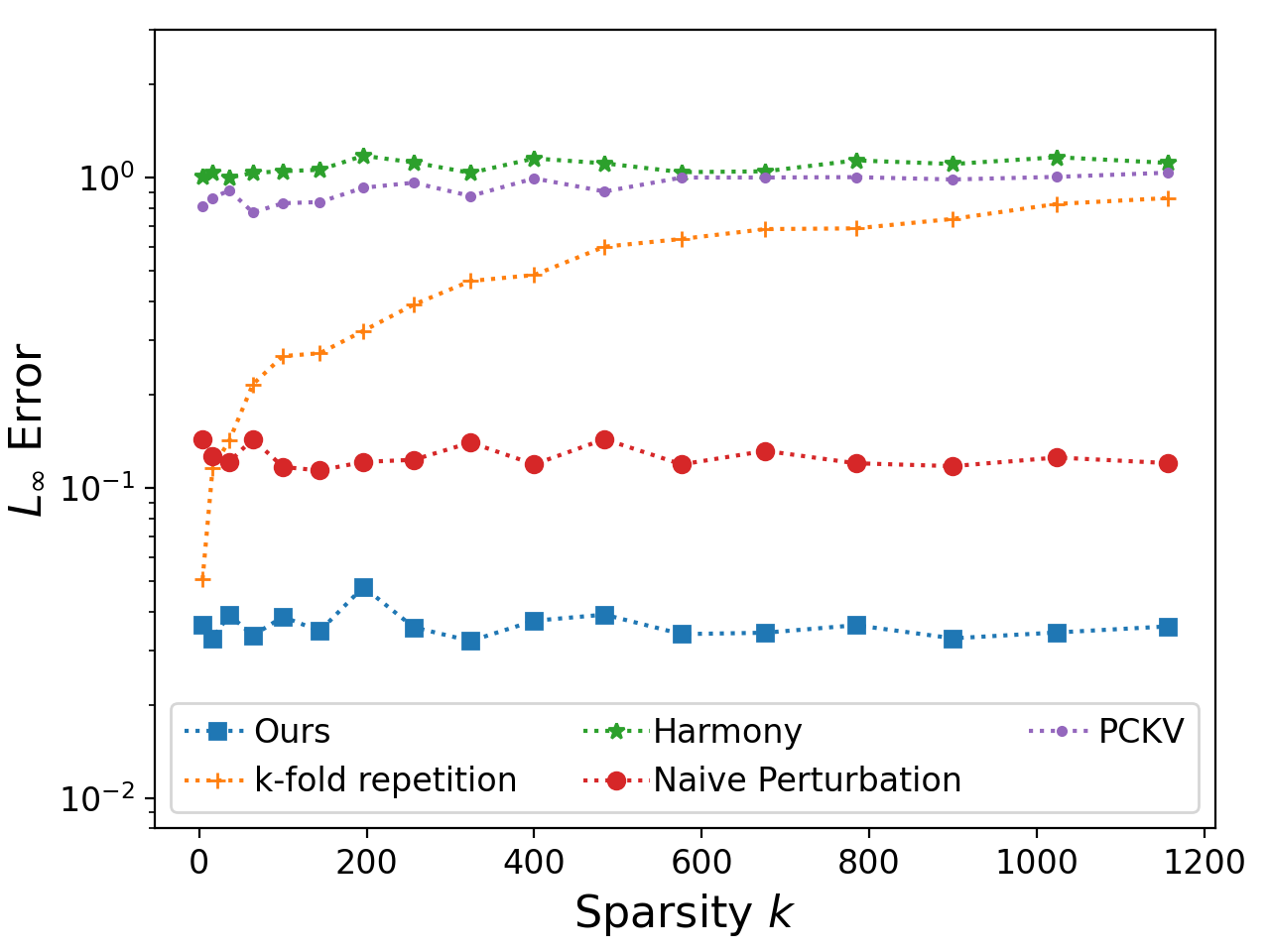}
    }
       \subfigure[$L_\infty$ error results for top 100 coordinates when fixing $n=10^5$, $d=10^5$, $k=64$ and varying $\eps$ from 0.5 to 3.5.\label{fig:vary-e-mae-item}]{
    \includegraphics[width=0.31\textwidth]{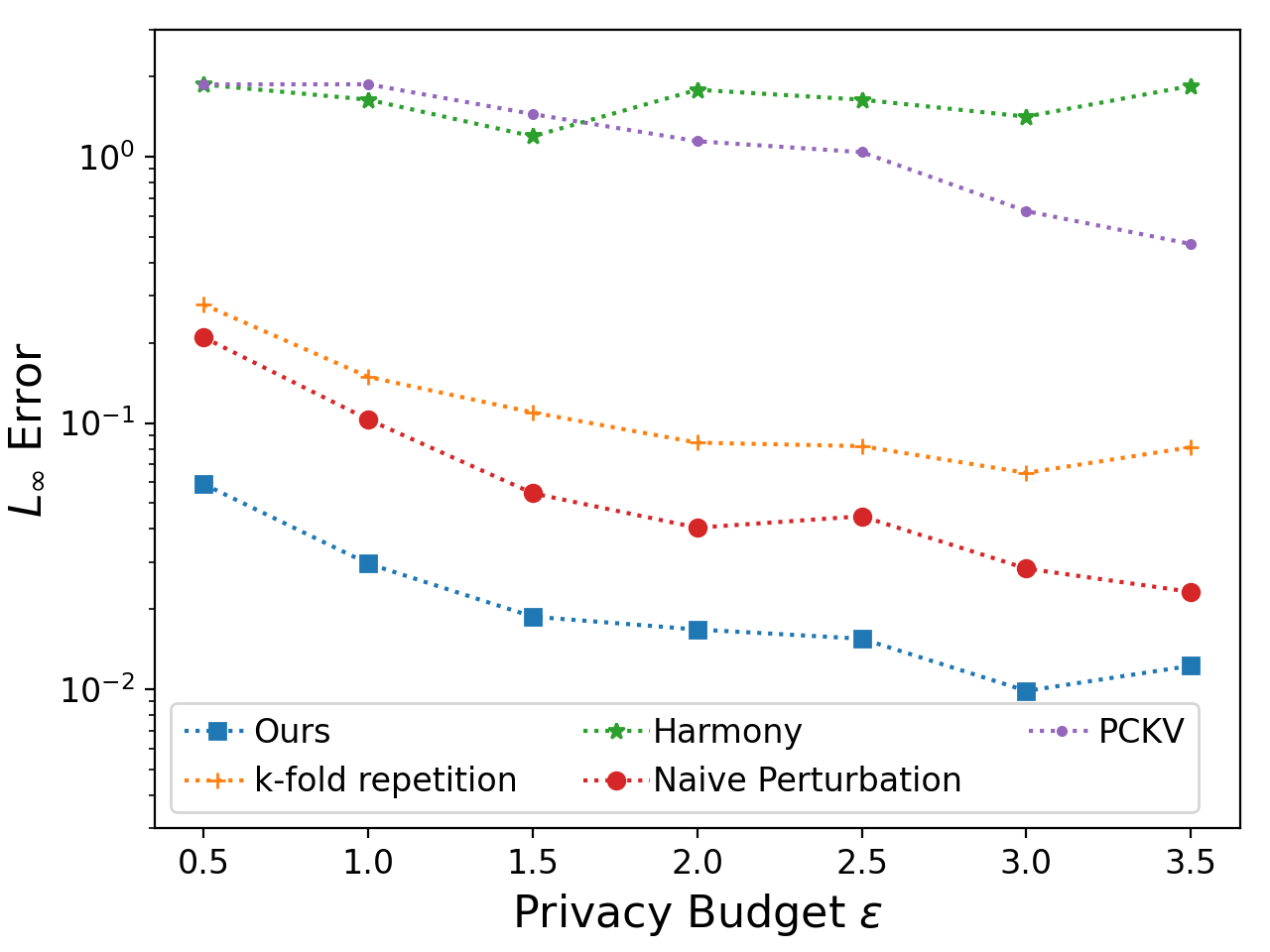}
    }
       \subfigure[$L_\infty$ error results for top 100 coordinates when fixing $n=10^5$, $k=64$, $\eps=1.0$ and varying $d$ from 64 to $10^5$.\label{fig:vary-d-mae-item}]{
    \includegraphics[width=0.31\textwidth]{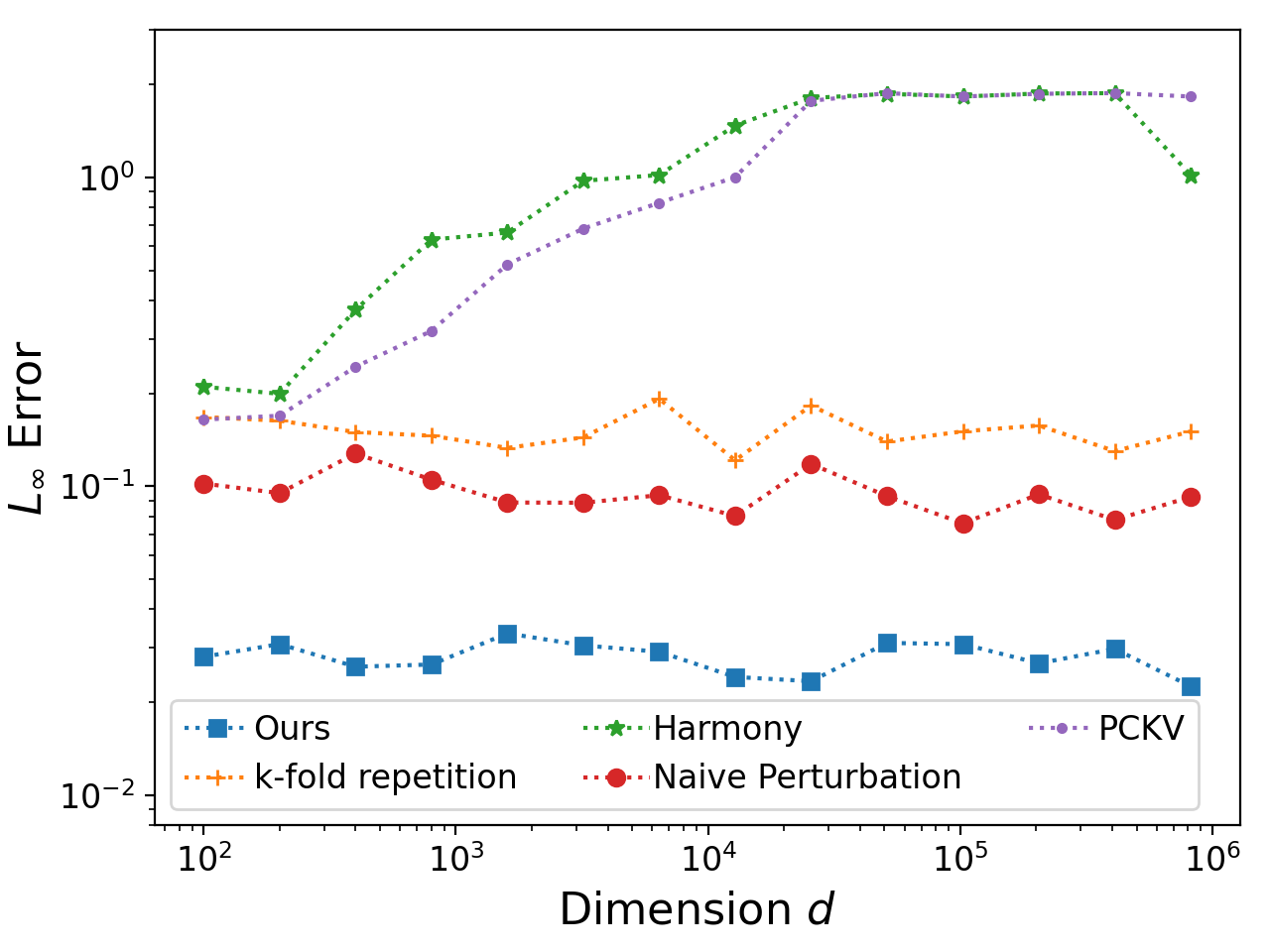}
    }
       \subfigure[MSE results when fixing $n=10^5$, $d=4096$, $\eps=1.0$ and varying $k$ from 1 to 1024.\label{fig:vary-k-mse-item}]{
    \includegraphics[width=0.31\textwidth]{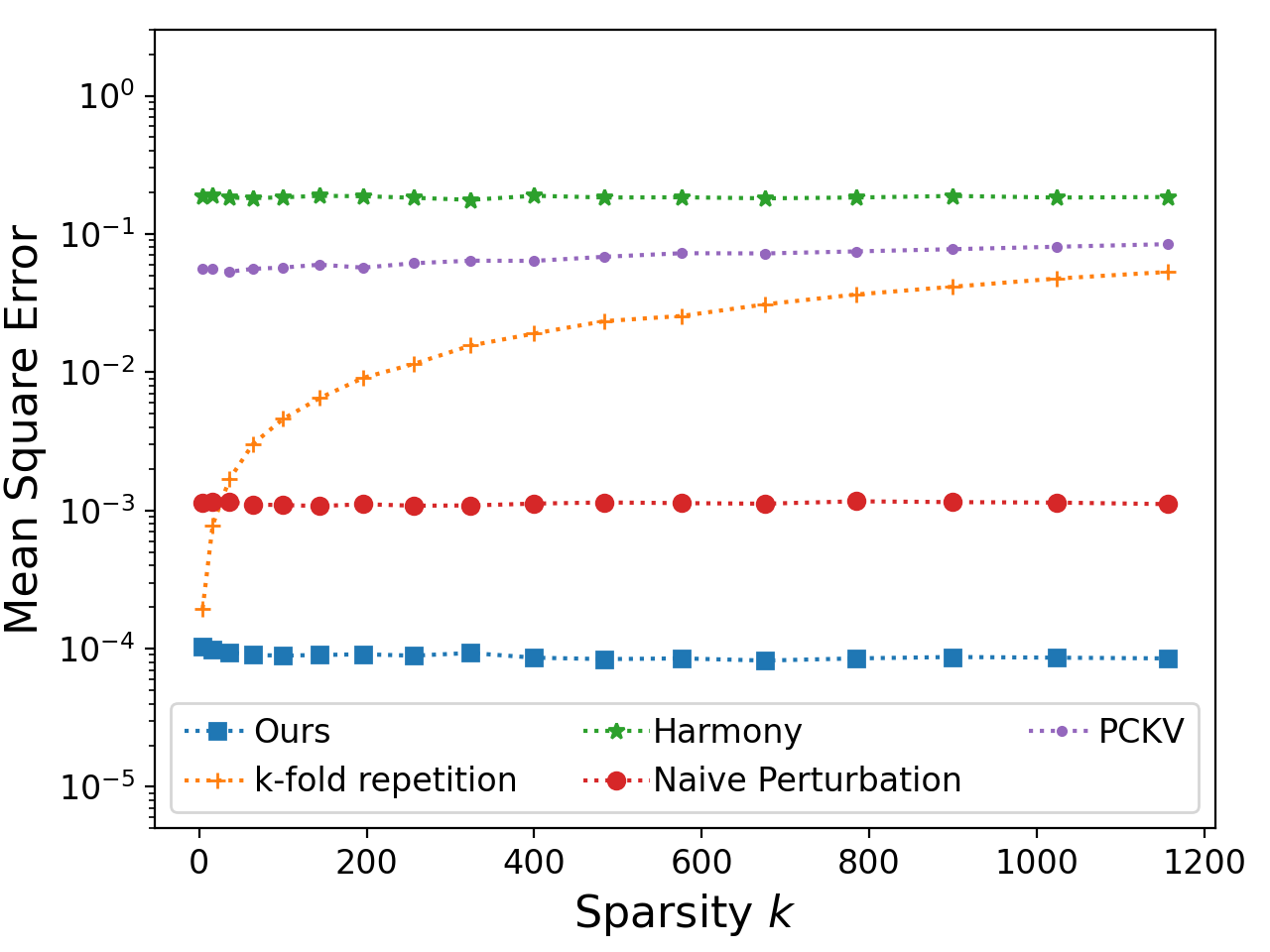}
    }
       \subfigure[MSE results for top 100 coordinates when fixing $n=10^5$, $d=10^5$, $k=64$ and varying $\eps$ from 0.5 to 3.5.\label{fig:vary-e-mse-item}]{
    \includegraphics[width=0.31\textwidth]{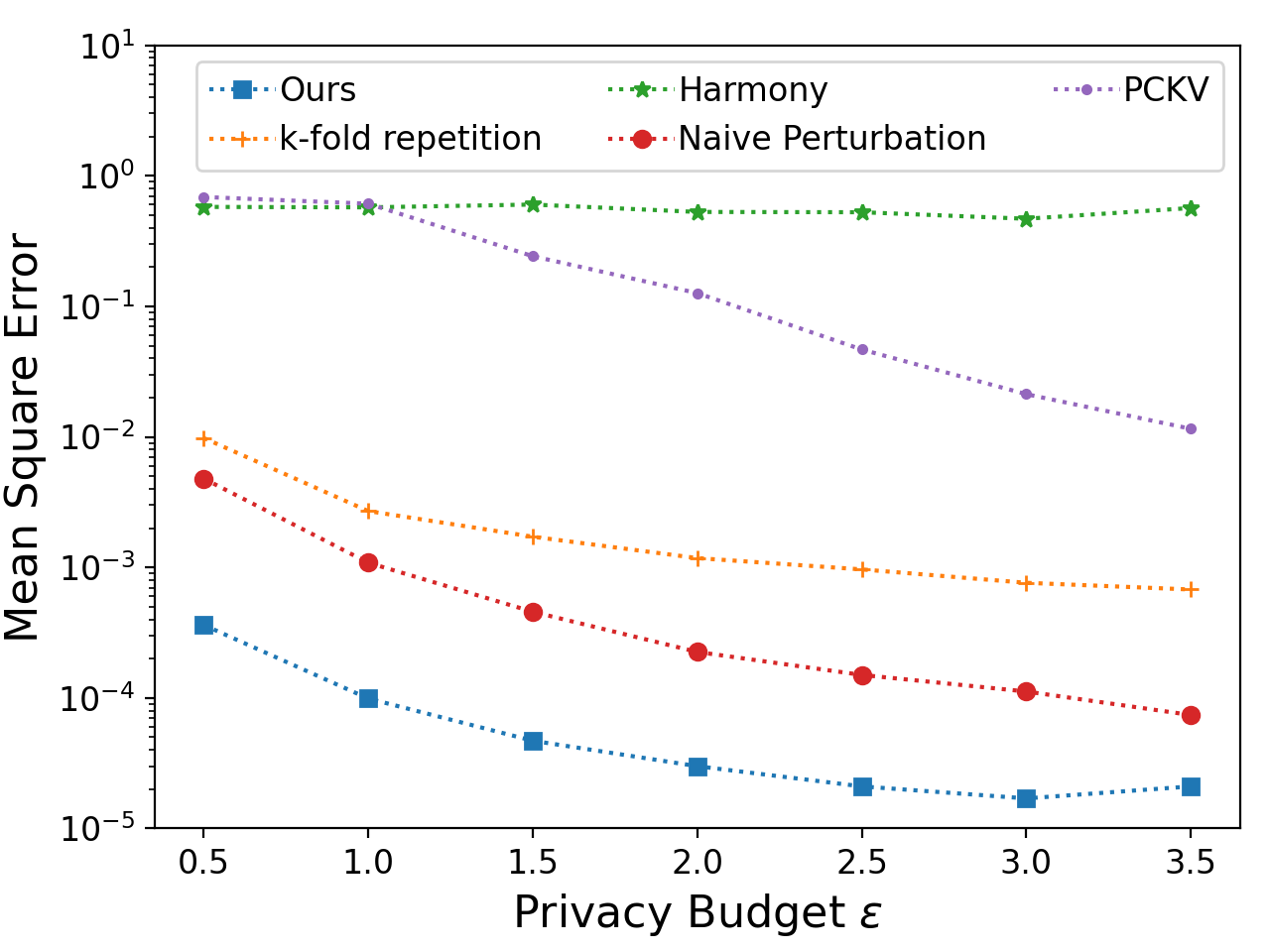}
    }
       \subfigure[MSE results for top 100 coordinates when fixing $n=10^5$, $k=64$, $\eps=1.0$ and varying $d$ from 64 to $10^5$.\label{fig:vary-d-mse-item}]{
    \includegraphics[width=0.31\textwidth]{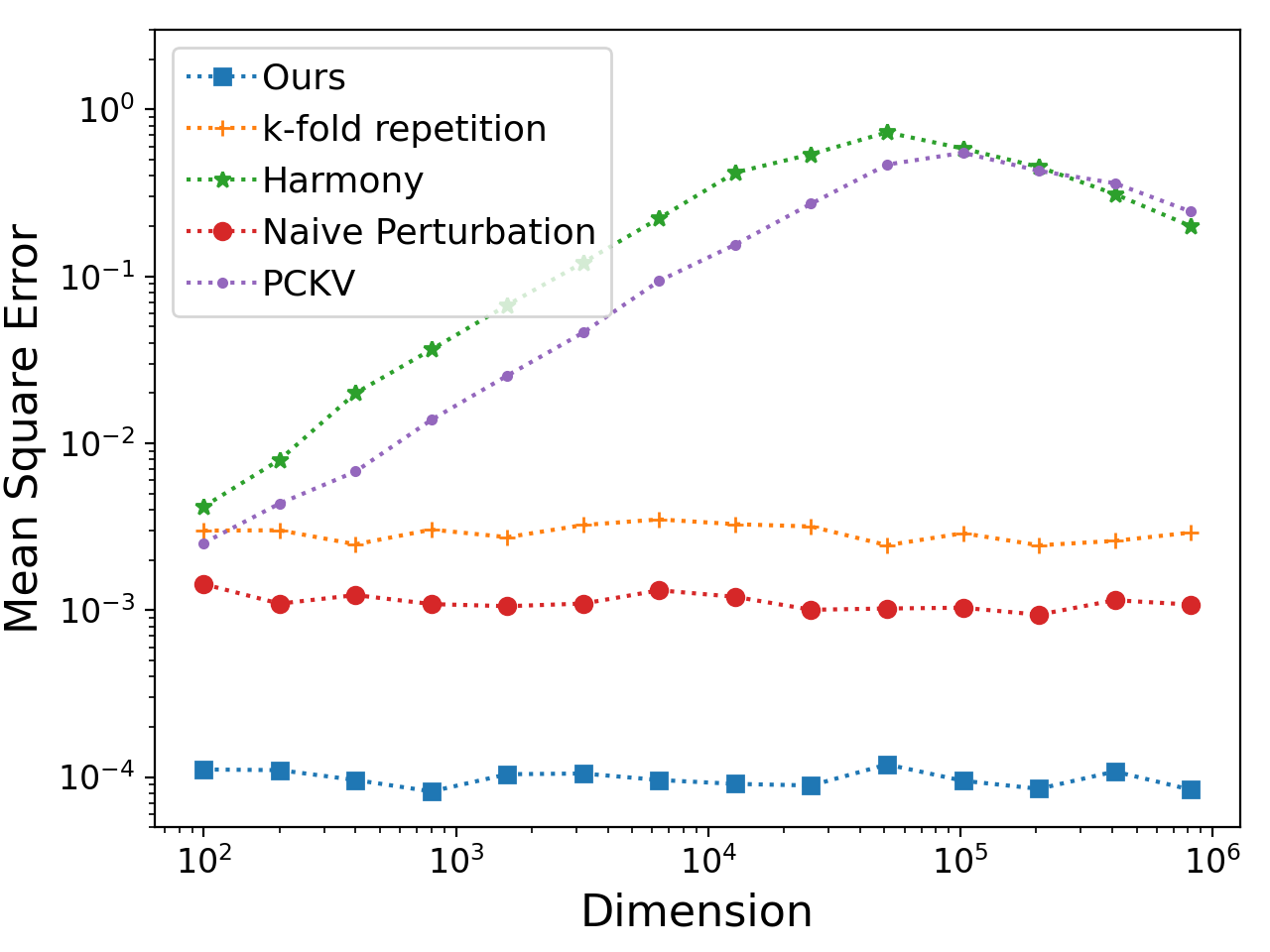}
    }
    \caption{Comparing the utilities of our method and existing approaches under event-level LDP. The plotting convention follows that of Figure~\ref{fig:main}: subfigures in the top and bottom row show the results of $L_\infty$ error and MSE, respectively. The subfigures in the left, middle, and right column varies sparsity $k$ (from 1 to 1024), $\epsilon$ (from 0.5 to 3.5), and dimension $d$ (from 64 to $10^5$), respectively (while fixing the other two parameters).}
  \label{fig:event-level}
\end{figure*}

The results are plotted in Figure~\ref{fig:event-level}. Theoretically (from Table~\ref{tab:comp1}), our method is better than other methods by at least a polynomial gap $\sqrt{k}$ in terms of the $L_\infty$ error. The following experiments verify the theoretical results.

\paragraph{Varying sparsity $k$.} The results are shown in Figure~\ref{fig:vary-k-mae-item} and Figure~\ref{fig:vary-k-mse-item}. The results matches our theoretical results that our methods are not scale with $k$ in event-level LDP. It also outperforms other methods for the whole range. 

\paragraph{Varying privacy budget $\eps$. }The results The results are shown in Figure~\ref{fig:vary-e-mae-item} and Figure~\ref{fig:vary-e-mse-item}. The higher privacy budget are beneficial to all methods' utility performance. Our algorithm still has the best estimation performance for the whole range. 

\paragraph{Varying dimension $d$. }The results are shown in Figure~\ref{fig:vary-d-mae-item} and Figure~\ref{fig:vary-d-mse-item}. Again, our algorithm has decoupled from the dimension $d$ and it has much smaller error estimation than other methods.

\subsection{Performance under $L$-Neighboring LDP}

\begin{figure}[]
    \centering
    \subfigure[\label{fig:vary-l-mae}]{
    \includegraphics[width=0.33\textwidth]{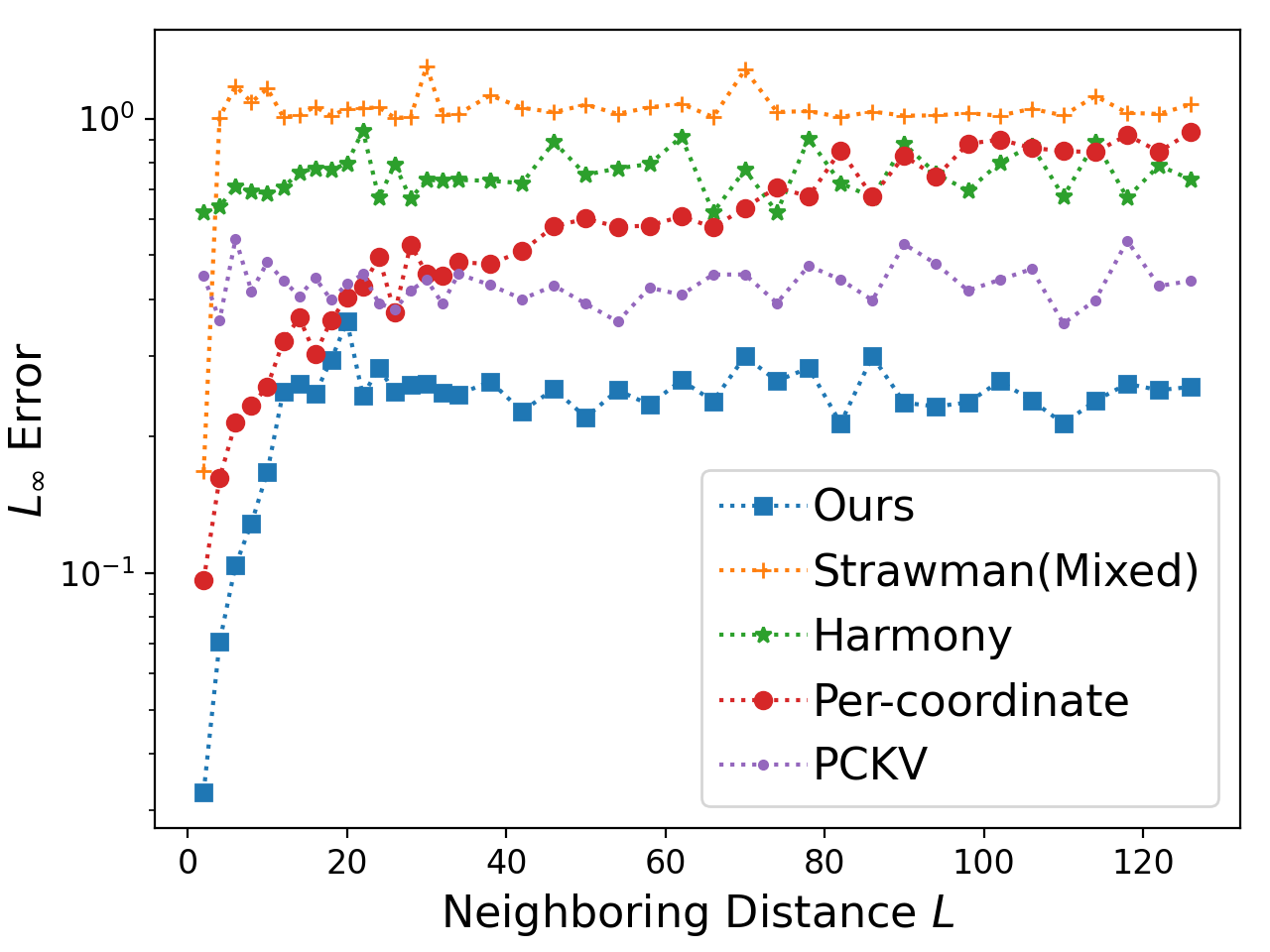}
    }
       \subfigure[\label{fig:vary-l-mse}]{
    \includegraphics[width=0.33\textwidth]{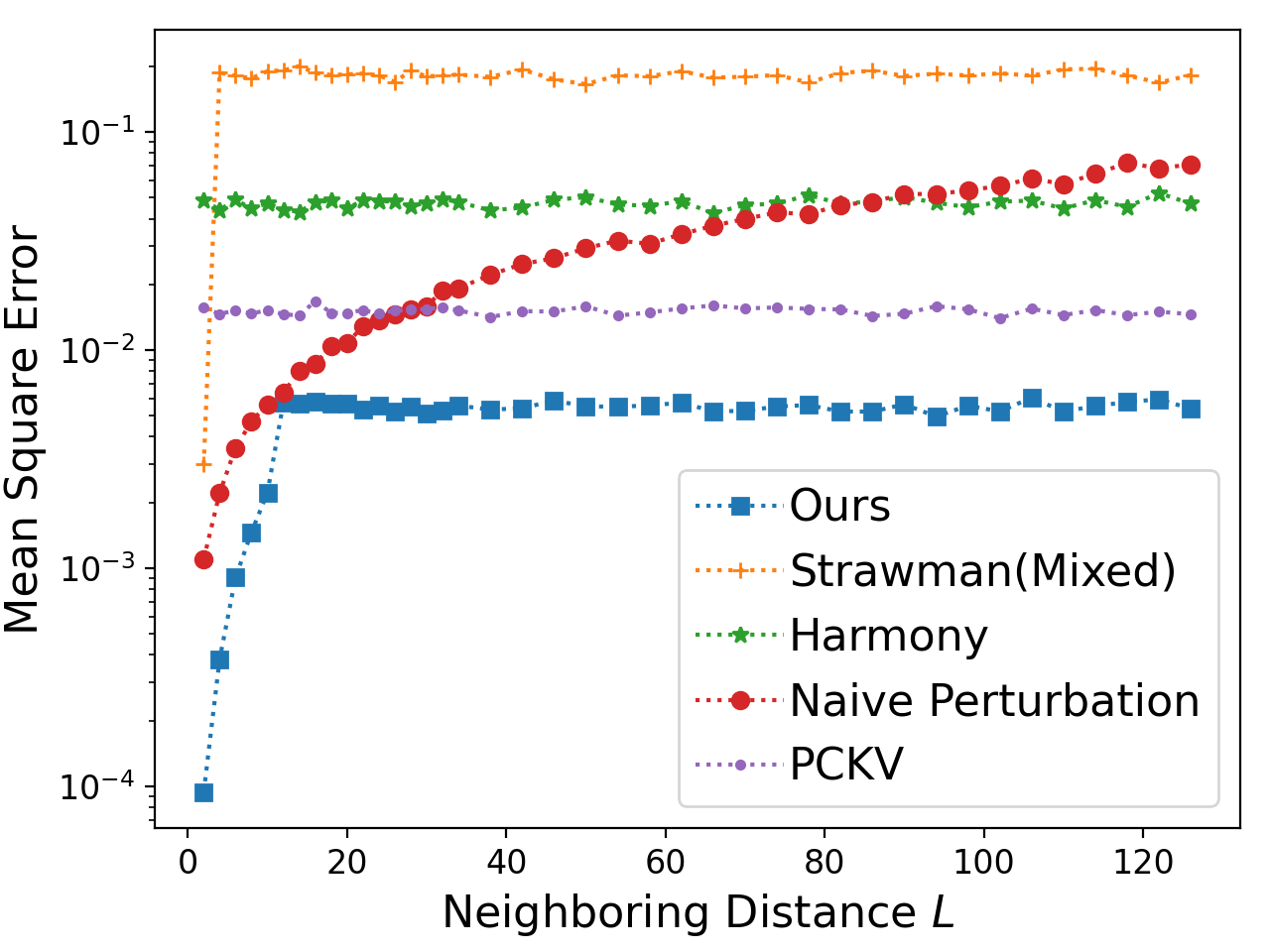}
    }
    \caption{Comparing the utilities of our method and existing approaches under $L$-Neighboring LDP. }
  \label{fig:L}
\end{figure}

The neighboring distance $L$ provides a better way to describe the middle ground between user-level LDP and event-level LDP. Our algorithm has theoretical $L_\infty$ error of $\min\{O(L), O((kL\log(1/\delta)^{\frac{1}{4}})),\\O(\sqrt{k\log n})\}\cdot O(\frac{1}{\eps}\sqrt{\frac{\log d}{n}})$. In the experiment, we fix the sparsity $k=64$ and vary the neighboring $L_1$ distance from $1$ to $128$. 
In the case when $L \ll k$, the parameter configuration with $O((kL\log(1/\delta))^{\frac{1}{4}})$ error growing factor should have asymptotically advantage over the configuration with $O(\sqrt{k\log n})$ growing factor.
However, in practice, we realize that the latter scheme(the algorithm with clipping) has a much smaller constant factor. 
Hence, in the case when $k$ is not large enough, we only see the optimized clipping scheme dominates the unclipped scheme.
The mixed strawman solutions, including $k$-fold repetition scheme and sampling scheme, can only adapt to either event-level LDP or user-level LDP. PCKV and Harmony cannot fully utilize the relaxed privacy as a way to improve the estimation error. The naive perturbation mechanism has worse scaling factor than our method, but in the turning point where $L=\sqrt{k \log n}$, it roughly matches the error of our method.

\subsection{Real-world Dataset Experiments}

\begin{table}[t]
\centering
\begin{adjustbox}{width=\linewidth}
\begin{tabular}{l|ccc|ccc}
\toprule
\multirow{2}{*}{\textbf{Name}} & \multicolumn{3}{c|}{\textbf{Event-level LDP}}                                                & \multicolumn{3}{c}{\textbf{User-level LDP}}                                                \\ 
                               & \multicolumn{1}{c|}{\textbf{Comm. Cost}} & \multicolumn{1}{c|}{\textbf{$L_\infty$ Err.} } & \textbf{MSE} & \multicolumn{1}{c|}{\textbf{Comm. Cost}} & \multicolumn{1}{c|}{\textbf{$L_\infty$ Err.}} & \textbf{MSE} \\ \midrule
$k$-fold repetition                       & \multicolumn{1}{c|}{28}                  & \multicolumn{1}{c|}{0.092}        & 0.00059      & -                   & -         & -       \\ 
\rowcolor{mygray}
Sampling                       & -            & -      & -  & \multicolumn{1}{c|}{8}                   & \multicolumn{1}{c|}{0.18}         & 0.0034       \\ 
%\rowcolor{mygray} 
Naive Perturbation                       & \multicolumn{1}{c|}{5512}                & \multicolumn{1}{c|}{0.17}         & 0.0023       & \multicolumn{1}{c|}{5512}                & \multicolumn{1}{c|}{0.41}         & 0.014        \\ 
\rowcolor{mygray}
Harmony                        & \multicolumn{1}{c|}{-}                   & \multicolumn{1}{c|}{-}            & -            & \multicolumn{1}{c|}{8}                   & \multicolumn{1}{c|}{1.0}          & 0.13         \\ 
%\rowcolor{mygray} 
PCKV                           & \multicolumn{1}{c|}{-}                   & \multicolumn{1}{c|}{-}            & -            & \multicolumn{1}{c|}{8}                   & \multicolumn{1}{c|}{0.70}         & 0.038        \\ 
\rowcolor{mygray}
Ours                           & \multicolumn{1}{c|}{9}                   & \multicolumn{1}{c|}{0.040}       & 0.00019      & \multicolumn{1}{c|}{9}                   & \multicolumn{1}{c|}{0.098}        & 0.00095      \\ \bottomrule
\end{tabular}
\end{adjustbox}
\hfill

\label{tab:cloth}
\caption{Experiment results for the Clothing Dataset.}
%\vspace{-5ex}
\end{table}

\begin{table}[t]
\centering
\begin{adjustbox}{width=\linewidth}
\begin{tabular}{l|ccc|ccc}
\toprule
\multirow{2}{*}{\textbf{Name}} & \multicolumn{3}{c|}{\textbf{Event-level LDP}}                                                & \multicolumn{3}{c}{\textbf{User-level LDP}}                                                \\ %\cline{2-7} 
                               & \multicolumn{1}{c|}{\textbf{Comm. Cost}} & \multicolumn{1}{c|}{\textbf{$L_\infty$ Err.}} & \textbf{MSE} & \multicolumn{1}{c|}{\textbf{Comm. Cost}} & \multicolumn{1}{c|}{\textbf{$L_\infty$} Err.} & \textbf{MSE} \\ \midrule
$k$-fold repetition                       & \multicolumn{1}{c|}{48}                  & \multicolumn{1}{c|}{0.085}        & 0.00050      & -                  & -        & -    \\ 

\rowcolor{mygray} Sampling                       & -                & -    & -      & \multicolumn{1}{c|}{8}                   & \multicolumn{1}{c|}{0.29}         & 0.0052       \\ 
Naive Perturbation                       & \multicolumn{1}{c|}{23400}               & \multicolumn{1}{c|}{0.12}         & 0.0010       & \multicolumn{1}{c|}{23400}               & \multicolumn{1}{c|}{0.40}         & 0.012        \\ 
\rowcolor{mygray} Harmony                        & \multicolumn{1}{c|}{-}                   & \multicolumn{1}{c|}{-}            & -            & \multicolumn{1}{c|}{8}                   & \multicolumn{1}{c|}{1.0}          & 0.24         \\ 
PCKV                           & \multicolumn{1}{c|}{-}                   & \multicolumn{1}{c|}{-}            & -            & \multicolumn{1}{c|}{8}                   & \multicolumn{1}{c|}{1.0}          & 0.074        \\ 
\rowcolor{mygray} Ours                           & \multicolumn{1}{c|}{13}                  & \multicolumn{1}{c|}{0.033}        & 0.000083     & \multicolumn{1}{c|}{9}                   & \multicolumn{1}{c|}{0.11}         & 0.00091      \\ \bottomrule
\end{tabular}
\end{adjustbox}
\hfill

\label{tab:rent}
\caption{Experiment results for the Renting Dataset.}
%\vspace{-5ex}
\end{table}

\begin{table}[t]
\centering
\begin{adjustbox}{width=\linewidth}
\begin{tabular}{l|ccc|ccc}
\toprule
\multirow{2}{*}{\textbf{Name}} & \multicolumn{3}{c|}{\textbf{Event-level LDP}}                                                & \multicolumn{3}{c}{\textbf{User-level LDP}}                                                \\ %\cline{2-7} 
                               & \multicolumn{1}{c|}{\textbf{Comm. Cost}} & \multicolumn{1}{c|}{\textbf{$L_\infty$ Err.}} & \textbf{MSE} & \multicolumn{1}{c|}{\textbf{Comm. Cost}} & \multicolumn{1}{c|}{\textbf{$L_\infty$ Err.}} & \textbf{MSE} \\ \midrule
$k$-fold repetition                       & 404                  & 0.25        & 0.0033      & -                  & -        & -    \\ 

\rowcolor{mygray} Sampling                       & -                & -    & -      & \multicolumn{1}{c|}{8}                   & \multicolumn{1}{c|}{1.0}         & 0.29       \\ 
Naive Perturbation                       & \multicolumn{1}{c|}{109052}               & 0.12         & 0.00082       & \multicolumn{1}{c|}{109052}               & \multicolumn{1}{c|}{1.0}         & 0.081        \\ 
\rowcolor{mygray} Harmony                        & \multicolumn{1}{c|}{-}                   & \multicolumn{1}{c|}{-}            & -            & \multicolumn{1}{c|}{8}                   & \multicolumn{1}{c|}{1.0}          & 0.49         \\ 
PCKV                           & \multicolumn{1}{c|}{-}                   & \multicolumn{1}{c|}{-}            & -            & \multicolumn{1}{c|}{8}                   & \multicolumn{1}{c|}{1.0}          & 0.24        \\ 
\rowcolor{mygray} Ours                           & 105                  & 0.034        & 0.000065     & \multicolumn{1}{c|}{9}                   & 0.25         & 0.0021      \\ \bottomrule
\end{tabular}
\end{adjustbox}
\hfill

\label{tab:movie}
\caption{Experiment results for the Movie Dataset.}
%\vspace{-6ex}
\end{table}

We compile the Clothing, Renting and Movie dataset to the sparse vector mean estimation problem. The description of the datasets can be found in Table~\ref{tab:rent}. Our method achieves best accuracy in both event-level LDP setting and user-level LDP setting by a magnitude of gap. Specifically, compared to our method, the strawman scheme has an extra $\sqrt{k}$ factor in the $L_\infty$ error, which is roughly 2.4, 3.3 and 8.0 in the three datasets correspondingly. The Harmony and PCKV schemes do not output very meaningful estimation in the experiments because their algorithms have error scaled with the dimension $d$.%\tw{i suggest reorg the tables to only contain comm. cost (errors are shown in the figures), and explicitly talk about comm. cost in a separate paragraph}

\section{Lower Bound} 
\label{sec:lower}

In a previous work~\cite{bassily2015local}, Bassily and Smith showed a lower bound of $\Omega\left(\frac{1}{\eps}\sqrt{\frac{\log d}{n}}\right)$ on the $L_{\infty}$ error under the 1-sparse case with the constraints of $(\eps, o(\frac{1}{n\log n}))$-LDP (Theorem \ref{theorem:one_item_lower_bound}). 
The 1-sparse case can be seen as a special case for the general $k$-sparse vector mean estimation under event-level LDP. 
Our algorithm for event-level LDP matches this lower bound, making the error bound tight in the event-level LDP case.

We observe that it is not hard to extend 
the framework and prove a lower bound of $\Omega\left(\frac{1}{\eps}\sqrt{\frac{k\log(d/k)}{n}}\right)$ on the $L_{\infty}$ error of $k$-sparse vector mean estimation under the user-level LDP. 
%We have re-interpreted their framework and give a more concise presentation.
For completeness, we present the full proof below.

%\noindent \textbf{Notations.} 
\paragraph{Notation.} 
In the lower bound proof,
each client~$i$ has a $k$-sparse input vector $v_i \in \mcal{S} := \{v \in \{0,1\}^d: \|v\|_1 = k \}$, where  
the special case $k=1$
is essentially the one-item frequency estimation problem~\cite{bassily2015local}.
Note that since this setting is a special case of real-valued mean vector estimation, the lower bound applies to mean vector estimation more generally.

Each client~$i$
applies an $(\epsilon, \delta)$-differentially private (where
any two inputs in $\mcal{S}$ are neighboring) algorithm $\mcal{Q}_i(\cdot)$
independently to produce  
$\bz_i = \mcal{Q}_i(v_i)$ in some report space~$\mcal{Z}$.  The server computes
$\widehat{v} := \mcal{A}(\bz_1, \ldots, \bz_n)$,
which estimates $\frac{1}{n} \sum_{i} v_i$.
Then the following lower bound holds.

\begin{theorem}[Lower bound on error, $k$-sparse mean vector estimation]
\label{th:main_lb}
Let $0 < \epsilon = O(1)$ and $0 < \delta = o(\frac{\epsilon}{n \log n})$.
Suppose for each client~$i$,
the (randomized) algorithm~$\mcal{Q}_i: \mcal{S} \rightarrow \mcal{Z}$ is $(\epsilon, \delta)$-differentially
private, where any two inputs in $\mcal{S}$ are considered as neighboring.
Moreover, $\mcal{A}: \mcal{Z}^n \rightarrow \mcal{S}$ is a (potentially randomized) aggregator function.

Then, there exists some distribution $\mcal{P}$ on $\mcal{S}$ (depending on $\mcal{Q}_i$'s and $\mcal{A}$) such that
if every client~$i$ independently generates a report $\bz_i = \mcal{Q}_i(v_i)$, 
where $v_i$ is sampled from $\mcal{P}$ independently, the expected error of estimating $\overline{v} := \frac{1}{n}\sum_i v_i$  has the following lower bound:

$\E[ \| \mcal{A}(\bz_1, \ldots, \bz_n) - \overline{v} \|_\infty] \geq
\min \left\{\Omega\left(\frac{1}{\eps}\sqrt{\frac{\log |\mcal{S}|}{n}}\right), 1\right\}$,

where $\log |\mcal{S}| = \log {d \choose k} \geq k\log(d/k)$.
\end{theorem}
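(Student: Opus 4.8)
The plan is to adapt the information-theoretic (Fano) framework behind Bassily and Smith's $1$-sparse lower bound (\Cref{theorem:one_item_lower_bound}), replacing the universe $[d]$ of singletons by the full set $\mcal S$ of $k$-sparse binary vectors. The point is that the hard packing then has size $|\mcal S|=\binom{d}{k}$, so the bound emerges as $\frac1\eps\sqrt{\frac{\log\binom dk}{n}}=\Theta\!\big(\frac1\eps\sqrt{\frac{k\log(d/k)}{n}}\big)$; the promised $\sqrt k$ is already contained inside $\sqrt{\log|\mcal S|}=\sqrt{\log\binom dk}$, since $\log\binom dk \gtrsim k\log(d/k)$.

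First I would build the hard family. Write $M:=|\mcal S|=\binom dk$, let $\mcal P_0$ be uniform on $\mcal S$, fix a separation parameter $\alpha = c\,\frac1\eps\sqrt{\frac{\log M}{n}}$ (capped at a constant, which is where the outer $\min\{\cdot,1\}$ comes from), and for each $S\in\mcal S$ define $\mcal P_S$ by moving $\alpha$ probability mass onto the atom $S$ and removing $\frac{\alpha}{M-1}$ from every other atom. Two properties drive everything: (i) the coordinate-wise means $\mu_S:=\E_{T\sim\mcal P_S}[\I_T]$ satisfy $\mu_{S,j}\approx k/d+\alpha\cdot\I[j\in S]$, so for $S\neq S'$ one gets $\|\mu_S-\mu_{S'}\|_\infty\gtrsim\alpha$ on any coordinate of $S\triangle S'$; and (ii) every pair is close, $\|\mcal P_S-\mcal P_{S'}\|_{TV}\le(1+o(1))\alpha$, uniformly over pairs.

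Next I would run the estimation-to-testing reduction under a uniform prior on $S$: draw $S$, then $v_1,\dots,v_n$ iid from $\mcal P_S$, and feed the LDP reports $\bz_i=\mcal Q_i(v_i)$ to $\mcal A$. Since the $\mu_S$ are pairwise $\Omega(\alpha)$-separated in $\ell_\infty$, any estimate $\hat v$ with $\|\hat v-\mu_S\|_\infty<\alpha/2$ pins down $S$ (e.g.\ as the $k$ largest coordinates of $\hat v$), so a small estimation error yields a tester $\hat S$. Local privacy caps how well any tester can do: invoking the local-privacy information-contraction bound (the $(\eps,\delta)$ strengthening of the Duchi-Jordan-Wainwright inequality used by Bassily-Smith), each report contributes $D_{KL}\!\big(\mcal Q(\mcal P_S)\,\|\,\mcal Q(\mcal P_{S'})\big)\le O(\eps^2)\,\|\mcal P_S-\mcal P_{S'}\|_{TV}^2\le O(\eps^2\alpha^2)$, whence $I(S;\bz_1,\dots,\bz_n)\le O(n\eps^2\alpha^2)$ by tensorization and the average-pairwise-KL bound on mutual information. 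Fano then gives $\Pr[\hat S=S]\le\frac{O(n\eps^2\alpha^2)+\log 2}{\log M}$, and choosing $c$ small forces $n\eps^2\alpha^2\le\frac14\log M$, so no tester succeeds with probability above a constant bounded away from $1$.

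Finally I would translate this into the $\ell_\infty$ error against the empirical mean. The failure of testing gives $\E_S\,\E\|\hat v-\mu_S\|_\infty\ge\frac{\alpha}{2}\Pr[\hat S\neq S]=\Omega(\alpha)$; and since $\bar v=\frac1n\sum_i v_i$ is an average of iid indicators, a Bernstein plus union bound over coordinates yields $\E\|\bar v-\mu_S\|_\infty=o(\alpha)$ for $\eps=O(1)$, so the triangle inequality gives $\E\|\hat v-\bar v\|_\infty=\Omega(\alpha)$. As $\max_{\mcal P}$ dominates the uniform average over $\{\mcal P_S\}$, some single $\mcal P=\mcal P_{S^\star}$ (with its samples iid) witnesses the bound, as required. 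The main obstacle I anticipate is the privacy side: pushing the information-contraction bound from pure $\eps$-LDP to $(\eps,\delta)$-LDP, which is exactly where the hypothesis $\delta=o(\frac{\eps}{n\log n})$ is needed, ensuring that the $\delta$-dependent additive slack accumulated across the $n$ reports stays well below the $\log M$ budget in Fano. A secondary point of care is verifying that the per-coordinate sampling fluctuation of $\bar v$ really stays below $\alpha$, so the final triangle-inequality step does not wash out the signal.
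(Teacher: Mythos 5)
Your proposal is correct and, at the skeleton level, it is the paper's own proof: your hard family $\{\mcal{P}_S\}$ (uniform on $\mcal{S}$ with an extra $\alpha$ of mass on the atom $S$) is exactly the paper's $\eta$-degrading channel $\mcal{P}^{(\eta)}_V$ of Eq.~(\ref{eq:degrading}) with $\eta \approx \alpha$; the estimation-to-exact-recovery reduction (your top-$k$ rounding versus the paper's debias-and-threshold decoding in Lemma~\ref{lemma:decoding_error}), Fano's inequality (Eq.~(\ref{eq:Fano})), the Markov-inequality conversion from recovery failure to expected $L_\infty$ error, and the empirical-mean-versus-distribution-mean concentration step (Lemma~\ref{lem:empdistr}) all appear in the same roles. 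The one genuine divergence is how privacy caps $I(V;\bz_1,\ldots,\bz_n)$. You bound pairwise KL divergences via an $(\eps,\delta)$ strengthening of the Duchi--Jordan--Wainwright contraction $D_{KL}\bigl(\mcal{Q}(\mcal{P}_S)\,\Vert\,\mcal{Q}(\mcal{P}_{S'})\bigr)\le O(\eps^2)\,\Vert\mcal{P}_S-\mcal{P}_{S'}\Vert_{TV}^2$ and then pass to mutual information by the average-pairwise-KL bound; the paper instead observes that composing $\mcal{Q}_i$ with the degrading channel yields an $(O(\eta\eps),O(\eta\delta))$-DP mechanism (Lemma~\ref{lemma:mutual2}) and invokes Bassily--Smith's mutual-information bound for $(\eps,\delta)$-DP mechanisms on uniform inputs (Lemma~\ref{lem:mutual}), giving $I(V;\bz_i)\le O\bigl(\eta^2\eps^2+\tfrac{\delta}{\eps}\log|\mcal{S}|+\tfrac{\delta}{\eps}\log(\eps/\delta)\bigr)$ per report. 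The tradeoff is precisely the obstacle you flag: under approximate DP the likelihood ratio is uncontrolled on a $\delta$-mass event, so the KL contraction does not hold as stated --- one must first replace each $\mcal{Q}_i$ by a pure-DP mechanism agreeing with it except with probability $O(\delta)$ and then account for that substitution, which is nontrivial bookkeeping. The paper's route sidesteps this because the $\delta$-dependence is absorbed at the mutual-information level, where the hypothesis $\delta=o(\frac{\eps}{n\log n})$ makes the accumulated slack $n\cdot\bigl(\tfrac{\delta}{\eps}\log|\mcal{S}|+\tfrac{\delta}{\eps}\log(\eps/\delta)\bigr)$ a vanishing fraction of the Fano budget $\log|\mcal{S}|$. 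If you swap your contraction step for the amplification-plus-MI-lemma combination (both available off the shelf from Bassily and Smith), the rest of your argument goes through verbatim.
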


Plugging in the definition of $\mcal{S}$, the following corollary gives our main lower bound
for user-level-LDP.
\begin{corollary}
Observing that if $L = 2k \leq \sqrt{d}$, then
any two inputs in $\mcal{S}$ has $L_{\infty}$ distance
at most $L$.
Hence, in this case,
$\E[ \| \mcal{A}(\bz_1, \ldots, \bz_n) - \overline{v} \|_\infty] \geq
\min \{\Omega(\frac{1}{\eps}\sqrt{\frac{L \log d}{n}}), 1\}$
\end{corollary}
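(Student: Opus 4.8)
The plan is to prove Theorem~\ref{th:main_lb} by an information-theoretic reduction from \emph{identifying a hidden element of $\mcal{S}$}, in the style of Bassily and Smith's $k=1$ argument; the presence of $\log|\mcal{S}|$ in the target rate is the signature of a Fano-type selection bound over the $|\mcal{S}|$ candidate supports. Concretely, I would draw a hidden $s^*\in\mcal{S}$ uniformly and let the i.i.d. prior $\mcal{P}=\mcal{P}_{s^*}$ be mildly biased toward $s^*$, so that the population mean $\mu_{s^*}:=\E_{v\sim\mcal{P}_{s^*}}[v]$ has its $k$ coordinates in $s^*$ elevated by $\Theta(\gamma)$ above the baseline. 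I will take $\gamma=c\cdot\frac1\eps\sqrt{\log|\mcal{S}|/n}$ for a small absolute constant $c$ (with $\gamma\le 1$, which is the source of the $\min\{\cdot,1\}$); the whole point of the construction is that an estimator with small $L_\infty$ error reveals which coordinates are elevated and therefore recovers $s^*$.

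Because the error is measured against the \emph{empirical} mean $\overline v=\frac1n\sum_i v_i$ rather than $\mu_{s^*}$, I would insert the population mean by the triangle inequality, $\E\|\mcal{A}(\bz_1,\dots,\bz_n)-\overline v\|_\infty\ge \E\|\mcal{A}(\bz_1,\dots,\bz_n)-\mu_{s^*}\|_\infty-\E\|\overline v-\mu_{s^*}\|_\infty$, and bound the last term by a standard $L_\infty$ concentration estimate $\E\|\overline v-\mu_{s^*}\|_\infty=O(\sqrt{\log d/n})$; choosing the constants (and using $\eps=O(1)$) so that the target rate dominates this fluctuation, the subtraction costs only constants. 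It then suffices to lower bound $\E\|\mcal{A}(\bz_1,\dots,\bz_n)-\mu_{s^*}\|_\infty$: a decoder that returns the $s\in\mcal{S}$ whose support best aligns with the large coordinates of $\mcal{A}(\bz_1,\dots,\bz_n)$ recovers $s^*$ whenever this error is below $\gamma/2$.

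The crux is to upper bound the information the transcript carries about $s^*$. Using the strong data-processing inequality for $(\eps,\delta)$-LDP channels---each $\mcal{Q}_i$ contracts mutual information by a factor $O(\eps^2)$ when $\eps=O(1)$---together with the fact that per user the biased distributions $\{\mcal{P}_s\}$ lie within total variation $O(\gamma)$ of their mixture, I expect $I(s^*;\bz_1,\dots,\bz_n)=O(n\eps^2\gamma^2)$ up to a $\delta$ correction that the hypothesis $\delta=o(\eps/(n\log n))$ renders negligible after summing over users. Fano's inequality then says that identifying $s^*$ among $|\mcal{S}|$ equiprobable candidates with constant success needs $I(s^*;\bz_1,\dots,\bz_n)=\Omega(\log|\mcal{S}|)$, so $n\eps^2\gamma^2=\Omega(\log|\mcal{S}|)$ is forced, contradicting the choice of $\gamma$ unless the estimation error is $\Omega(\gamma)=\Omega(\frac1\eps\sqrt{\log|\mcal{S}|/n})$. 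I expect the main obstacle to be establishing the \emph{quadratic}-in-$\gamma$ per-user bound $O(\eps^2\gamma^2)$ rather than a crude $O(\eps^2)$ bound (this quadratic dependence is exactly what yields the $\frac1\eps$ scaling instead of $\frac1{\eps^2}$), and correctly absorbing the $\delta$ term---both adapted from the $k=1$ analysis of Bassily and Smith.

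Finally, the corollary follows directly: $\log|\mcal{S}|=\log\binom{d}{k}\ge k\log(d/k)$, and when $L=2k\le\sqrt d$ we have $\log(d/k)\ge\frac12\log d$, so $\log|\mcal{S}|\ge\frac{k}{2}\log d=\frac{L}{4}\log d$; since any two $k$-sparse $0/1$ vectors differ in at most $2k$ coordinates and hence are within $L_1$-distance $2k=L$ (so all of $\mcal{S}$ is $L$-neighboring), substituting into Theorem~\ref{th:main_lb} gives $\E\|\mcal{A}(\bz_1,\dots,\bz_n)-\overline v\|_\infty\ge\min\{\Omega(\frac1\eps\sqrt{L\log d/n}),1\}$.
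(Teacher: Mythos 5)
Your proposal is correct and follows essentially the same route as the paper's proof of Theorem~\ref{th:main_lb} and its corollary: a hidden uniform $s^*\in\mcal{S}$ with a $\gamma$-biased (degrading-channel) hard distribution, a thresholding decoder, Fano's inequality against a per-user mutual-information bound of $O(\eps^2\gamma^2)$ plus negligible $\delta$ terms, the $O(\sqrt{\log d/n})$ empirical-versus-population-mean concentration step, and finally the substitution $\log|\mcal{S}|\ge k\log(d/k)\ge \frac{L}{4}\log d$ when $L=2k\le\sqrt{d}$ (together with the observation that all of $\mcal{S}$ is $L$-neighboring). The only difference is at the lemma level: you invoke a DJW-style strong data-processing inequality to get the quadratic-in-$\gamma$ per-user information bound, whereas the paper composes privacy amplification through the degrading channel (Lemma~\ref{lemma:mutual2}) with the mutual-information bound for $(\eps,\delta)$-DP mechanisms (Lemma~\ref{lem:mutual}); both are adaptations of Bassily--Smith and yield the same quantitative conclusion.
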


\paragraph{Proof roadmap.}
Just like Bassily and Smith,  
our goal is to find a ``hard'' joint distribution $\cP$ on the clients' inputs $\v=(v_1,\dots,v_n)$, such that the expected $L_\infty$ estimation error is large for any $(\eps,\delta)$-user-level LDP algorithm, Here, the expectation is taken over the randomness coming from the input sampling and the algorithm. 
We construct the distribution $\cP$ as following. 
First, a vector $V\in \{0,1\}^d$ is sampled uniformly at random from a candidate set $\mcal{S}$ that includes all binary $k$-sparse vectors in $\{0,1\}^d$. Next, each client's input $v_i$ is sampled i.i.d from a distribution $\cP_{V}^{(\eta)}$ (using the same~$V$ for all users) as follows: 
\begin{eqnarray}
	v_i =
	\begin{cases}
		V & w.p. ~~ \eta \\
		U & w.p. ~~ 1-\eta 
	\end{cases}
	\label{eq:degrading}
\end{eqnarray}
where $U$ is drawn uniformly from $\cS$.
%which outputs $V$ with probability $\eta$ and outputs a uniform random element from $\cS$ otherwise. 
The distribution $\cP_{V}^{(\eta)}$ is an instance of an \textit{$\eta$-degrading channel}\cite{bassily2015local}.
To prove that any $(\eps,\delta)$-LDP algorithm has large error with respect to $\cP_{v}^{(\eta)}$ for at least one $v \in \mcal{S}$, we view the problem as an encoding-decoding process, then bound the error using Fano's inequality.
Each client $i$ generates a report $\bz_i=\Q_i(v_i)$.
The joint reports $\bz := (\bz_1, \bz_2, \ldots, \bz_n)$ are viewed as a noisy encoding of $V$. In an attempt to recover~$V$,
the server aggregator function~$\mcal{A}$ is applied to produce the mean estimation $\mcal{A}(\bz)$. 
Then, to decode the original $V$, the server removes the bias introduced by the degrading channel then rounds the estimation $\mcal{A}(\bz)$ to the nearest binary vector $\widehat{V}$. 
A decoding error occurs if $\widehat{V} \neq V$.
%The event that $V \neq \widehat{V}$ is considered as decoding error.
%whose probability is large when the mutual information $I(V; Z)$ is small,
%by Fano's inequality.  

The lower bound proof relies on two bounds on the
probability of decoding error.  On one hand, the differential privacy
of each $\mcal{Q}_i$ implies that the mutual information $I(V;\bz)$ is small, which means that
the decoding error probability is large by Fano's inequality.
On the other hand, a small $L_\infty$-error estimation of
the mean vector implies that the original $V$ can be recovered
from $\mcal{A}(\bz)$ with high probability.
These effects limit the decoding error probability and
give us a lower bound on the mean vector estimation error.
%To apply this framework, it suffices to construct a large candidate set $\mcal{S}$
%of vectors tailored for the specific variant of the problem.  Specifically,
%any two distinct vectors in $\mcal{S}$ should be neighboring
%for the $(\epsilon, \delta)$-differentially randomizer $\mcal{Q}_i$ of each
%user~$i$.  Moreover, if two vectors in $\mcal{S}$ differ
%at some coordinate, the difference is at least $\Omega(1)$.
That is, for small enough $\delta = o(\frac{\epsilon}{n \log n})$,
we find a distribution $\cP$ over candidate set $\mcal{S}$ that implies 
a lower bound of $\Omega(\frac{1}{\epsilon}\sqrt{\frac{\log |\mcal{S}|}{n}})$
on the $L_\infty$-error of mean vector estimation.
By considering $k$-sparse vectors in $\{0,1\}^d$,
(for which $|\mcal{S}| = {d \choose k}$), we obtain the lower bound of  $\Omega\left(\frac{1}{\epsilon}\sqrt{\frac{k\log (d/k)}{n}}\right)$.

In the interest of space, we defer the detailed
lower bound proof to Appendix~\ref{sec:appendix-lb}.
%Below we give the proof of 

%\input{tex/related-compressed}

\section*{Acknowledgments}
This work is in part supported by a Packard Fellowship, NSF awards
under the grant numbers 2128519 and 2044679, a grant from ONR and a gift from Cisco. T-H. Hubert Chan was partially funded by the Hong Kong RGC under the grants 17200418 and 17201220.
%We gratefully acknowledge helpful discussions and insightful feedback
%with Zhao Song and Lianke Qin.
%T-H. Hubert Chan is 
%partially supported by the Hong Kong RGC under the grants 17200418 and 17201220.
%Elaine Shi is partially funded 
%by NSF under award numbers 1601879 and 2128519, 
%a faculty award from JP Morgan,  
%an ONR YIP award, and a Packard Fellowship.

\pagestyle{plain}

%\nocite{fc10,tissec11}
%\nocite{diffobl}
% \bibliographystyle{alpha}
\bibliographystyle{IEEEtranS}
\bibliography{bibs/refs,bibs/crypto,bibs/bibdiffpriv,bibs/cache_obliv}

\section*{\LARGE Appendices}
%\appendices
\section{Additional Preliminaries}

\begin{theorem}[Sequential Composition Theorem]
\label{thm:seq_comp}
	Assume the distribution of $X$ and $X'$ are $(\eps_1,\delta_1)$-close. If for any of $x\in Domain(X)$, the posterior distribution of random varaible $Y$ conditioned on $X=x$ and random variable $Y'$ conditioned on $X'=x$ are $(\eps_2,\delta_2)$-close, then the distribution of $(X,Y)$ and $(X',Y')$ are $(\eps_1+\eps_2,\delta_1+\delta_2)$-close. 
\end{theorem}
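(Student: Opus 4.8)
The plan is to reduce the statement to a pointwise ``decomposition'' characterization of $(\eps,\delta)$-closeness, and then to expand the joint probability in the right order so that the two failure probabilities add rather than one getting multiplied by an $e^{\eps}$ factor. First I would establish the following standard lemma: two distributions $P,Q$ (with densities $p,q$ with respect to a common base measure) satisfy $P(S)\le e^{\eps}Q(S)+\delta$ for every $S$ \emph{if and only if} $P$ can be written as a sum of two non-negative measures $P = P_m + P_b$ with $p_m(x)\le e^{\eps}q(x)$ pointwise and total mass $\|P_b\|\le\delta$. The ``if'' direction is immediate. For ``only if'' I would take $p_m(x):=\min(p(x),e^{\eps}q(x))$ and $p_b(x):=(p(x)-e^{\eps}q(x))^+$, and verify that the mass of $P_b$, which equals $P(E)-e^{\eps}Q(E)$ for $E:=\{x:p(x)>e^{\eps}q(x)\}$, is at most $\delta$ precisely because $P(E)\le e^{\eps}Q(E)+\delta$.

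Applying this lemma gives the two decompositions I need. Writing $X\sim P$, $X'\sim Q$ with marginal densities $p,q$, the $(\eps_1,\delta_1)$-closeness yields $P=P_m+P_b$ with $p_m(x)\le e^{\eps_1}q(x)$ and $\|P_b\|\le\delta_1$. Writing $P_x$ and $Q_x$ for the conditional laws of $Y\mid X=x$ and $Y'\mid X'=x$, the hypothesis gives $P_x(S)\le e^{\eps_2}Q_x(S)+\delta_2$ for every $x$ and every $S$.

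Then for an arbitrary set $T$ in the joint domain, writing $T_x:=\{y:(x,y)\in T\}$ for its $x$-section, I would expand
\begin{align*}
\Pr[(X,Y)\in T] = \int p(x)\,P_x(T_x)\,dx = \int p_m(x)\,P_x(T_x)\,dx + \int p_b(x)\,P_x(T_x)\,dx.
\end{align*}
The second integral is at most $\int p_b(x)\,dx=\|P_b\|\le\delta_1$, using only $P_x(T_x)\le 1$. For the first integral I apply the conditional bound $P_x(T_x)\le e^{\eps_2}Q_x(T_x)+\delta_2$ together with $p_m(x)\le e^{\eps_1}q(x)$ and $\int p_m(x)\,dx\le 1$, obtaining
\begin{align*}
\int p_m(x)\,P_x(T_x)\,dx
&\le e^{\eps_2}\int p_m(x)\,Q_x(T_x)\,dx + \delta_2\int p_m(x)\,dx\\
&\le e^{\eps_1+\eps_2}\int q(x)\,Q_x(T_x)\,dx + \delta_2 = e^{\eps_1+\eps_2}\Pr[(X',Y')\in T]+\delta_2.
\end{align*}
Summing the two bounds yields $\Pr[(X,Y)\in T]\le e^{\eps_1+\eps_2}\Pr[(X',Y')\in T]+\delta_1+\delta_2$, which is exactly $(\eps_1+\eps_2,\delta_1+\delta_2)$-closeness.

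The main obstacle is precisely the bookkeeping of the additive terms. The naive route---first replacing the conditional $P_x$ by $e^{\eps_2}Q_x+\delta_2$ and only afterward applying the $X$-to-$X'$ bound---produces an $e^{\eps_2}\delta_1+\delta_2$ slack, which is weaker than the claimed $\delta_1+\delta_2$. Splitting the marginal $P=P_m+P_b$ \emph{before} invoking the conditional bound, and charging the entire $\delta_1$ to the bad mass $P_b$ where the trivial estimate $P_x(T_x)\le 1$ is affordable, is what keeps the two deltas additive. The only other care needed is to phrase the argument with a common dominating measure so that it also covers continuous report spaces; the integrals above should simply be read as sums in the discrete case.
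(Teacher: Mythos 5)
Your proof is correct and takes essentially the same route as the paper's own argument: the paper likewise decomposes the marginal law of $X$ via the excess measure $\mu(S)=\left(\Pr[X\in S]-e^{\eps_1}\Pr[X'\in S]\right)_+$ (your $P_b$), bounds the conditional probability by $\min\left(e^{\eps_2}\Pr[(x,Y')\in S],\,1\right)+\delta_2$, and charges the bad mass against the trivial bound of $1$ so that the deltas add instead of picking up an $e^{\eps_2}$ factor. Your standalone iff-lemma and the pre-splitting of the integral into the $p_m$ and $p_b$ parts are just a repackaging of the same bookkeeping.
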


%\begin{proof}
%	See the appendix.
%\end{proof}

\begin{theorem}[Post Processing Theorem]
	Assume the distribution of $X$ and $X'$ are $(\eps,\delta)$-close. Then for any (randomized) function $f$, the distributions of $f(X)$ and $f(X')$ are $(\eps, \delta)$-close. 
\end{theorem}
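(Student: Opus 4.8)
The plan is to unwind the definition of $(\eps,\delta)$-closeness and reduce the randomized case to a deterministic one. First I would treat a \emph{deterministic} post-processing map $f$. Fix any subset $S$ of the range of $f$ and let $T := f^{-1}(S) = \{x : f(x) \in S\}$ be its preimage in the common domain $D$ of $X$ and $X'$. Since $f(X) \in S$ holds exactly when $X \in T$, we have $\Pr[f(X) \in S] = \Pr[X \in T]$ and likewise $\Pr[f(X') \in S] = \Pr[X' \in T]$. Applying the hypothesis that $X$ and $X'$ are $(\eps,\delta)$-close to the particular set $T$ then gives $\Pr[f(X)\in S] = \Pr[X\in T] \le e^\eps \Pr[X'\in T] + \delta = e^\eps \Pr[f(X')\in S] + \delta$, which is exactly the claim for deterministic $f$.

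Next I would handle a \emph{randomized} $f$ by externalizing its internal coins. Write $f(x) = g(x; R)$, where $R$ is a random variable drawn independently of both $X$ and $X'$ and $g(\cdot; r)$ is deterministic for each fixed value $r$. Fix a target set $S$. Conditioned on $R = r$, the map $x \mapsto g(x;r)$ is deterministic, so the deterministic case above yields $\Pr[g(X;r)\in S] \le e^\eps \Pr[g(X';r)\in S] + \delta$ for every $r$. Because $R$ is independent of $X$ and of $X'$, averaging this pointwise inequality over the distribution $\mu_R$ of $R$ gives
\begin{align*}
\Pr[f(X)\in S] &= \int \Pr[g(X;r)\in S]\, d\mu_R(r) \\
&\le \int \bigl(e^\eps \Pr[g(X';r)\in S] + \delta\bigr)\, d\mu_R(r) = e^\eps \Pr[f(X')\in S] + \delta,
\end{align*}
which is the desired conclusion; the outer two equalities are just the law of total probability using independence of $R$.

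The only real obstacle is conceptual rather than computational: justifying the representation $f(x) = g(x;R)$ with randomness independent of the input, and applying the $(\eps,\delta)$-closeness hypothesis to the preimage $T$. In the discrete, finite setting relevant to this paper any measurability concerns are vacuous, and the independence of $R$ from $X, X'$ is precisely the standard modeling assumption for a post-processing step (the analyst's randomness does not depend on the private data). I expect no surprises beyond carefully tracking that the \emph{same} additive slack $\delta$ and multiplicative factor $e^\eps$ survive the averaging step, which they do because $\delta$ and $e^\eps$ are constants that factor out of the integral.
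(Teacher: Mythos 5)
Your proof is correct. Note that the paper itself states the Post Processing Theorem without proof, listing it among the standard preliminaries, so there is no paper proof to compare against; your argument is the standard one: reduce the randomized case to the deterministic case by conditioning on the post-processor's internal coins $R$ (valid precisely because $R$ is independent of $X$ and $X'$), handle the deterministic case by applying the $(\eps,\delta)$-closeness hypothesis to the preimage $T = f^{-1}(S) \subseteq D$, and observe that the constants $e^\eps$ and $\delta$ survive the averaging over $R$. Both steps are exactly right, and your remark about where the real content lies (the independence of the post-processing randomness from the data) is the correct one.
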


%\subsection{Placeholder Text}

%\elaine{deal with this text later}

%\section{Appendix}

\section{Additional Details of our Upper Bound Construction}
\subsection{Discretization of Real Values for Communication}
\label{sec:communication}

The clients need to send the reports tuple $(h_i, s_i, \tB_{i,1},\dots,tB_{i,b})$ to the server. 
For the $h_i$ and $s_i$, the client can send the random seed for the PRF to the server and the communication cost is $O(\xi)$. 
Here, $\xi$ is the security parameter and with only ${\sf negl}(\xi)$ probability, the randomness will be broken.
The bucket values are unbounded real values.
We actually know the ``raw bucket value'' $B_{i,1},\dots,B_{i,b}$ are trivially bounded by $[-k,k]$. 
The unbounded part comes from the Laplacian noise and it has good concentration property.
We can clip the value again with range $[-U,U]$, where $U=k+(\frac{\lambda}{\eps})\log \frac{10nb}{\beta}$.
We know if a random variable $r\sim \Lap(\frac{\lambda}{\eps})$, then $\Pr[|r| \ge t ] \le e^{-t\eps / \lambda}$. 
Using union bound, we know that the magnitudes of all $nb$ Laplacian random variables are smaller than $(\frac{\lambda}{\eps})\log \frac{10nb}{\beta}$ with prob. $1-\beta/10$. Then, we know for all clients, the report values are bounded by $[-U,U]$ with prob. $1-\beta/10$. 
Then, we discretize the value using the unbiased discretizer

\begin{align*}
    \label{dsc}
		{\tt DSC}(x)=
		\begin{cases}
			\lfloor x \rfloor, \text{  w.p. } 1-(x-\lfloor x \rfloor) \\
			\lfloor x \rfloor + 1, \text{  w.p. } x-\lfloor x \rfloor \\
		\end{cases}
\end{align*}

\noindent We need to prove the error introduced by discretization is small. 
We denote the discretized version of $\tB^c_{i}$ as $\textbf{B}^c_{i}$.
Trivially, for $c\in[n]$, $|\tilde{B}_{h^c(x)}s^c(x)- \textbf{B}^c_{i}s^c(x)| \le 1$, and also $\E[\tilde{B}^c_{h^c(x)}]=\E[\textbf{B}^c_{i}]$. 
Using Hoeffding's inequality, we can prove that $|\frac{1}{n}\sum_{c\in[n]}\tilde{B}^c_{h^c(x)}s^c(x)-\frac{1}{n}\sum_{c\in[n]} \textbf{B}^c_{i}s^c(x)| \le O\left(\sqrt{\frac{\log (d/\beta)}{n}}\right)$ with prob. at least $1-\beta/10d$. 
Combining the above argument, we conclude that error introduced by the communication process is asymptotically equal or less than the error introduced by other process.
	
	%Second, we want to prove the error introduced by discretization is small. We denote the discretized version of $\tB^c_{i}$ as $\textbf{B}^c_{i}$. Trivially, for $c\in[n]$, $|\tilde{B}_{h^c(x)}s^c(x)- \textbf{B}^c_{i}s^c(x)| \le 1$, and also $\E[\tilde{B}^c_{h^c(x)}]=\E[\textbf{B}^c_{i}]$. Using Hoeffding's inequality, we can prove that 
    %$|\frac{1}{n}\sum_{c\in[n]}\tilde{B}^c_{h^c(x)}s^c(x)-\frac{1}{n}\sum_{c\in[n]} \textbf{B}^c_{i}s^c(x)| \le O\left(\sqrt{\frac{\log (d/\beta)}{n}}\right)$ with prob. at least $1-\beta/10d$. 
	
	%Combining all the above argument, taking the union bound over all indices $x\in[d]$ and also the events that no values has been clipped, we can prove that, with prob. $1-\beta/5$, the error introduced by the communication preparation is at most $O\left(\sqrt{\frac{\log d}{n}}\right)$.

\subsection{Additional Details for the Privacy Proof}
\label{sec:privacy-proof}

\begin{claim}[Restatement of Claim\ref{claim:lap-noise}]
Given any two neighboring vectors $v,v'$. 
Taking the randomness of $h$ and $s$,
if $\Pr[\sum_{j\in[b]}|B_j - B'_j| \ge \lambda]\le \delta$,
then the two invocation of the local randomizers' output
distributions are $(\eps,\delta)$-close.
\end{claim}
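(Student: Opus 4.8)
The plan is to prove $(\eps,\delta)$-closeness by conditioning on the two hash functions $h,s$, which are sampled from a distribution that does not depend on the client's input. Since the view consists of the triple $(h,s,\tilde{\mathbf{B}})$ with $\tilde{\mathbf{B}}=(\tilde B_1,\dots,\tilde B_b)$, I would first fix a pair $(h,s)$. Given $(h,s)$, the clipped bin vector $\bar{\mathbf{B}}$ (under input $v$) and $\bar{\mathbf{B}}'$ (under input $v'$) are deterministic, and the only remaining randomness is the per-coordinate noise $\Lap(\Delta/\eps)$. The overall idea is then: on the ``good'' draws of $(h,s)$ I reduce to the standard Laplacian mechanism, and I charge the probability mass of the ``bad'' draws to the additive $\delta$ term.

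Concretely, I would partition the hash-function space into the good set $G=\{(h,s):\sum_{j\in[b]}|B_j-B'_j|\le\Delta\}$ and its complement. Because coordinatewise clipping to $[-\eta,\eta]$ is $1$-Lipschitz, on $G$ we also get $\sum_j|\bar B_j-\bar B'_j|\le\sum_j|B_j-B'_j|\le\Delta$, so the $L_1$-sensitivity of the query $\bar{\mathbf{B}}$ is at most $\Delta$ there. The routine Laplacian density-ratio bound (the product over the $b$ coordinates telescopes, using the triangle inequality $\sum_j|\,|y_j-\bar B_j|-|y_j-\bar B'_j|\,|\le\sum_j|\bar B_j-\bar B'_j|\le\Delta$) then yields, for every $(h,s)\in G$ and every set $T$ of noisy outputs,
\[
\Pr[\tilde{\mathbf{B}}\in T\mid h,s]\;\le\; e^{\eps}\,\Pr[\tilde{\mathbf{B}}'\in T\mid h,s].
\]
By the hypothesis of the claim, the complement of $G$ has probability at most $\delta$ under the input-independent draw of $(h,s)$.

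Finally, I would assemble the two ingredients by the law of total probability. Writing $S_{h,s}$ for the $(h,s)$-slice of an arbitrary measurable set $S$ of views, the combination is
\begin{align*}
\Pr[\view_{\algM}(v)\in S]
&=\sum_{(h,s)}\Pr[h,s]\,\Pr[\tilde{\mathbf{B}}\in S_{h,s}\mid h,s]\\
&\le \sum_{(h,s)\in G}\Pr[h,s]\,e^{\eps}\,\Pr[\tilde{\mathbf{B}}'\in S_{h,s}\mid h,s]\;+\;\Pr[(h,s)\notin G]\\
&\le e^{\eps}\,\Pr[\view_{\algM}(v')\in S]+\delta,
\end{align*}
which is exactly $(\eps,\delta)$-closeness. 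The one subtle point—the main obstacle—is that the ``bad'' event is defined jointly in terms of both $v$ and $v'$, yet it must be pulled out as a clean additive $\delta$ rather than being entangled with the $e^{\eps}$ factor; this works precisely because the event depends only on $(h,s)$, which is sampled independently of the input, so its probability is genuinely bounded by $\delta$. I note that a composition-style argument through Theorem~\ref{thm:seq_comp} does not apply directly, since the per-draw privacy level is $\eps$ only on $G$ and not uniformly over all $(h,s)$, so the explicit good/bad split above is the cleaner route.
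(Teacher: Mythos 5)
Your proposal is correct and follows essentially the same route as the paper's own proof: condition on the input-independent draw of $(h,s)$, apply the standard Laplacian density-ratio bound on the event $\sum_{j\in[b]}|B_j-B'_j|\le\lambda$, and charge the complementary event (probability at most $\delta$ by hypothesis) to the additive term. Your write-up is in fact slightly more careful than the paper's — you make the good/bad decomposition and the law-of-total-probability assembly explicit, and you justify via $1$-Lipschitzness of clipping that the sensitivity bound survives the clipping step, points the paper only gestures at.
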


\begin{proof}

    \ignore{
    Fix any two $k$-sparse vector $v,v'\in[-1,1]^d$ such that $\Vert v-v'\Vert_1 \le L$. We firstly consider a simple case where $L \le 2\eta b$, so that the noise parameter is $\lambda=L/\eps$. Notice that the distribution of the hash function $h$ and the sign function $s$ are not dependent on the input vector $\v$. After the binning process and before the clipping process, the two vectors' information is ``mapped'' into the buckets. We say the values are $B_1,\dots,B_b$ for vector $v$ and $B'_1,\dots, B'_b$ for vector $v'$. We now want to prove that, for any $h,s$, $\sum_{i\in[b]}|B_i-B'_i|\le L$. 
    \begin{align*}
          & \sum_{i\in[b]}|B_i-B'_i| \\ 
        = & \sum_{i\in[b]}\left|\sum_{j\in[d], h(j)=i} s(j)(v_i-v'_i)\right| \\ 
        \le & \sum_{i\in[b]}\sum_{j\in[d], h(j)=i} |v_i-v'_i| \\
        = & L
    \end{align*}
    }
    
    The clipping process will only make the difference smaller.
    Conditioned on the case that $\sum_{j \in [b]} |B_{j} - B'_j|\le \lambda$. Then, we can bound the ratio between the probability density function of the r.v. $B_j+r_j$ and $B'_j + r_j$ : for any $x_1, \dots, x_b \in \mathbb{R}$,
    \begin{align*}
        & \frac{p_v(\land_{j \in b}B_j+r_j=x|h,s)}{p_{v'}(\land_{j \in b}B'_j+r_j=x|h,s)}  
        =  \frac{\prod_{j\in[b]}\exp(\frac{\eps}{\lambda}\|x-B_j|)}{\prod_{j\in[b]}\exp(\frac{\eps}{\lambda}|x-B'_{j}|)} \\
        = & \exp\left(\frac{\eps}{\lambda} \sum_{j \in [b]} (|x-B_j|-|x-B'_j|)\right) \\
        \le & \exp\left(\frac{\eps}{\lambda} \sum_{j \in [b]} |B_j-B'_j|\right) 
        \le  \exp(\eps).
    \end{align*}
    
    Therefore, the distribution of $\tilde{B}_i$ and $\tilde{B}'_i$ is $(\eps,0)$ close. 
    By post-processing theorem, the joint distribution of the final report $\tB_1,\dots,\tB_b$ and $\tB'_1,\dots,\tB'_b$ is still $(\eps,0)$-close.
    Considering the failure probability $\delta$ such that
    some randomly sampled $h$ and $s$ cause
    $\sum_{j\in[b]}|B_j-B'_j|> \delta$, 
    the distributions for the whole outputs $(h, s, \tB_1,\dots,\tB_b)$ and $(h, s, \tB'_1,\dots,\tB'_b)$ are $(\eps,\delta)$-close.

    %Next, we provide the proof for the case when $L \ge 2\eta b$. It is trivial to see that for all $i\in [b]$, $|B_i - B'_i| \le 2\eta$ (by clipping). So $\sum_{i\in[b]}|B_i - B'_i| \le 2\eta b $. It can be proved that the distribution of $\tilde{B}_i$ and $\tilde{B}'_i$ is $(\eps,0)$-close. 
    
    %In summary, we proved the local randomizer is $(\eps,0)$-$L$-LDP.
    
\end{proof}

\subsection{Full Proof of the Utility Theorem}
\label{sec:full-proof}

Below we give the full proof of the utility statement in Theorem~\ref{thm:main-utility}.

\begin{proof}

    %We separate the proof into two claims:
    %\begin{enumerate}
    %    \item If the clients send the reports without the extra communication preparation processing, the aggregator outputs an estimation with  $L_\infty$ error of $O\left(\left(\sqrt{\frac{k}{b} }+\frac{1}{\eps}\min(L,2\eta b)\right)\sqrt{\frac{\log(d/\beta)}{n}}\right)$ with probability at least $1-4\beta/5$;
    %    \item The $L_\infty$ difference for the aggregator's estimation on unprocessed reports and processed reports are at most  $O\left(\sqrt{\frac{\log(d/\beta)}{n}}\right)$ with probability at least $1-\beta / 5$. 
    %\end{enumerate}
    
    %We now prove the first claim. We follow the procedure of the local randomizer \ref{alg:sparse-vector-local}. 
% 	Actually, each step gradually adding more error to the final results. 
	Our proof bound the absolute error incurred step by step. 
	%To specify the values, 
	%we define the
	%$B_{j},\bar{B}_{j},\tilde{B}_{j}$ 
	%in client $i$'s local randomizer
	%as $B_{i,j},\bar{B}_{i, j},\tilde{B}_{i,j}$, 
	%for all $j\in[b],i\in[n]$.  
	Fix an index $x\in[d]$.
    The server computes the estimation
    $\hat{v}_x$ as $\hat{v}_x = \frac{1}{n}\sum_{i\in[n]} s_i(x)\tB_{i, h_i(x)}$. 
    We bound the error by the three steps: 1) binning; 2) clipping; 3) adding Laplacian noise.
	
	\begin{align*}
		& |\bar{v}_x-\hat{v}_x| \\
        = & \left|\frac{1}{n}\sum_{i\in[n]} v_{i,x} - \frac{1}{n}\sum_{i\in[n]} s_i(x)\tB_{i, h_i(x)}\right| \\ 
		\le & \left|\frac{1}{n}\sum_{i\in[n]} v_{i,x} - \frac{1}{n}\sum_{i\in[n]} B_{i, h_i(x)}s_i(x)\right| \\
		& +\left|\frac{1}{n}\sum_{i\in[n]} B_{i, h_i(x)}s_i(x)-\frac{1}{n}\sum_{i\in[n]} \bar{B}_{i, h_i(x)}s_i(x)\right|  \\
        & +\left|\frac{1}{n}\sum_{i\in[n]} \bar{B}_{i, h_i(x)}s_i(x)-\frac{1}{n}\sum_{i\in[n]}\tilde{B}_{i, h_i(x)}s_i(x)\right|  \\
	\end{align*}
	
	We now look at the binning error term 
	$ |\frac{1}{n}\sum_{i\in[n]} v_{i,x} - \frac{1}{n}\sum_{i\in[n]} B_{i,h_i(x)}s_i(x)| $. 
	We have $B_{i, h_i(x)}=\sum_{l\in[d]}\I[h_i(l)=h_i(x)]v_{i,l}s_i(l)$. 
	%So the error term is
	%$|\frac{1}{n}\sum_{c\in[n]}\sum_{l\in[d],l\ne x}\I[h^c(l)=h^c(x)]v_{c,l}s^c(l)s^c(x)|$.
	Define random variables
	$Y_{i,l}=\I[h_i(l)=h_i(x)]v_{i,l}s_i(l)s_i(x)$ for $i\in[n],l\in[d]$. 
	Then the error term can be rewritten as
	$|\frac{1}{n}\sum_{i\in[n]}\sum_{l\in[d],l\ne x}Y_{i,l}|$. 
	Since $s_i(\cdot)$ is a random $\pm 1$ function, $s_i(l)s_i(x)$ can be seen as an independent uniform $\pm 1$ random variable.
	Also, the hash function $h_i$ is a random oracle, so for $l\ne x$,
	$\Pr[\I[h_i(l)=h_i(x)]=1]=\frac{1}{b}$. 
	So we know $\E[Y_{i,l}]=0,\E[(Y_{i,l})^2]=\frac{1}{b}v_{i,l}^2 \le \frac{1}{b}$. 
	Thus, we can use the Bernstein's Inequality:
	\begin{align*}
	    & \Pr\left[\left|\frac{1}{n}\sum_{i\in[n]}\sum_{l\in[d],l\ne x}Y_{i,l}\right| \ge t\right] \\
	    \le & 2\exp\left(-\frac{(nt)^2}{\sum_{i\in[n]}\sum_{l\in[d],l\ne x}\E[(Y_{i,l})^2]+\frac{1}{3}nt}\right) \\
	    \le & 2\exp\left(-\frac{(nt)^2}{\frac{nk}{b}+\frac{1}{3}nt}\right) 
	    =  2\exp\left(-\frac{nt^2}{\frac{k}{b}+\frac{1}{3}t}\right).
	\end{align*}
	
	Using the fact that $k/b\ge \log(5d/\beta)/n$ and setting $t=\Theta(\sqrt{\frac{k}{b}}\sqrt{\frac{\log(d/\beta)}{n}})$ with a proper constant, we can prove that $\Pr[|\frac{1}{n}\sum_{i\in[n]}\sum_{l\in[d],l\ne x}Y_{i,l}| \ge t] \le \beta/10d$. That means, $ |\frac{1}{n}\sum_{i\in[n]} v_{i,x} - \frac{1}{n}\sum_{i\in[n]} B_{i, h_i(x)}s_i(x)| =O(\sqrt{\frac{k}{b}}\sqrt{\frac{\log(d/\beta)}{n}})$ with probability $1-\beta/10d$.
	
	Next, we look at the clipping error term $|\frac{1}{n}\sum_{c\in[n]} B_{i, h_i(x)}s_i(x)-\frac{1}{n}\sum_{i\in[n]} \bar{B}_{i, h_i(x)}s_i(x)|$. We actually try to prove the clipping range is large enough, so that, with probability $1-\beta/2$, for all $i\in[n],j\in[b]$, $B_{i,j}=\bar{B}_{i,j}$, i.e., $|B_{i,j}|\le \eta$. Then, the error term becomes zero naturally. Fix any $i,j$. $B_{i,j}=\sum_{l\in[d]} \I[h_i(l)=j]s_i(l)v_{i,l}$. We define a random variable $X_l=\I[h_i(l)=j]s_l(l)v_{i,l}$. Then we know $B_{i,j}=\sum_{l\in[d]} X_l$.  Its moment generating function is 
    \begin{align*}
       & \E[\exp(sB_{i,j})] =  \prod_{l\in[d]} \exp(sX_l) \\
        = & \prod_{l\in[d]} \left(1-\frac{1}{b}+\frac{1}{2b}(e^{-sv_{i,l}} + e^{sv_{i,l}})\right) \\
        %= & \prod_{l\in[d]} (1-\frac{1}{b}+\frac{1}{b}(1+s^2v^2_{c,l}/2! + s^4v^4_{c,l}/4!+\dots)) \\
        %\le & \prod_{l\in[d]} (1+s^2v^2_{c,l}/2! + s^4v^4_{c,l}/4!+\dots) \\
        \le & \prod_{l\in[d]} \frac{1}{2}(e^{-sv_{i,l}} + e^{sv_{i,l}})  
        \le  \prod_{l\in[d]} \exp\left(s^2v^2_{i,l}/2\right) \\
        = & \exp\left(\frac{s^2}{2}\sum_{l\in[d]} v^2_{i,l}\right)  
        \le  \exp\left(\frac{ks^2}{2}\right). 
    \end{align*}

   Hence, $B_{i,j}$ is a sub-Gaussian random variable with variance $k$.
    %Notice that when $\eta=k$, all $|B^c_{i}|\le \eta$ is trivially true because every client's vector is $k$-sparse and each coordinate is bounded by $[-1,+1]$. 
    When $\eta\ge\sqrt{2k\log(4nb/\beta)}$,  $\Pr[|B_{i,j}|\ge \eta]\le 2\exp(-\frac{\eta^2}{2k})=2\exp(-\log(4nb/\beta))=\beta/2nb$. Using union bound, we prove that, with prob. $1-\beta/2$, for all $i\in[n],j\in[b]$, $|B_{i,j}|\le \eta$, i.e., $\bar{B}_{i,j}=B_{i,j}$.
    
	Now, we look at the error term, $|\frac{1}{n}\sum_{i\in[n]} \bar{B}_{i,h_i(x)}s_i(x)-\frac{1}{n}\sum_{i\in[n]}\tilde{B}_{i, h_i(x)}s_i(x)|$.
	We know that for all $j\in[b],i\in[n]$, $\tilde{B}_{i,j}=\bar{B}_{i,j}+r_{i,j}$, where $r_{i,j}\sim {\sf Lap}(\frac{\lambda}{\eps})$.
	%d $\lambda=\frac{1}{\eps}\min(L,2\eta b)$. 
	Also, using the fact that Laplacian distribution is symmetrical over positive value and negative value and $s_i(x)$ is a uniform random $\pm 1$ variable, the error term can be rewritten as $|\frac{1}{n}\sum_{i\in[n]} r_{i,h_i(x)}|$. We know that the Laplacian noise $\Lap(\frac{\lambda}{\eps})$'s distribution is a sub-exponential distribution $\subE(\frac{\lambda}{\eps})$. Using the concentration bound for sub-exponential random variable, we can prove that with prob. $1-\beta/10d$, $|\frac{1}{n}\sum_{i\in[n]} \bar{B}_{i,h_i(x)}s_i(x)-\frac{1}{n}\sum_{i\in[n]}\tilde{B}_{i,h_i(x)}s_i(x)|\le O\left(\frac{\lambda}{\eps}
   \sqrt{\frac{\log(d/\beta)}{n}}\right)$.
   %where $\lambda=\frac{1}{\eps}\min(L,2\eta b)$.
   
   Taking the union bound over all $x\in[d]$, we prove that with probability at least $1-5\beta/4$,  $\max_{x \in [d]}|\frac{1}{n}\sum_{i\in[n]} v_{i,x} - \frac{1}{n}\sum_{i\in[n]} s_i(x)\tB_{i,h_i(x)}| \le O\left(\left(\sqrt{\frac{k}{b} }+\frac{\lambda}{\eps}\right)\sqrt{\frac{\log(d/\beta)}{n}}\right)$.

\end{proof}

\ignore{
	
\begin{lemma}
	\label{lemma:bridge}
	Fix a public parameter configuration $\theta$. If for any input pair $\v\sim \v'$ that differ in $V_i$ and $V'_i$ and for any $S\subseteq D\times Range(R)$, it holds that
	\begin{align*}
	    & \Pr_{r_{pub}, r_{sec}}[\{r_{pub}, R(V_i; \theta, r_{pub},r_{sec})\}\in S] \\ 
	    \le & e^{\eps}\Pr_{r_{pub}, r_{sec}}[\{r_{pub}, R(V'_i;\theta, r_{pub},r_{sec})\}\in S]+\delta,
	\end{align*}
	 then the one round mechanism $\algM$ is $(\eps,\delta)$-differentially private. Here, the randomness comes from the publc and the private random tape used in $R(\cdot)$. $D$ is the domain of public randomness $r_{pub}$.
\end{lemma}

\begin{proof}
	Fix an input pair $\v=(V_1,V_2,\dots,V_n),\v'=(V'_1,V'_2,\dots,V'_n)$ for the one round mechanism $\algM$. Define random varaible $z_i=R(V_i;\theta,  r_{pub})$ and the same goes for $z'_i=R(V'_i;\theta, r_{pub})$.  Without loss of generality, we assume that $\v$ and $\v'$ differ in the first client's input. Other cases can be proved by similar argument. From the condition, we know that the distribution of $\{r_{pub}, z_1\}$ and $\{r_{pub}, z'_1\}$ are $(\eps,\delta)$-close. Because $V_2=V'_2$, conditioned on the same $r_{pub}$, we know that the  distribution of $z_2$ and $z'_2$ are exactly the same. In other words, the distribution are $(0,0)$-close. Using the sequential composition theorem, we know that the distribution of $\{r_{pub}, z_1, z_2\}$ and $\{r_{pub}, z'_1,z'_2\}$ are also $(\eps,\delta)$-close. By repeating this step, we can prove that, the distribution of $\{r_{pub}, (z_1,\dots,z_n)\}$ and $\{r_{pub}, (z'_1,\dots,z'_n)\}$ are $(\eps,\delta)$-close. Thus we prove the one-round mechanism $\algM$ is $(\eps,\delta)$-differentially private. 
\end{proof}

\begin{definition}
	\label{def:subg}
	A random variable $X$ is said to be sub-exponential with parameter $\lambda$ (denoted $X\sim subE(\lambda)$) if $E[X]=0$ and its moment generating function satisfies that, for $\forall |s|$,
	\begin{align}
		\E[e^{sX}]\le e^{s^2\lambda^2/2}.
	\end{align}
\end{definition}

%%%%%%%%%%%%%%%%%%
\iffalse
\begin{theorem}[Sequential Composition Theorem]
	Assume the distribution of $X$ and $X'$ are $(\eps_1,\delta_1)$-close. If for any of $x\in Domain(X)$, the posterior-distribution of another random varaible $Y$ conditioned on $X=x$ and $Y'$ conditioned on $X'=x$ are $(\eps_2,\delta_2)$ close, then the distribution of $(X,Y)$ and $(X',Y')$ are $(\eps_1+\eps_2,\delta_1+\delta_2)$ close. 
\end{theorem}
\begin{proof}
	Let $D_1=Support(X)$ and $D_2=Support(Y)$. For any $S\subseteq D_1$, define $\mu(S)=(\Pr[X\in S]-e^\eps\Pr[X'\in S])_+$. Then $\mu$ is a measure on $D_1$ and $\mu(S)\le \delta_1$ for all $S$. So we simply have for every $x\in D_1$, 
	$$\Pr[X \in dx] \le e^\eps\Pr[X' \in dx]+\mu(dx).$$
	Also, for any $x \in D_1$, 
	\begin{align}
		\Pr[(x,Y)\in S] \le & \min(e^{\eps_2}\Pr[(x,Y')\in S] + \delta_2, 1) \\
		\le & \min(e^{\eps_2}\Pr[(x,Y')\in S], 1) + \delta_2. \\
	\end{align}
	Then for any $S\subseteq D_1 \times D_2 $, we have 
	\begin{align}
		\Pr[(X,Y)\in S] 
		= & \int_{x\in D_1} \Pr[(x,Y)\in S]\Pr[X\in dx] \\
		\le & \int_{x\in D_1} (\min(e^{\eps_2}\Pr[(x, Y') \in S], 1)+ \delta_2)\Pr[X\in dx]  \\
		= & \int_{x\in D_1} \min(e^{\eps_2}\Pr[(x, Y') \in S], 1)\Pr[X\in dx] + \delta_2 \\
		\le & \int_{x\in D_1} \min(e^{\eps_2}\Pr[(x, Y') \in S], 1)(e^{\eps_1}\Pr[X'\in dx]+\mu(dx)) + \delta_2 \\
		\le & \int_{x\in D_1} (e^{\eps_2+\eps_1}\Pr[(x, Y') \in S]\Pr[X'\in dx])+ \mu(D_1) + \delta_2 \\
		\le & e^{\eps_1+\eps_2}\Pr[(X',Y')\in S]+\delta_1+\delta_2
	\end{align}
\end{proof}

\fi

\begin{theorem}
	\label{theorem:local_multi_privacy}
	Assume the one-item randomizer $r(\cdot,\eps)$ be an $\eps$-DP randomizer w.r.t. the neighboring relation that for any pair $v, v'\in[d]$, it always holds $v\sim v'$. Then the multi-item randomizer \ref{alg:freq_local}, i.e. $R_{r}(\cdot)$, is an $\eps$-DP randomizer w.r.t. the neighboring relation $\sim$ defined as the following: for any two item sets $V\sim V'$, the edit distance between $V$ and $V'$ is at most one. Here, edit distance is defined by the smallest operation number to change one set to the other, where one operation can be replacing one item with a different item, adding one item or deleting one item. 
\end{theorem}

\begin{proof}
	In the following proof, without ambiguity, we will write $R_r(\cdot,\eps;\theta)$ as $R_r(\cdot)$ and write $r(\cdot,\eps/2; \theta)$ as $r(\cdot)$. Fix a possible output $(h,T)$. We want to prove that,
	
	\begin{align*}
		\Pr[R_r(V)=(h,T)] \le e^\eps\Pr[R_r(V)=(h,T)]
	\end{align*}

	First, we prove the case when $|V|=|V'|=k$ and the two sets differ in one item. Let's write $V=\{v_1,\dots v_{k-1}, x\}$ and $V'=\{v_1\dots v_{k-1},y\}$, where $x\ne y$. We fix a sampling function $\psi$ that takes the $\{i_1, \dots, i_b\}$ items from a set of $k$ items. We also fix a hash function $h$. We know $\psi(V)$ and $\psi(V')$ can be equal or differ by only one item. In the first case, We know $\Pr[R(V)=(h, T)\mid h,\psi] =\Pr[R(V')=(h, T)\mid h,\psi]$. In the second case, $x$ and $y$ are sampled by $\psi(V)$ and $\psi(V')$. If $h(x)=h(y)$, then we know that, except $T_{h(x)}$, all the reports in the table $T$ are generated by the same item or dummy item. Since $r(\cdot,\eps/2)$ is an $\eps/2$-DP algorithm, we know $\frac{\Pr[r(x)=T_{h(x)}]}{\Pr[r(y)=T_{h(x)}]}\in [e^{-\eps/2}, e^{\eps/2}]$. Also, we know $\frac{\Pr[r(x)=T_{h(x)}]}{\Pr[r(\perp)=T_{h(y)}]}\in [e^{-\eps/2}, e^{\eps/2}]$ and $\frac{\Pr[r(y)=T_{h(x)}]}{\Pr[(\perp)=T_{h(x)}]}\in [e^{-\eps/2}, e^{\eps/2}]$. Thus, in this case, we know $\Pr[R(V)=(h, T)\mid h,\psi] \le e^{\eps/2}\Pr[R(V')=(h, T)\mid h,\psi]$. If $h(x)\ne h(y)$, then we know during the execution, all reports in $T$ are generated by the same item except the reports in $T_{h(x)}$ and $T_{h(y)}$. Again, we know that, any two item $v$ and $v'$, $\frac{\Pr[r(v)=r]}{\Pr[r(v')=r]}\in[e^{-\eps/2}, e^{\eps/2}]$. For report $T_{h(x)}$, during the execution of $V$, it could be generated from $r(x)$ or $r(\perp)$. During the execution of $V$, it could be generated from $r(\perp)$ or $r(t)$, where $x$ is another item such that $h(t)=h(a)$. In all those cases, we can still bound the probability ratio by $e^{\eps/2}$ because of the DP property of $r$. Similarly, for report $T_{h(b)}$, the probability ratio is bounded by $e^{\eps/2}$. Therefore, we know $\Pr[R(V)=(h, T)\mid h,\psi] \le e^{\eps}\Pr[R(V')=(h, T)\mid h,\psi]$ in this case. We have 
	
	\begin{align*}
		& \Pr[R(V)=(h, T)|h] \\
		= & \sum_{\psi} \Pr[R(V)=(h, T) \mid \psi, h] \Pr[\psi]  \\
		\le & e^{\eps} \sum_{\psi} \Pr[R(V')=(h, T) \mid \psi, h] \Pr[\psi] \\
		= & e^{\eps}\Pr[R(V)=(h, T)|h].  \\
	\end{align*}
	
	Next, we look at the more general cases that $|V|=|V'|\le k$ and the two sets differ in one item. Since the padding set $V^*$ is randomly selected and only depends on the size of input set size, we can fix a $V^*$ first. From the earlier argument for the sets with size of $k$, we know that, after the padding process, $\Pr[R(V\cup V^*)=(h, T)\mid h, V^*] \le e^{\eps}\Pr[R(V'\cup V^*)=(h, T)\mid h, V^*]$. Combining the two argument, we have 
	\begin{align*}
		& \Pr[R(V)=(h, T)] \\
		= & \sum_{V^*, h} \Pr[R(V\cup V^*)=(h, T) \mid V^*, h] \Pr[V^* | V] \Pr[h] \\
		\le & e^{\eps} \sum_{V^*, h} \Pr[R(V' \cup V^*)=(h, T) \mid V^*, h] \Pr[V^* | V'] \Pr[h] \\
		= & e^{\eps}\Pr[R(V)=(h, T)]  \\
	\end{align*}
	
	Now let us look at the case when $V'$ is just the set that $V$ adds one item $x$, i.e. $|V'|=|V|+1$. The problem is the padding set $V^*$ depends on the size of the input set. For the padding set $V^*$, we split the process of random sampling into two steps: (1) randomly sampling $k-|V|$ item from $[d+1,d+k]$ as $V^*$; (2) randomly sampling one item $y$ from $V^*$. Next, we know that $V \cup V^*$ and $V' \cup (V^*/\{y\})$ has the same size and they only differ in one item. By the same argument in the first paragraph, we know that $\Pr[R(V \cup V^*)=(h,T)] \le e^{\eps}\Pr[R(V' \cup (V^*/\{y\}))=(h,T)]$. Hence, we have 
	\begin{align}
		& \Pr[R(V)=(h,T)] \nonumber \\
		= & \sum_{V^*, h} \Pr[R(V \cup V^*)=(h,T) \mid V^*, h]\Pr[V^*\mid V]\Pr[h] \nonumber \\
		= & \sum_{V^*, h} \frac{1}{|V^*|}\sum_{y\in V^*}\Pr[R(V \cup V^*)=(h,T) \mid V^*, h] \nonumber \\
		& \cdot \Pr[V^*\mid V] \Pr[h] \nonumber  \\
		\le & \sum_{V^*, h} \sum_{y\in V^*} e^{\eps}\Pr[R(V' \cup (V^*/\{y\}))=(h,T) \mid V^*/\{y\}, h] \nonumber \\
		& \cdot \left(\frac{1}{|V^*|}\Pr[V^*\mid V]\right)\Pr[h] \nonumber \\
		= & \sum_{V^*, h} \sum_{y\in V^*} e^{\eps}\Pr[R(V' \cup (V^*/\{y\}))=(h,T) \mid V^*/\{y\}, h] \nonumber \\
		& \cdot \left(\frac{1}{|V'|}\Pr[V^*/\{y\}\mid V']\right)\Pr[h] \label{eq:random_sample}\\	
		= & \sum_{V'^* \in \{d+1,\dots,d+k\}^{k-|V'|}, h} \frac{1}{|V'|} \sum_{y \in [d+1,d+k]/V'^*} \nonumber \\
		& e^{\eps}\Pr[R(V' \cup V'^*)=(h,T) \mid V'^*, h]\Pr[V'^*\mid V']\Pr[h] \label{eq:change_order} \\
		= & \sum_{V'^* \in \{d+1,\dots,d+k\}^{k-|V'|}, h} \nonumber \\
		& e^{\eps}\Pr[R(V' \cup V'^*)=(h,T) \mid V'^*, h] \Pr[V'^*\mid V']\Pr[h] \label{eq:combine}\\
		= & e^{\eps}\Pr[R(V')=(h,T)] \nonumber
	\end{align}
	
	The equation \ref{eq:random_sample} uses the fact that $\Pr[V^*|V]=\frac{1}{\binom{k}{k-|V|}}$, $ \Pr[V^*/\{y\}|V']=\frac{1}{\binom{k}{k-|V'|}}$, $|V'|=|V|+1$ and $|V^*|=k-|V|$. The equation \ref{eq:change_order} equation changes the notation such that $V^*=V'^*\cup\{y\}$. The equation \ref{eq:combine} combines inner-summation terms together. 
	
	The case that $V'$ is just the set that $V$ deletes one item $x$, i.e. $|V'|=|V|-1$, can be proved by a similar argument.
\end{proof}

\begin{theorem}
	%If the aggregation algorithm $\cA(\cdot)$ that takes the one-item randomized report as input and outputs an unbiased estimation, then algorithm \ref{alg:freq} outputs an unbiased estimation. Also,
	If the aggregation algorithm $\cA(\cdot)$ outputs an unbiased estimation with $L_\infty$ error within $\eta$ with probability at least $1-\beta/2d$, the multi-item frequency estimation algorithm~\ref{alg:freq} outputs an unbiased estimation within $O\left(\eta+\sqrt{\frac{\log(d/\beta)}{n}}\right)$ $L_\infty$ error with probability at least $1-\beta$.
\end{theorem}

\begin{proof}
	\label{proof:unbias}
	Fix a particular item index $x\in[d]$ and let $c$ be its true count, i.e. $c=|\{i \mid i\in [n], x\in v_i\}|$. 
	Let's define $c_{\text{survive}}$ be the number of those clients that place item $x$'s report in the corresponding bucket. If client $i$ owns this item, then with probability $p=(1-1/k)^{k-1}$, there is no hash collision with item $x$. Thus, we have $\E[c_{\text{survive}}]=pc$.  %, there are two requirements that she really puts the report of $x$ in $B^i_{h^i(x)}$: 1) $x\in v_i$, i.e., the item is sampled; 2) there are no other items have hashing collision with $x$. We know the first event happens with probability $\frac{b}{k}$. If one item is sampled, the remaining $b-1$ sampled items have no hash collision with the item with probability $(1-1/b)^{b-1}$. Thus, we can conclude the probability of one item being report is exactly $p=\frac{b}{k}(1-1/b)^{b-1}$. Also, all clients pick their sampling and hash functions independently. Thus, we have $\E[c_{\text{survive}}]=pc$. 
	
	For any item set $x_1\dots x_n$, $\cA(\Q(x_1), \dots, \Q(x_n))$ outputs an unbiased frequency estimation vector for the items $x_1,\dots,x_n$. Also, if a client $i$ reports the item $x$, the report will be in the bucket $B^i_{h^i(x)}$. Since $f^*=\cA(B^1_{h^1(x)},\dots,B^1_{h^1(x)})$,  $\E[f^*_v]=\frac{1}{n}c_{\text{survive}}$. Combine all the argument, we have $\E[\hat{f}_x]=\E[\frac{1}{p}f^*_x]=\E[\frac{1}{pn}c_{\text{survive}}]=\frac{1}{n}c$. It means the algorithm \ref{alg:freq} is an unbiased frequency estimation algorithm.
	
	Since the aggregation algorithm outputs an estimation with $L_\infty$ error within $\eta$ with prob. at least $1-\beta/2d$, we know that $|f^*_x-\frac{1}{n}c_{\text{survive}}|\le \eta$  with prob. at least $1-\beta/2d$. Also, we need to bound the difference between $\frac{1}{p}c_{\text{survive}}$ and $c$. Since every client holding the item $x$ will independently report it with prob. $p=(1-1/k)^{k-1}$. Applying the Chernoff's bound, with prob. at least $1-\beta/2d$, we have $|c_{\text{survive}}-pc|\le O(\sqrt{n\log(d/\beta)})$. Thus, $|\frac{1}{pn}c_{\text{survive}}-\frac{1}{n}c|\le O(\frac{1}{p}\sqrt{\log(d/\beta)/n})$. Also, we know $(1-1/k)^{k-1}\ge 1/e$ for every $k$, so $1/p=O(1)$. That gives us that  $O(\frac{1}{p}\sqrt{\log(d/\beta)/n})=O(\sqrt{\log(d/\beta)/n})$. Taking the union bound, with prob. at least $1-\beta/d$,
	\begin{align}
		|\hat{f}_x-\frac{1}{n}c| 
		= & |\frac{1}{p}f^*_x-\frac{1}{n}c| \\
		\le & |\frac{1}{p}f^*_x-\frac{1}{pn}c_{survive}| + |\frac{1}{pn}c_{\text{survive}}-\frac{1}{n}c| \\
		\le & \frac{1}{p}\eta + O(\sqrt{\log(d/\beta)/n}) \\
		= & O((\eta+\sqrt{\log(d/\beta)/n})) \\
	\end{align}

	Finally, by taking the union bound over all indices in $[d]$, we can prove the conclusion.
\end{proof}

}

\section{Detailed Lower Bound Proof}
\label{sec:appendix-lb}

In this section, we give the detailed proof of Theorem~\ref{th:main_lb}.

%\noindent \textbf{Estimation of Distribution Mean.}
\paragraph{Estimation of distribution mean.}
Because the empirical average $\overline{v}$
is concentrated around the distribution mean $\E[\mcal{P}]$ (Lemma~\ref{lem:empdistr}),
it suffices to consider the expected error of estimating $\E[\mcal{P}]$
by the following quantity:

\begin{align}
    \cE(\cA; \cP) \triangleq \E[\linf{\cA(\bz_1,\bz_2,\dots,\bz_n)-\E[\cP]}], 
\end{align}
where the randomness comes from sampling $v_i$ from $\cP$, the randomized algorithms $\Q_i$'s from all clients and the estimator $\cA$.
The following result (which is also used in~\cite{bassily2015local}) implies that it suffices
to prove the same asymptotic lower bound for $\cE(\cA; \cP)$
to achieve Theorem~\ref{th:main_lb}.

\begin{lemma}[Empirical Average vs Distribution Mean]
\label{lem:empdistr}
Let $\overline{V}$ be the empirical average
of $n$ i.i.d. samples from $\cP$.
Then, $\E[\linf{\bar{V}-\E[\cP]}] \leq O\left(\sqrt{\frac{\log d}{n}}\right)
= o\left(\frac{1}{\eps}\sqrt{\frac{k\log(d/k)}{n}} \right)$.
\end{lemma}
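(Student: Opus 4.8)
The plan is to prove the lemma coordinate-by-coordinate and then convert a per-coordinate concentration bound into a bound on the expected maximum. Write $p_x := \E[\cP]_x = \E[v_{i,x}]$, and note that since every sample lies in $\cS \subseteq \{0,1\}^d$, each coordinate $v_{i,x}$ is a Bernoulli random variable. Hence $\overline{V}_x - p_x = \frac{1}{n}\sum_{i\in[n]}(v_{i,x}-p_x)$ is an average of $n$ i.i.d.\ zero-mean terms, each supported on $[-p_x,\,1-p_x]$ and therefore of range at most $1$. This reduces the lemma to controlling $\E[\max_{x\in[d]} |W_x|]$ where $W_x := \overline{V}_x - p_x$.

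First I would apply Hoeffding's lemma to conclude that each centered term is $\tfrac14$-sub-Gaussian, so that $W_x$ is $\tfrac{1}{4n}$-sub-Gaussian; in particular $\Pr[|W_x| \ge t] \le 2\exp(-2nt^2)$ for all $t > 0$. Next I would bound $\E[\max_{x} |W_x|]$ by the standard sub-Gaussian maximal inequality: applying the moment-generating-function bound to the $2d$ variables $\{\pm W_x\}_{x\in[d]}$ gives $\E[\max_x |W_x|] \le \frac{1}{2\sqrt{n}}\sqrt{2\ln(2d)}$. Equivalently, one integrates the tail $\Pr[\max_x |W_x| > t] \le 2d\,e^{-2nt^2}$ after splitting the integral at $t_0 = \Theta(\sqrt{\log d / n})$. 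Either route yields $\E[\linf{\overline{V} - \E[\cP]}] = \E[\max_x |W_x|] = O\!\left(\sqrt{\tfrac{\log d}{n}}\right)$, which is the first claimed bound.

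Finally I would verify the asymptotic comparison $O(\sqrt{\log d / n}) = o\!\left(\frac{1}{\eps}\sqrt{k\log(d/k)/n}\right)$. Since $\eps = O(1)$, it suffices to show $\log d = o(k\log(d/k))$ in the relevant regime; using the corollary's assumption $2k \le \sqrt{d}$ we get $d/k \ge 2\sqrt{d}$, hence $\log(d/k) \ge \tfrac12\log d$ and $\frac{\log d}{k\log(d/k)} \le \frac{2}{k}$, which is lower order. The computations here are entirely routine; the only point requiring care is matching the $\sqrt{\log d}$ factor in the maximal inequality rather than incurring an extra factor from a crude per-coordinate union bound, which the sub-Gaussian tail handles cleanly. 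The substance of the lemma is thus simply the observation that the empirical-average sampling error is of strictly lower order than the privacy-induced estimation error that the subsequent lower bound will exhibit, justifying the reduction from estimating $\overline{v}$ to estimating the distribution mean $\E[\cP]$.
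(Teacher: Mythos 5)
Your proof is correct and follows essentially the same route as the paper's: a per-coordinate Hoeffding bound, a union bound over the $d$ coordinates (packaged in your case as the sub-Gaussian maximal inequality), and integration of the resulting tail to bound the expectation, yielding $O\left(\sqrt{\frac{\log d}{n}}\right)$. Your explicit verification of the $o\left(\frac{1}{\eps}\sqrt{k\log(d/k)/n}\right)$ comparison under $2k \le \sqrt{d}$ is a nice addition that the paper leaves implicit, but it does not change the substance of the argument.
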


\ignore{
\begin{proof}
Notice that each coordinate of the input vector is bounded by $[-1,1]$, so we can simply apply Hoeffding's Inequality and union bound. We get for every $t>0$, $\Pr[\linf{\bar{V}-\E[\cP]} \ge t] \le 2d\exp(\frac{-nt^2}{2})$. We set an appropriate $t=2\sqrt{\frac{\log (2d)}{n}}$. Finally, 
    \begin{align*}
        \E[\linf{\bar{V}-\E[\cP]}] & =  \int_{t=0}^{\infty} \Pr[\linf{\bar{V}-\E[\cP]} \ge t] dt\\
        & \le \sum_{i=1}^{\infty} t\Pr[\linf{\bar{V}-\E[\cP]} \ge (i-1)t] \\
        & = t (1 + \sum_{i=1}^{\infty} \Pr[\linf{\bar{V}-\E[\cP]} \ge it]) \\
        & = t (1 + \sum_{i=1}^{\infty} (2d)^{-2i}) \\
        & = O(t) \\
        & = O\left(\sqrt{\frac{\log (d)}{n}}\right) \\
    \end{align*}
\end{proof}
}

%\noindent \textbf{Analyzing Expected Error via an Encoding-Decoding Process.}
\paragraph{Analyzing expected error via an encoding-decoding process.}
Given randomized algorithms $\mcal{Q}_i$'s and aggregator function~$\mcal{A}$,
the lower bound framework in~\cite{bassily2015local}
consider the following encoding-decoding process.
Denote $\eta := \min \{\Theta(\frac{1}{\eps}\sqrt{\frac{\log |\mcal{S}|}{n}}), 1\}$.

\begin{enumerate}

\item Sample $V$ uniformly at random from $\mcal{S}$.

\item Each client~$i$ receives the same~$V$ from the previous step,
and performs the following actions independently.

\begin{itemize}

\item Sample $v_i$ from $\mcal{P}^{(\eta)}_V$, where for $v \in \mcal{S}$,
the distribution $\mcal{P}^{(\eta)}_v$ is defined as in \eqref{eq:degrading}.

%with probability~$\eta$, pick $v$;
%with probability~$1 - \eta$, pick 
%an element uniformly at random from $\mcal{S}$.

\item Apply local LDP mechanism $\mcal{Q}_i(\cdot)$ to obtain $\bz_i := \mcal{Q}_i(v_i)$.
\end{itemize}

\item Using the aggregator function $\mcal{A}(\cdot)$,
compute $Y = \frac{1}{\eta} ( \mcal{A}(\bz_1, \ldots, \bz_n) - (1- \eta)
\cdot \frac{1}{n} \sum_{v \in \mcal{S}} v)$.

Round $Y$ to $\widehat{V} \in \{0,1\}^d$,
i.e., for each $j \in [d]$,
$\widehat{V}_j := 1$ if $Y_j \geq \frac{1}{2}$, and 0 otherwise.

\item Define the event $\mathsf{error}$ as $V \neq \widehat{V}$.

\end{enumerate}

The crux of the proof depends on the following bounds on $\Pr[\mathsf{error}]$:
\begin{itemize}
\item A lower bound by
Fano's Inequality: 

\begin{equation} \label{eq:Fano}
\Pr[\mathsf{error}] \geq 1 - \frac{I(V; \bz_1, \ldots, \bz_n) + 1}{\log |\mcal{S}|}.
\end{equation}

Since conditioning on $V$, the $\bz_i$'s are independent,
we have:
$I(V; \bz_1, \ldots, \bz_n) = \sum_{i \in [n]} I(V; \bz_i)$. 

%$\leq \sum_{i \in [n]} I(V_i; Z_i)$,
%where the last inequality comes
%from the Markov property that given $V_i$, $V$ and $Z_i$ are independent.

An upper bound on
$I(V; \bz_i) = I(V; \mcal{Q}_i(v_i))$
using the differential privacy of $\mcal{Q}_i$ will be given in Lemmas~\ref{lem:mutual} and~\ref{lemma:mutual2}.

\item An upper bound on $\Pr[\mathsf{error}]$ is given in
Lemma~\ref{lemma:decoding_error}.
\end{itemize}

\begin{lemma}[Low Decoding Error]
\label{lemma:decoding_error}
Suppose for all $v \in \mcal{S}$,
$\cE(\cA; \cP^{(\eta)}_v) \leq \frac{\eta}{10}$.
Then, $\Pr[\mathsf{error}] \leq \frac{1}{5}$.
\end{lemma}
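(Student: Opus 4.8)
The plan is to show that accurate mean estimation forces the rounded vector $\widehat V$ to coincide with $V$ with high probability, via a direct coordinatewise comparison of $Y$ against $V$ followed by Markov's inequality.

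First I would identify the mean of the degrading channel. Since $v_i = V$ with probability $\eta$ and $v_i = U$ (uniform on $\mcal{S}$) with probability $1-\eta$, we have $\E[\cP^{(\eta)}_V] = \eta V + (1-\eta)\E[U]$, where $\E[U] = \frac{1}{|\mcal{S}|}\sum_{v\in\mcal{S}} v$ is exactly the debiasing vector subtracted in Step~3. Consequently
\[
Y - V = \frac{1}{\eta}\Big(\cA(\bz_1,\dots,\bz_n) - (1-\eta)\E[U]\Big) - V = \frac{1}{\eta}\Big(\cA(\bz_1,\dots,\bz_n) - \E[\cP^{(\eta)}_V]\Big),
\]
so that $\linf{Y-V} = \frac{1}{\eta}\,\linf{\cA(\bz_1,\dots,\bz_n) - \E[\cP^{(\eta)}_V]}$.

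Next I would connect the rounding step to this $L_\infty$ distance. Because each $V_j \in \{0,1\}$ and $\widehat V_j$ is obtained by thresholding $Y_j$ at $\tfrac12$, a coordinate can be misdecoded ($\widehat V_j \neq V_j$) only if $|Y_j - V_j| \geq \tfrac12$. Hence the event $\mathsf{error} = \{V \neq \widehat V\}$ implies $\linf{Y-V} \geq \tfrac12$, which by the identity above is equivalent to $\linf{\cA(\bz_1,\dots,\bz_n) - \E[\cP^{(\eta)}_V]} \geq \tfrac{\eta}{2}$. Then, conditioning on $V = v$ and applying Markov's inequality to the nonnegative error variable, for each fixed $v \in \mcal{S}$,
\[
\Pr[\mathsf{error} \mid V = v] \leq \Pr\Big[\linf{\cA(\bz_1,\dots,\bz_n) - \E[\cP^{(\eta)}_v]} \geq \tfrac{\eta}{2} \;\Big|\; V=v\Big] \leq \frac{\cE(\cA; \cP^{(\eta)}_v)}{\eta/2} \leq \frac{\eta/10}{\eta/2} = \frac{1}{5},
\]
where the last inequality invokes the hypothesis $\cE(\cA;\cP^{(\eta)}_v) \le \eta/10$. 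Averaging over the uniform choice of $V$ preserves the bound, giving $\Pr[\mathsf{error}] \le \tfrac15$.

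There is no deep obstacle here: the lemma is essentially Markov-inequality bookkeeping. The single point requiring care is that the constant subtracted in Step~3 must match the uniform-component mean $(1-\eta)\E[U]$ exactly, since this is what makes $Y-V$ collapse to a clean $\frac{1}{\eta}$-scaling of the estimation error with no residual bias. I would therefore verify that the debiasing term is indeed $(1-\eta)\E[U]$ (each coordinate being $(1-\eta)k/d$ by symmetry of $\mcal{S}$), rather than any differently normalized sum, before concluding the argument.
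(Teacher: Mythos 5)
Your proof is correct and follows essentially the same route as the paper's: both reduce the error event to $\linf{Y-V}\ge\tfrac12$, use the identity $Y-V=\frac{1}{\eta}\bigl(\cA(\bz_1,\dots,\bz_n)-\E[\cP^{(\eta)}_V]\bigr)$, and finish with Markov's inequality (you apply Markov conditionally on each $v$ and average, the paper applies it to the unconditional expectation --- a cosmetic difference). Your closing remark is also well taken: the debiasing term in Step~3 should indeed be normalized by $|\mcal{S}|$, i.e.\ equal $(1-\eta)\cdot\frac{1}{|\mcal{S}|}\sum_{v\in\mcal{S}}v$, and the paper's ``$\frac{1}{n}$'' there is a typo.
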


\begin{proof}
Observe that the event $\mathsf{error}$
implies that for at least one $j \in [d]$,
the difference between the $j$-th coordinates of $Y$ and $V$ is at least $\frac{1}{2}$,
i.e., $\|Y - V \|_\infty \geq \frac{1}{2}$.

Hence, by Markov's inequality,
the probability of this event is at most
$2 \cdot \E[\|Y - V \|_\infty ]$.
Finally, as shown in~\cite{bassily2015local}, observe that:
$\E[\|Y - V \|_\infty ] = \frac{1}{\eta} \cdot \E_V[\cE(\cA; \cP^{(\eta)}_V)]
\leq \frac{1}{10}$, which gives the result.
\end{proof}

%\noindent \textbf{Bounding Mutual Information via Differential Privacy.}
\paragraph{Bounding mutual information via differential privacy.}
The following lemmas from~\cite{bassily2015local}
give upper bounds on the mutual information between
the input and the output of differentially private algorithms.

\begin{lemma}
    \label{lem:mutual}
		Suppose $0 < \epsilon = O(1)$ and $0 < \delta < \epsilon$.
    Let $V$ be a random variable that is uniformly distributed on a discrete set $\cS$. Suppose the output of the (randomized) algorithm $\Q: \cS \to \cZ$ is $(\eps,\delta)$-differentially private where
		any two inputs in $\cS$ are considered as neighboring. 
		%Let $Z$ denote $\Q(V)$. 
		Then, the mutual information between the input variable and the output report is bounded:  
    \begin{align*}
        I(V; \mcal{Q}(V))=O(\eps^2 + \frac{\delta}{\eps}\log|S| + \frac{\delta}{\eps}\log(\eps/\delta))
    \end{align*}
\end{lemma}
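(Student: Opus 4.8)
The plan is to bound the mutual information through the identity
$I(V;\mathcal{Q}(V)) = \mathbb{E}_V\!\left[ D_{\mathrm{KL}}(p_{Z\mid V}\,\|\,p_Z)\right]$, where $Z=\mathcal{Q}(V)$, $p_{Z\mid v}$ is the law of $\mathcal{Q}(v)$, and $p_Z = \frac{1}{|\mathcal{S}|}\sum_{v'\in\mathcal{S}} p_{Z\mid v'}$ is the marginal. Two facts drive the reduction. First, because every pair of inputs in $\mathcal{S}$ is neighboring, each $p_{Z\mid v}$ is $(\epsilon,\delta)$-close to every $p_{Z\mid v'}$, and averaging the defining inequality over $v'$ shows $p_{Z\mid v}$ is mutually $(\epsilon,\delta)$-close to the mixture $p_Z$. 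Second, and crucially, since $p_Z$ is the uniform mixture it contains the term $\tfrac{1}{|\mathcal{S}|}p_{Z\mid v}$, so $p_Z \ge \tfrac{1}{|\mathcal{S}|}\,p_{Z\mid v}$ pointwise; hence the log-likelihood ratio is capped everywhere by $\log\frac{p_{Z\mid v}(z)}{p_Z(z)} \le \log|\mathcal{S}|$. This cap is what makes the otherwise-unbounded divergence finite and is the source of the $\log|\mathcal{S}|$ term. Writing $p:=p_{Z\mid v}$ and $q:=p_Z$, it therefore suffices to bound $D_{\mathrm{KL}}(p\|q)$ uniformly in $v$ and then take $\mathbb{E}_V$.

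I would split the analysis at log-ratio level $\epsilon$, using $D_{\mathrm{KL}}(p\|q)\le \int p\,(\log\tfrac pq)_+ = \int_0^{\log|\mathcal{S}|} p(B_t)\,dt$ where $B_t = \{z:\log\tfrac{p(z)}{q(z)}>t\}$ (the negative part of the integrand only decreases the divergence). On the good region $\{\log\tfrac pq\le\epsilon\}$ the ratio is bounded by $e^\epsilon$, so I would bound the contribution by the symmetrized estimate $\int(p-q)\log\tfrac pq \le \epsilon\cdot 2\,\mathrm{TV}(p,q)$ together with $\mathrm{TV}(p,q)=O(\epsilon+\delta)$, yielding the $O(\epsilon^2)$ term.

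For the bad region I would use the layer-cake tail. Applying $(\epsilon,\delta)$-closeness to the set $B_t$ gives $p(B_t)\le e^\epsilon q(B_t)+\delta$, while on $B_t$ we have $q(B_t)\le e^{-t}p(B_t)$; combining these yields $p(B_t)\le \frac{\delta}{1-e^{\epsilon-t}}$ for $t>\epsilon$. Integrating $\int_\epsilon^{\log|\mathcal{S}|} p(B_t)\,dt$ I would handle the far tail $t\gtrsim 2\epsilon$, where $1-e^{\epsilon-t}=\Omega(\epsilon)$ forces $p(B_t)=O(\delta/\epsilon)$, contributing $O(\frac{\delta}{\epsilon}\log|\mathcal{S}|)$; and the near-threshold sliver $t\in[\epsilon,2\epsilon]$, where the bound $\frac{\delta}{1-e^{\epsilon-t}}$ diverges logarithmically as $t\downarrow\epsilon$, so I would revert to the trivial $p(B_t)\le 1$ on an interval of width $\Theta(\delta)$ and use the $\delta/(t-\epsilon)$-type bound beyond it, optimizing the cutoff to produce $O(\delta\log(\epsilon/\delta))\le O(\frac{\delta}{\epsilon}\log(\epsilon/\delta))$ since $\epsilon=O(1)$. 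Summing the three regional bounds and averaging over $V$ gives $O\!\big(\epsilon^2 + \frac{\delta}{\epsilon}\log|\mathcal{S}| + \frac{\delta}{\epsilon}\log(\epsilon/\delta)\big)$.

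The step I expect to be the main obstacle is precisely the bad-region accounting. Because KL divergence is not continuous in the hockey-stick metric, it cannot be controlled by total variation alone, and the unbounded log-likelihood ratio must be tamed by combining the global cap $\log\tfrac pq\le\log|\mathcal{S}|$ with the layer-wise mass bound $p(B_t)\le\frac{\delta}{1-e^{\epsilon-t}}$. Getting the sharp $\frac{\delta}{\epsilon}$ scaling and the $\log(\epsilon/\delta)$ correction, rather than a cruder $\delta\log|\mathcal{S}|$, hinges on carefully balancing the trivial $p(B_t)\le 1$ bound against the geometric tail near the threshold; for the precise constants I would follow the corresponding estimate of Bassily and Smith~\cite{bassily2015local} and the companion Lemma~\ref{lemma:mutual2}.
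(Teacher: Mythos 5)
You should first know that the paper does not actually prove this lemma: in Appendix~C it is imported verbatim from Bassily and Smith~\cite{bassily2015local} (``The following lemmas from~\cite{bassily2015local} give upper bounds\ldots''), so your attempt can only be compared with the cited source's argument. Measured against that, your architecture is the right one, and the parts you flagged as the crux are correct: writing $I(V;\Q(V))=\mathbf{E}_V[D_{\mathrm{KL}}(p_{Z\mid V}\,\|\,p_Z)]$, the pointwise cap $\log(p_{Z\mid v}/p_Z)\le\log|\cS|$ from the uniform mixture, the level-set bound $p(B_t)\le \delta/(1-e^{\eps-t})$ for $t>\eps$ obtained by combining $p(B_t)\le e^{\eps}q(B_t)+\delta$ with $q(B_t)\le e^{-t}p(B_t)$, the observation that $p(B_{2\eps})=O(\delta/\eps)$ is the source of the $\frac{\delta}{\eps}$ scaling, integration of the far tail up to the cap giving $O(\frac{\delta}{\eps}\log|\cS|)$, and the optimized trivial-versus-geometric split on the sliver $t\in[\eps,2\eps]$ giving $O(\delta\log(\eps/\delta))$. (One small misstatement: $\delta\log|\cS|$ would not be ``cruder'' than $\frac{\delta}{\eps}\log|\cS|$ --- since $\eps=O(1)$ it is the \emph{smaller} of the two.)

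The genuine gap is the $O(\eps^2)$ good-region step, which fails as written for two compounding reasons. First, once you pass to $D_{\mathrm{KL}}(p\|q)\le\int p\,(\log\tfrac pq)_+$ you have discarded exactly the cancellation that produces $\eps^2$: even under pure $\eps$-DP the positive part alone is $\Theta(\eps)$ (take $p=(\tfrac{1+\eps}{2},\tfrac{1-\eps}{2})$, $q=(\tfrac12,\tfrac12)$, where $\int p(\log\tfrac pq)_+=\Theta(\eps)$ while $D_{\mathrm{KL}}(p\|q)=\Theta(\eps^2)$), so no layer-cake accounting over $[0,\eps]$ can recover $\eps^2$ afterwards. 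Second, the symmetrized patch $\int(p-q)\log\tfrac pq\le \eps\cdot 2\,\mathrm{TV}(p,q)$ is invalid on your good region $\{\log\tfrac pq\le\eps\}$, because that region is one-sided: it contains points with $\log\tfrac pq<-\eps$ (even $p=0<q$, which $(\eps,\delta)$-DP permits), where the symmetrized integrand $(p-q)\log\tfrac pq$ is positive and unbounded; globally $\int(p-q)\log\tfrac pq=D_{\mathrm{KL}}(p\|q)+D_{\mathrm{KL}}(q\|p)$, and the second term can be infinite under approximate DP. The repair is to make the good set two-sided, $G=\{|\log\tfrac pq|\le 2\eps\}$: the contribution of $\{\log\tfrac pq<-2\eps\}$ to $D_{\mathrm{KL}}(p\|q)$ is negative and can be dropped; on $G$, keep the cancellation intact via, e.g., $\mathbf{E}_p[\ell\,\I[G]]=\mathbf{E}_q[e^{\ell}\ell\,\I[G]]\le \mathbf{E}_q[\ell\,\I[G]]+O(\eps^2)$ (where $\ell=\log\tfrac pq$) together with $\mathbf{E}_q[\ell\,\I[G]]\le\log\mathbf{E}_q[e^{\ell\,\I[G]}]\le p(G)-q(G)\le p(G^c)+q(G^c)=O(\delta/\eps)$, the last step using that averaging the DP inequality over $v'$ makes $p$ and the mixture $q$ close in \emph{both} directions, so both tails $\{\ell>2\eps\}$ and $\{\ell<-2\eps\}$ have mass $O(\delta/\eps)$. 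This yields $O(\eps^2+\delta/\eps)$ for the good region, which combined with your (correct) bad-region terms gives the stated bound.
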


\begin{lemma}
\label{lemma:mutual2}
    Suppose $0 <\eps=O(1)$ and $0 < \delta < 1$
		and $\mcal{Q}: \mcal{S} \rightarrow \cZ$
		is $(\eps,\delta)$-differentially private.
			Define the algorithm $\mcal{Q}^{(\eta)}:
		\mcal{S} \rightarrow \cZ$ as follows:
		on input $v \in \mcal{S}$, sample $V$ from
		$\mcal{P}^{(\eta)}_v$ (which is
		defined in the encoding-decoding procedure) and return $\mcal{Q}(V)$.
		Then, the output of $\mcal{Q}^{(\eta)}$ is 
		$(O(\eta \eps), O(\eta \delta))$-differentially private.
\end{lemma}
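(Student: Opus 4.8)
The plan is to exploit the fact that the ``degrading channel'' $\mcal{P}^{(\eta)}_v$ of \eqref{eq:degrading} turns $\mcal{Q}^{(\eta)}$ into a \emph{mixture} of an input-dependent part and a large input-\emph{independent} part, so that privacy amplification comes entirely from the mixing weight $\eta$. Write $Q_v$ for the output law of $\mcal{Q}(v)$ and $P_v$ for the output law of the composed mechanism $\mcal{Q}^{(\eta)}(v)$. The first step is to observe that sampling $V\sim\mcal{P}^{(\eta)}_v$ and returning $\mcal{Q}(V)$ yields, for every measurable $S$,
\[
P_v(S) \;=\; \eta\,Q_v(S) + (1-\eta)\,\overline{Q}(S),
\qquad
\overline{Q} := \tfrac{1}{|\mcal{S}|}\sum_{u\in\mcal{S}} Q_u ,
\]
where $\overline{Q}$ is exactly the law of $\mcal{Q}(U)$ for $U$ uniform on $\mcal{S}$ and, crucially, does \emph{not} depend on $v$.

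The immediate consequence is that the two composed distributions differ only through their input-dependent components, $P_v(S)-P_{v'}(S)=\eta\,(Q_v(S)-Q_{v'}(S))$. Since any two inputs are neighboring for $\mcal{Q}$, the $(\eps,\delta)$-DP hypothesis gives $Q_v(S)\le e^{\eps}Q_{v'}(S)+\delta$, hence $Q_v(S)-Q_{v'}(S)\le (e^{\eps}-1)Q_{v'}(S)+\delta$. The main obstacle is that this is only \emph{additive} control, whereas the conclusion needs a multiplicative $e^{O(\eta\eps)}$ factor relative to $P_{v'}(S)$: one cannot simply discard the common term $(1-\eta)\overline{Q}(S)$, because $Q_{v'}(S)$ may be of order $1$ while $\eta$ is tiny, and dropping it forces the spurious factor $e^{\eps}$ rather than $e^{O(\eta\eps)}$.

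The resolution is to use that same common term as a \emph{lower bound} on $P_{v'}(S)$. Because every $Q_u$ is $(\eps,\delta)$-close to $Q_{v'}$, each satisfies $Q_u(S)\ge e^{-\eps}(Q_{v'}(S)-\delta)$; averaging over $u$ gives $\overline{Q}(S)\ge e^{-\eps}(Q_{v'}(S)-\delta)$, and therefore $P_{v'}(S)\ge e^{-\eps}Q_{v'}(S)-e^{-\eps}\delta$, i.e. $Q_{v'}(S)\le e^{\eps}P_{v'}(S)+\delta$. This is the inequality that converts the additive comparison into one against $P_{v'}(S)$ itself.

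Finally I would assemble the pieces: substituting the last inequality into $P_v(S)=P_{v'}(S)+\eta(Q_v(S)-Q_{v'}(S))$ gives
\[
P_v(S)\le \big(1+\eta(e^{\eps}-1)e^{\eps}\big)\,P_{v'}(S)+\eta\big((e^{\eps}-1)+1\big)\,\delta .
\]
Since $\eps=O(1)$ makes $(e^{\eps}-1)e^{\eps}=O(\eps)$ and $e^{\eps}-1=O(1)$, this reads $P_v(S)\le(1+O(\eta\eps))P_{v'}(S)+O(\eta\delta)\le e^{O(\eta\eps)}P_{v'}(S)+O(\eta\delta)$, which is exactly $(O(\eta\eps),O(\eta\delta))$-differential privacy of $\mcal{Q}^{(\eta)}$; the bound with $v,v'$ swapped follows by symmetry. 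I expect the only delicate points to be the bookkeeping of the $\delta$ terms and, conceptually, recognizing that the common mixture component $(1-\eta)\overline{Q}$ must be \emph{retained} on both sides (as a denominator lower bound) rather than bounded away.
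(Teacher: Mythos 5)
Your proof is correct. Note first that the paper itself does not prove this lemma: it is imported verbatim from Bassily and Smith~\cite{bassily2015local}, so there is no in-paper argument to compare against; what you have written is a self-contained proof of the cited fact. Your argument is the standard privacy-amplification-by-mixing proof: the mixture decomposition $P_v = \eta Q_v + (1-\eta)\overline{Q}$ with $\overline{Q}$ input-independent, the identity $P_v(S)-P_{v'}(S)=\eta\,(Q_v(S)-Q_{v'}(S))$, and the conversion of $Q_{v'}(S)$ into a bound against $P_{v'}(S)$ itself. Each step checks out: the lower bound $\overline{Q}(S)\ge e^{-\eps}(Q_{v'}(S)-\delta)$ follows from applying $(\eps,\delta)$-DP between $v'$ and every $u\in\mcal{S}$ and averaging, and combining it with $\eta Q_{v'}(S)\ge \eta e^{-\eps}(Q_{v'}(S)-\delta)$ gives $P_{v'}(S)\ge e^{-\eps}Q_{v'}(S)-e^{-\eps}\delta$ for \emph{all} $\eta\in(0,1]$; the final bookkeeping yields $e^{\eps'}$ with $\eps'=O(\eta\eps)$ (using $\eps=O(1)$) and additive term $\eta e^{\eps}\delta=O(\eta\delta)$, and symmetry in $v,v'$ closes the argument. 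A minor stylistic remark: your identification of the ``delicate point'' is apt — the common component $(1-\eta)\overline{Q}$ must serve as a denominator lower bound rather than be discarded — and your way of realizing this (lower-bounding both mixture components by $e^{-\eps}(Q_{v'}(S)-\delta)$) is cleaner than variants that bound $\overline{Q}(S)\le P_{v'}(S)/(1-\eta)$, since it avoids any restriction such as $\eta\le 1/2$.
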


%\noindent \textbf{Wrapping Everything Up: Proof of Theorem~\ref{th:main_lb}.}
\paragraph{Finalizing the proof of Theorem~\ref{th:main_lb}.}
For the sake of contradiction,
we assume that for any distribution
$\mcal{P}$ on $\mcal{S}$,
$\cE(\cA; \cP) \leq \frac{\eta}{10}$.
Then, Lemma~\ref{lemma:decoding_error}
implies that decoding error happens
with $\Pr[\mathsf{error}] \leq \frac{1}{5}$.

In view of Fano's Inequality~(\ref{eq:Fano}),
a contradiction can be achieved if
$\frac{I(V; \bz_1, \ldots, \bz_n) + 1}{\log |\mcal{S}| } \leq \frac{1}{2}$.

By Lemmas~\ref{lem:mutual} and~\ref{lemma:mutual2},
for each~$i$, $I(V; \bz_i) \leq O(\eta^2\eps^2 + \frac{\delta}{\eps}\log|S| + \frac{\delta}{\eps}\log(\eps/\delta))$.

By choosing sufficiently small $\eta = \min \{ \Theta(\frac{1}{\epsilon} \sqrt{\frac{\log |S|}{n}}, 1 \}$ and $\delta = o(\frac{\epsilon}{n \log n})$,
it follows that:

$\frac{I(V; \bz_1, \ldots, \bz_n) + 1}{\log |\mcal{S}| }
= \frac{\sum_{i \in [n]} I(V;\bz_i) + 1}{\log |\mcal{S}|}
< \frac{1}{2}$,
where the first equality holds
because conditioning on $V$, the $\bz_i$'s are independent.

Hence, we have obtained the desired contradiction
that completes the proof of Theorem~\ref{th:main_lb}.

\end{document}